\theoremstyle{plain} 
\theoremstyle{definition}
\newtheorem*{prop*}{Proposition}
\theoremstyle{definition}
\newtheorem{defn}[thm]{Definition}
\definecolor{plot1}{RGB}{70, 116, 193}
\definecolor{plot2}{RGB}{235, 125, 60}
\definecolor{plot3}{RGB}{165, 165, 165}
\definecolor{plot4}{RGB}{252, 190, 45}
\definecolor{plot5}{RGB}{94, 156, 210}
\definecolor{plot6}{RGB}{113, 171, 77}
\definecolor{plot7}{RGB}{156, 72, 25}
\definecolor{plot8}{RGB}{40, 69, 117}
\definecolor{gp2grey}{RGB}{150, 150, 150}
\qed\end{trivlist}}
\qed\end{trivlist}}
\newcommand{\ttt}{\texttt}
\newcommand{\mtt}{\mathtt}
\newcommand{\mrm}{\mathrm}
\renewcommand{\L}{\mathcal{L}}
\newcommand{\ul}[1]{\underline{#1}}
\newcommand{\tuple}[1]{\langle#1\rangle}
\renewcommand{\bar}[1]{\overline{#1}}
\newcommand{\dder}{\Rightarrow}
\newcommand{\gp}{\texorpdfstring{GP\,2}{}}
\begin{document}


\title[Rule-Based Graph Programs with Imperative Time Complexity]{Rule-Based Graph Programs Matching the Time Complexity of Imperative Algorithms}

\author[Z.~Ismaili Alaoui]{Ziad Ismaili Alaoui}[a,b]
\author[D.~Plump]{Detlef Plump}[c]

\address{University of Liverpool, United Kingdom}	
\email{ziad.ismaili-alaoui@liverpool.ac.uk}  

\address{Most of this author’s work was done while he was affiliated with the University of York.}

\address{University of York, United Kingdom}	
\email{detlef.plump@york.ac.uk}  





\begin{abstract}
We report on recent advances in rule-based graph programming, which allow us to match the time complexity of some fundamental imperative graph algorithms. In general, achieving the time complexity of graph algorithms implemented in conventional languages using a rule-based graph-transformation language is challenging due to the cost of graph matching. Previous work demonstrated that with \emph{rooted} rules, certain algorithms can be implemented in the graph programming language \gp{} such that their runtime matches the time complexity of imperative implementations. However, this required input graphs to have a bounded node degree and (for some algorithms) to be connected. In this paper, we overcome these limitations by enhancing the graph data structure generated by the \gp{} compiler and exploiting the new structure in programs. We present three case studies: the first program checks whether input graphs are connected, the second program checks whether input graphs are acyclic, and the third program solves the single-source shortest-path problem for graphs with integer edge-weights. The first two programs run in linear time on (possibly disconnected) input graphs with arbitrary node degrees. The third program runs in time $\mathrm{O}(nm)$ on arbitrary input graphs, matching the time complexity of imperative implementations of the Bellman-Ford algorithm. For each program, we prove its correctness and time complexity, and provide runtime experiments on various graph classes.
\end{abstract}

\maketitle

\section{Introduction}
\label{s:introduction}
Designing and implementing languages for rule-based graph rewriting, such as GReAT \cite{Agrawal-Karsai-Neema-Shi-Vizhanyo06a}, GROOVE \cite{Ghamarian-deMol-Rensink-Zambon-Zimakova12a}, GrGen.Net \cite{Jakumeit-Buchwald-Kroll10a}, Henshin \cite{Struber-Born-Gill-Groner-Kehrer-Ohrndorf-Tichy17a}, and PORGY \cite{Fernandez-Kirchner-Pinaud19a}, poses significant performance challenges. Typically, programs written in these languages do not achieve the same asymptotic runtime efficiency as those written in conventional imperative languages such as C or Java. The primary obstacle is the cost of graph matching, where matching the left-hand graph $L$ of a rule within a host graph $G$ generally requires time $|G|^{|L|}$, with $|X|$ denoting the size of graph $X$. (Since $L$ is fixed, this is a polynomial.) As a consequence, standard imperative graph algorithms running in linear time (see, for example, \cite{Cormen-Leiserson-Rivest-Stein22a,Skiena20a}) may exhibit non-linear, polynomial runtimes when recast as rule-based graph programs.

To address this issue, the graph programming language \gp{} \cite{Plump12a} supports \emph{rooted} graph transformation rules, initially proposed by D\"orr \cite{dorr1995efficient}. This approach involves designating certain nodes as \emph{roots} and matching them with roots in the host graphs. Consequently, only the neighbourhoods of host graph roots need to be searched for matches, which can often be done in constant time under mild conditions. The \gp{} compiler \cite{Bak15a} maintains a list of pointers to roots in the host graph, facilitating constant-time access to roots if their number remains bounded throughout the program's execution. In \cite{Bak-Plump12a},\, \emph{fast} rules were identified as a class of rooted rules that can be applied in constant time, provided host graphs contain a bounded number of roots and have a bounded node degree.

The first linear-time graph problem implemented by a GP\,2 program was 2-colouring. In \cite{Bak-Plump12a,Bak15a}, it is shown that this program colours connected graphs of bounded node degree in linear time. Since then, the \gp{} compiler has received some major improvements, particularly related to the runtime graph data structure used by the compiled programs \cite{Campbell-Romo-Plump20d}. These improvements made a linear-time worst-case performance possible for a wider class of programs on input graph classes of bounded degree, in some cases even on those of unbounded degree. See \cite{campbell2022fast} for an overview.

Prior to the improvements presented in this paper, besides the issue of input graph classes with unbounded node degree, some programs, such as the aforementioned 2-colouring program, needed a non-linear runtime on disconnected input graphs. In this paper, we present two updates to the \gp{} compiler which allow lifting the constraints that input graphs (1) must be connected and (2) have a bounded node degree. In short, the solution is to improve the graph data structure generated by the compiler. Nodes are now stored in separate linked lists based on their \emph{marks} (red, green, blue, grey or unmarked), and each node comes with a two-dimensional array of linked lists storing all incident edges based on their marks (red, green, blue, dashed or unmarked) and orientation (incoming, outgoing or looping). This enables the matching algorithm to find in constant time a node with a specific mark or an edge with a specific mark and orientation. For instance, if a red node is needed, a single access to the list of red nodes will either locate such a node or confirm its absence. Similarly, if an outgoing green edge is required, a single access to the corresponding linked list will either find such an edge or determine that none exists.

In addition to the new graph data structure, a programming technique is needed to take advantage of the improved storage. In our first case study, we demonstrate how the new approach allows checking in linear time whether arbitrary input graphs are connected. In the second case study, we provide a program that recognises acyclic graphs in linear time. Again, input graphs may be disconnected and have arbitrary node degrees. Finally, our third case study shows how to implement the Bellman-Ford single-source shortest-path algorithm in \gp{} such that it runs in time $\mathrm{O}(nm)$ on arbitrary input graphs (where $n$ and $m$ are the numbers of nodes and edges, respectively), thus matching the complexity of imperative versions of the algorithm. Each case study comes with proofs of the program's correctness and time complexity. In addition, we provide runtime experiments on various graph classes.

Building upon the improvements to the compiler presented in this paper, this work further expands previous studies on fast rule-based graph programs \cite{campbell2022fast,ismaili2024linear,Ismaili-Plump25a} with the following contributions: (1) We present a GP\,2 program checking connectedness in linear time on arbitrary input graphs, eliminating the constraint of \cite{campbell2022fast} that input graphs must have a bounded node degree. We prove that the program is correct and always runs in linear time. In addition, we present empirical runtime measurements on six classes of input graphs. (2) We prove the correctness and time complexity of the linear-time cycle checking program from \cite{Ismaili-Plump25a}. Again, the proofs are complemented by runtime measurements on various classes of input graphs. (3) For the first time, we present a rule-based implementation of Bellman-Ford's single-source shortest-paths algorithm such that the imperative time complexity of $\mathrm{O}(nm)$ is matched. We give full proofs of the program's correctness and time complexity, and also provide runtime experiments confirming the complexity.

The rest of this paper is organised as follows. In Section~\ref{s:preliminaries}, we recall the basic features of the graph programming language GP\,2. In the next section, we briefly describe the GP\,2-to-C compiler and state assumptions on the time complexity of elementary operations of the generated code. In Section ~\ref{s:problem1}, we explain the problem caused by classes of input graphs with an unbounded node degree, while Section~\ref{s:enhancement} introduces the compiler improvements. Section~\ref{s:connectedness}, Section~\ref{s:acyclicity} and Section~\ref{s:bf} present our three case studies. Finally, Section~\ref{s:conclusion} concludes the paper and mentions some topics for future work. In addition, Appendix~\ref{s:appendix} lists the abstract syntax of GP\,2 labels and application conditions.

\section{The Graph Programming Language GP\,2}
\label{s:preliminaries}

This section (largely adapted from \cite{campbell2022fast}) introduces GP\,2, an experimental programming language based on graph-transformation rules. The language was defined in \cite{Plump12a}, an up-to-date version of its concrete syntax can be found in \cite{Bak15a}. 

\subsection{Graphs and Rules}
\label{subsec:graphs_rules}

GP\,2 programs transform input graphs into output graphs, as depicted in Figure \ref{fig:computational-model}. Graphs are directed and may contain parallel edges and loops. Both nodes and edges are labelled.

\begin{figure}[H]
\centering
\begin{tikzpicture}
\node at (6.0,0.75) {GP\,2};
\node at (6.0,0.25) {Program};
\draw [->] (5.0,0.5) -- (7,0.5);
\node at (8,1.7) {Input Graph};
\draw [->] (8,1.5) -- (8,1);
\draw [fill=black!10] (7,0) rectangle node{Execute} (9,1);
\draw [->] (9,0.5) -- (11.0,0.5);
\node at (10.0,0.75) {Output};
\node at (10.0,0.25) {Graph};
\end{tikzpicture}
\caption{GP\,2 computational model.}
\label{fig:computational-model}
\end{figure}
\begin{defn}[Graph]
\label{def:graph}
Let $\L$ be a set of labels. A \emph{graph}\/ over $\L$ is a system $G = \tuple{V_G,E_G,s_G,t_G,l_G,m_G}$, where $V_G$ and $E_G$ are finite sets of nodes (or vertices) and edges, $s_G\colon E_G \to V_G$\/ and $t_G\colon E_G \to V_G$\/ are source and target functions for edges, $l_G\colon V_G \to \L$ is the partial node labelling function, and $m_G\colon E_G \to \L$ is the (total) edge labelling function.
\end{defn}
A graph $G$ is \emph{totally labelled}\/ if $l_G$ is a total function. Unlabelled nodes only occur in the interfaces of rules and are used to relabel nodes (see Definition \ref{def:rule} below). There is no need to relabel edges as they can always be deleted and reinserted with different labels.
    
We distinguish two types of graphs, \emph{host graphs}\/ and \emph{rule graphs}. The former are the graphs on which GP\,2 programs are executed while the latter are the graphs occurring in rules. Figure \ref{fig:host-label-grammar} and Figure \ref{fig:rule-label-grammar} in Appendix \ref{s:appendix} give grammars defining the abstract syntax of labels in host graphs and rule graphs, respectively.

Both nodes and edges are labelled with list expressions denoting integers and character strings. This includes the empty list as a special case; by convention, we draw such items in pictures without labels. 
Labels are of type \texttt{list}, \texttt{atom}, \texttt{int}, \texttt{string} or \texttt{char}, where \texttt{atom} is the union of \texttt{int} and \texttt{string}, and \texttt{list} is the type of a list of atoms. Lists of length one and strings of length one are equated with their entry. As a consequence, every expression of category Atom, Integer, String or Char as defined by the grammar in Figure \ref{fig:rule-label-grammar} can be considered as a list expression. The label subtype hierarchy of GP\,2 is shown in Figure \ref{fig:subtypes}. 

Labels in host graphs as defined by the grammar in Figure \ref{fig:host-label-grammar} form a subset of the labels allowed in rule graphs. They represent constant values in that they do not contain variables or unevaluated expressions.

Besides carrying expressions, nodes and edges can be \emph{marked} red, green or blue. In addition, nodes can be marked grey and edges can be dashed. When we draw graphs, marks are represented graphically. The wildcard mark \texttt{any}, represented by the magenta colour, can only be used in rules and matches arbitrary marks in host graphs. Marks are convenient, among other things, to record visited items during a graph traversal and to encode auxiliary structures in graphs. The programs in the case studies of this paper use marks extensively.

\begin{figure}[!t]
\centering
\begin{tikzpicture}
\node (li) at (0,0) {\ttt{list}};
\node (at) at (0,-1) {\ttt{atom}};
\node (in) at (-1,-2) {\ttt{int}};
\node (st) at (1,-2) {\ttt{string}};
\node (ch) at (1,-3) {\ttt{char}};

\path (at) -- (li) node[pos=0.5,sloped] {$\subseteq$};
\path (in) -- (at) node[pos=0.5,sloped] {$\subseteq$};
\path (st) -- (at) node[pos=0.5,sloped] {$\supseteq$};
\path (ch) -- (st) node[pos=0.5,sloped] {$\subseteq$};
\end{tikzpicture}
\caption{GP\,2 label subtype hierarchy.}\label{fig:subtypes}
\end{figure}

The principal programming construct in GP\,2 are conditional graph transformation rules labelled with expressions. 
The formal definition of rules makes use of the following subset of list expressions, which specifies what expressions may occur in the left-hand graph of a rule.

\begin{defn}[Simple Expression]
\label{def:simple_expression}
 A list expression $e$ is \emph{simple}\/ if (1) $e$\/ contains no arithmetic, degree or length operators (with the possible exception of a unary minus preceding a sequence of digits), (2) $e$\/ contains at most one occurrence of a list variable, and (3) each occurrence of a string expression in $e$\/ contains at most one occurrence of a string variable.
\end{defn}

We call a label \emph{simple} if its list expression is simple. We can now define GP\,2's conditional graph transformation rules.

\begin{defn}[Rule]
\label{def:rule}
A \emph{rule} $\tuple{L \gets K \to R,\, c}$ consists of totally labelled rule graphs $L$\/ and $R$, a graph $K$\/ consisting of unlabelled nodes only, where $V_K \subseteq V_L$\/ and $V_K \subseteq V_R$, and a condition $c$ according to the grammar of Figure \ref{fig:condition-grammar}. Graph $K$\/ is called the \emph{interface}\/ of the rule. We require that all labels in $L$\/ are simple and that all variables occurring in $R$\/ also occur in $L$. In addition, all variables occurring in $c$\/ must occur in $L$.
\end{defn}

When a rule is represented graphically, the interface $K$\/ is given implicitly by the node identifiers that occur in $L$\/ and $R$\/ (which must coincide). We use small numbers attached to nodes as identifiers. For example, in the rule \texttt{link} of Figure \ref{fig:transitive-closure-program}, $K$\/ consists of three unlabelled nodes with identifiers \texttt{1}, \texttt{2} and \texttt{3}. In general, interface nodes are to be preserved, nodes in $L-K$\/ are to be deleted and nodes in $R-K$\/ are to be created. Node labels such as \texttt{x}, \texttt{y} and \texttt{z} are written inside nodes, whereas small integers below nodes are their identifiers. 
The constraint that all variables in $R$\/ must also occur in $L$\/ ensures that for a given match of $L$\/ in a host graph, applying the rule produces a unique graph (up to isomorphism). Similarly, the evaluation of the condition $c$ has a unique result if all its variables occur in $L$.

Next we introduce premorphisms which locate a rule's left-hand graph in a host graph.

\begin{defn}[Premorphism]
\label{def:premorphism}
 Given graphs $L$\/ and $G$, a \emph{premorphism} $g\colon L \to G$\/ consists of two functions $g_V\colon V_L \to V_G$ and $g_E\colon E_L \to E_G$\/ that preserve sources and targets, that is, $s_G \circ g_E = g_V \circ s_L$ and $t_G \circ g_E = g_V \circ t_L$. 
\end{defn}

We also need to introduce assignments, which instantiate variables in labels with host graph values of the same type.

\begin{defn}[Assignment]
\label{def:assignment}
An \emph{assignment}\/ is a family of mappings $$\alpha = (\alpha_X)_{X \in \{\mrm{L},\mrm{A},\mrm{I},\mrm{S},\mrm{C}\}}$$ where $\alpha_{\mrm{L}}\colon \mrm{LVar} \to \mrm{HostList}$, $\alpha_{\mrm{A}}\colon \mrm{AVar} \to \mrm{HostAtom}$, $\alpha_{\mrm{I}}\colon \mrm{IVar} \to \mrm{HostInteger}$, $\alpha_{\mrm{S}}\colon \mrm{SVar} \to \mrm{HostString}$,  and $\alpha_{\mrm{C}}\colon \mrm{CVar} \to \mrm{HostChar}$. We may omit the subscripts of these mappings as exactly one of them is applicable to a given variable.
\end{defn}

Given an expression $e \in \mrm{List}$, an assignment $\alpha$ and a premorphism $g\colon L \to G$\/ from a rule graph $L$\/ to a host graph $G$, the \emph{instance}\/ of $e$ with respect to $g$ and $\alpha$ is obtained by (1) replacing each variable $x$ with $\alpha(x)$, (2) replacing each node identifier $n$ with $g_V(n)$, and (3) evaluating the resulting expression according to the meaning of the operators in the grammar of Figure \ref{fig:rule-label-grammar}. The result is an expression in HostList which we denote by $e^{g,\alpha}$. 

For example, if $e = \mtt{x:1:x/2}$ and $\alpha(\mtt{x}) = 3$ then $e^{g,\alpha} = 3:1:1$ (note that '/' is integer division). If $e = \mtt{indeg(1)+outdeg(1)}$ then $e^{g,\alpha}$ is the sum of the indegree and outdegree of $g_V(\mtt{1})$ in $G$.

The evaluation of a condition $c$ with respect to $g$ and $\alpha$ proceeds analogously to the evaluation of list expressions but results in a value $c^{g,\alpha}$ which is either `true' or `false'. For example, if $c = \text{$\mtt{int(w)}$ \ttt{or} $\mtt{edge(1,2)}$}$ and $\alpha(\mtt{w}) = \mtt{1:2}$ then $c^{g,\alpha} = \mrm{true}$ if and only if there is an edge in $G$ from $g_V(\mtt{1})$ to $g_V(\mtt{2})$ (because $\mtt{int(w)}^{g,\alpha} = \mrm{false}$).

In the definitions below, given a node or edge $u$ in a graph $M$, we write $\mrm{mark}_M(u)$ for the mark of $u$ (if it exists). 

\begin{defn}[Compatibility]
\label{def:compatibility}
 A premorphism $g\colon L \to G$\/ and an assignment $\alpha$ are \emph{compatible}\/ if for all nodes and edges $u$ in $L$, (1) $l_L(u) = e \in \mrm{List}$ implies $l_G(g(u)) = e^{g,\alpha}$, (2) $l_L(u) = em$ with $e \in \text{List}$ and $m \in \mrm{HostMark}$ implies $l_G(g(u)) = e^{g,\alpha}m$, and (3) $l_L(u) = e\,\texttt{any}$ with $e \in \text{List}$ implies $l_G(g(u)) = e^{g,\alpha}m$ for some $m \in \mrm{HostMark}$.
 \end{defn}


\begin{defn}[Rule Application]
\label{def:rule_application}
A rule $r = \tuple{L \gets K \to R,\, c}$ is applied to a host graph $G$ as follows:
\begin{enumerate}
\item Find an injective premorphism $g\colon L \to G$\/ and an assignment $\alpha$ such that $g$ and $\alpha$ are compatible.
\item Check that $g$ satisfies the \emph{dangling condition}: no node in $g_V(V_L - V_K)$ is incident to an edge in $E_G - g_E(E_L)$.
\item Check that $c^{g,\alpha} = \mrm{true}$.
\item Construct a host graph $H$\/ from $G$\/ as follows: 
\begin{enumerate}
\item Remove all edges in $g_E(E_L)$ and all nodes in $g_V(V_L) - g_V(V_K)$, obtaining a subgraph $D$ of $G$.
\item $V_H = V_D + (V_R - V_K)$ and $E_H = E_D + E_R$. 
\item For each edge $e \in E_D$, $s_H(e) = s_D(e)$. For each edge $e \in E_R$, $s_H(e) = s_R(e)$ if $s_R(e) \in V_R-V_K$, otherwise $s_H(e) = g_V(s_R(e))$. Targets are defined analogously. 
\item For each edge $e \in E_D$, $m_H(e) = m_D(e)$. For each edge $e \in E_R$, $m_H(e) = \ul{\mrm{if}}\ m_R(e) \in \mrm{List}\ \ul{\mrm{then}}\ m_{R}(e)^{g,\alpha}\ \ul{\mrm{else}}\ \ul{\mrm{if}}\ m_R(e) = dm\ \text{with $m \in \mrm{Hostmark}$}\ \ul{\mrm{then}}\ d^{g,\alpha}m$.
\item For each node $v \in V_D - g_V(V_K)$, $l_H(v) = l_D(v)$.  For each node $v \in V_R - V_K$, $l_H(v) = \ul{\mrm{if}}\ l_R(v) \in \mrm{List}\ \ul{\mrm{then}}\ l_R(v)^{g,\alpha}\ \ul{\mrm{else}}\ \ul{\mrm{if}}\ l_R(v) = dm\ \text{with $m \in \mrm{Hostmark}$}\ \ul{\mrm{then}}\ d^{g,\alpha}m$. For each node $v \in g_V(V_K)$ with $g_V(\bar{v}) = v$, $l_H(v) = \ul{\mrm{if}}\ l_R(\bar{v}) \in \mrm{List}\ \ul{\mrm{then}}\ l_R(\bar{v})^{g,\alpha}\ \ul{\mrm{else}}\ \ul{\mrm{if}}\linebreak[3]  l_R(\bar{v}) = dm\ \text{with $m \in \mrm{Hostmark}$}\ \ul{\mrm{then}}\ d^{g,\alpha}m\ \ul{\mrm{else}}\ \ul{\mrm{if}}\ l_R(\bar{v}) = d\,\mtt{any}\ \ul{\mrm{then}}\ d^{g,\alpha}\mrm{mark}_G(v)$. 
\end{enumerate}
\end{enumerate}
We write $G \dder_{r,g} H$\/ to express that $H$\/ results from $G$ by applying $r$ with \emph{match}\/ $g$.
\end{defn}

Alternatively, rule application can be defined as a two-stage process in which first the rule $\tuple{L \gets K \to R,\, c}$ is instantiated to $\tuple{L^{g,\alpha} \gets K \to R^{g,\alpha},\, c^{g,\alpha}}$ which is a conditional rule in the double-pushout approach with relabelling \cite{Habel-Plump02c}. In the second stage, the instantiated rule is applied to the host graph by constructing two suitable pushouts in the category of partially labelled graphs. See \cite{Coutehoute23a} for details.

\subsection{Programs}
\label{subsec:programs}

Figure \ref{fig:command-syntax} defines the abstract syntax of GP\, 2 programs. We omit the syntax of rule declarations, rule identifiers and procedure identifiers, which can be found in \cite{Bak15a}. A program consists of declarations of conditional rules and procedures, and exactly one declaration of a main command sequence. The category \ttt{RuleId} refers to declarations of conditional rules in \ttt{RuleDecl}. Procedures must be non-recursive, they can be seen as macros with local declarations.

\begin{figure}[!h]
\centering
\begin{tabular}{lcl}
Prog & ::= & Decl \{Decl\} \\
Decl & ::= & RuleDecl $\mid$ ProcDecl $\mid$ MainDecl \\
ProcDecl & ::= & ProcId `=' [ `[' LocalDecl `]' ] ComSeq \\
LocalDecl & ::= & (RuleDecl $\mid$ ProcDecl) \{LocalDecl\} \\
MainDecl & ::= & \ttt{Main} `=' ComSeq \\
ComSeq & ::= & Com \{`;' Com\} \\
Com & ::= & RuleSetCall $\mid$ ProcCall \\
&& $\mid$ \ttt{if} ComSeq \ttt{then} ComSeq [\ttt{else} ComSeq] \\
&& $\mid$ \ttt{try} ComSeq [\ttt{then} ComSeq] [\ttt{else} ComSeq] \\
&& $\mid$ ComSeq `{!}' $\mid$ ComSeq \ttt{or} ComSeq \\
&& $\mid$ `(' ComSeq `)' $\mid$ \ttt{break} $\mid$ \ttt{skip} $\mid$ \ttt{fail} \\
RuleSetCall & ::= & RuleId $\mid$ `\{' [RuleId \{`,' RuleId\}] `\}' \\
ProcCall & ::= & ProcId 
\end{tabular}
\caption{Abstract syntax of GP\,2 programs. \label{fig:command-syntax}}
\end{figure}

\noindent 
Next, we describe the effect of GP\,2's main control constructs informally. 

The call of a rule set $\{r_1,\dots,r_n\}$ non-deterministically applies one of the rules whose left-hand graph matches a subgraph of the \emph{host graph} such that the dangling condition and the rule's application condition are satisfied. The call \emph{fails} if none of the rules is applicable to the host graph. 

The command \ttt{if} $C$ \ttt{then} $P$ \ttt{else} $Q$ is executed on a host graph $G$ by first executing $C$ on a copy of $G$. If this results in a graph, $P$ is executed on the original graph $G$; otherwise, if $C$ fails, $Q$ is executed on $G$. The \ttt{try} command has a similar effect, except that $P$ is executed on the result of $C$'s execution. 

The loop command $P!$ executes the body $P$ repeatedly until it fails. When this is the case, $P!$ terminates with the graph on which the body was entered for the last time. The \ttt{break} command inside a loop terminates that loop and transfers control to the command following the loop.

In general, the execution of a program on a host graph may result in different graphs, fail, or diverge. For a formal account of GP\,2's operational semantics, see \cite{Courtehoute-Plump21b}.

\subsection{Rooted Programs}
\label{subsec:rooted_programs}

The bottleneck for efficiently implementing a language based on graph transformation rules is the cost of graph matching. In general, to match the left-hand graph $L$ of a rule within a host graph $G$ requires time polynomial in the size of $L$ \cite{Bak-Plump12a,Bak-Plump16a}. As a consequence, linear-time graph algorithms in imperative languages may be slowed down to polynomial time when they are recast as rule-based programs.

To speed up matching, GP\,2 supports \emph{rooted} graph transformation where graphs in rules and host graphs are equipped with so-called root nodes. Roots in rules must match roots in the host graph so that matches are restricted to the neighbourhood of the host graph's roots. We draw root nodes using double circles.

A rule $\tuple{L \gets K \to R,\, c}$ is \emph{fast} if (1) each node in $L$ is undirectedly reachable from some root, (2) neither $L$ nor $R$ contain repeated occurrences of list, string or atom variables, and (3) the condition $c$ contains neither an $\mathtt{edge}$ predicate nor a test $e_1 \mtt{=} e_2$ or $e_1 \mtt{!\!=} e_2$ where both $e_1$ and $e_2$ contain a list, string or atom variable.


In \cite{Bak-Plump12a,Bak-Plump16a}, it is shown that rooted graph matching can be implemented to run in constant time for fast rules, provided there are upper bounds on the maximal node degree and the number of roots in host graphs. However, in Section \ref{s:enhancement} we show how to overcome the restriction that node degrees must be bounded.

\section{The GP\,2-to-C Compiler}
We now introduce the GP\,2-to-C compiler which translates GP\,2 programs into C code. A description of the compiler's implementation is beyond the scope of this paper; the interested reader may consult \cite{Bak15a,Bak-Plump16a,Campbell-Romo-Plump20d}. We also give a list of assumptions on the performance of some basic operations generated by the compiler. These assumptions are needed to analyse the time complexity of programs.

\subsection{Execution Pipeline}
\label{subsec:compiler}

Figure \ref{fig:gp2-to-c-compiler} shows that GP\,2 programs are first compiled to C programs using the GP\,2-to-C compiler and then transformed by GCC into platform-dependent binaries which are executed by the runtime system. The execution entails reading an input graph from the disc. 
 
\begin{figure}[!h]
\centering
\begin{tikzpicture}
\node at (1,1.7) {GP\,2 Program};
\draw [->] (1,1.5) -- (1,1);
\draw [fill=black!10] (0,0) rectangle node{GP\,2-to-C} (2,1);
\node at (2.75,0.75) {C};
\node at (2.75,0.25) {Program};
\draw [->] (2,0.5) -- (3.5,0.5);
\draw [fill=black!10] (3.5,0) rectangle node{GCC} (5.5,1);
\node at (6.25,0.75) {Binary};
\draw [->] (5.5,0.5) -- (7,0.5);
\node at (8.5,1.7) {Input Graph};
\draw [->] (8.5,1.5) -- (8.5,1);
\draw [fill=black!10] (7,0) rectangle node{Runtime System} (10,1);
\draw [->] (10,0.5) -- (11.5,0.5);
\node at (10.75,0.75) {Output};
\node at (10.75,0.25) {Graph};
\end{tikzpicture}
\caption{GP\,2 execution pipeline.}
\label{fig:gp2-to-c-compiler}
\end{figure}

Executing a program results in one of three possible outcomes: the program (1) produces an output graph, (2) evaluates to \ttt{fail}, or (3) diverges (does not terminate).


For  example, consider the GP\,2 program in Figure \ref{fig:transitive-closure-program} which computes the transitive closure of a graph.\footnote{The concrete syntax of an equivalent program with unmarked nodes can be found at \url{https://github.com/UoYCS-plasma/GP2/blob/master/programs/transitive-closure.gp2}.} The C code generated for this program (1) reads and parses the input host graph, (2) applies the \ttt{link} rule as long as possible, and (3) returns the resulting host graph as output.

\begin{figure}[!h]
\centering
\fbox{\begin{minipage}{11.78cm}
\hspace{1em}\texttt{Main = link!}

\setlength{\tabcolsep}{0.4cm}

\medskip
\smallskip

\begin{tabular}{ p{10.5cm} }
	
	\ttt{link(a,b,x,y,z:list)} \\
	
	\begin{tikzpicture}
        \tikzstyle{every node}=[font=\ttfamily]
        
		\node (a) at (0,0)       [circle, thick, rounded corners=7, draw=black, minimum size=5mm, fill=gray!50, label=below:\tiny\tiny1] {x};
		\node (b) at (1.25,0)    [circle, thick, rounded corners=7, draw=black, minimum size=5mm, fill=gray!50, label=below:\tiny\tiny2] {y};
		\node (c) at (2.5,0)     [circle, thick, rounded corners=7, draw=black, minimum size=5mm, fill=gray!50, label=below:\tiny\tiny3] {z};
		
		\node (d) at (3.375,0)   {$\Rightarrow$};
		
		\node (e) at (4.25,0)    [circle, thick, rounded corners=7, draw=black, minimum size=5mm, fill=gray!50, label=below:\tiny\tiny1] {x};
		\node (f) at (5.5,0)     [circle, thick, rounded corners=7, draw=black, minimum size=5mm, fill=gray!50, label=below:\tiny\tiny2] {y};
		\node (g) at (6.75,0)    [circle, thick, rounded corners=7, draw=black, minimum size=5mm, fill=gray!50, label=below:\tiny\tiny3] {z};
		
		\draw (a) edge[->,thick] node[above] {a} (b)
		      (b) edge[->,thick] node[above] {b} (c)
		      (e) edge[->,thick] node[above] {a} (f)
		      (f) edge[->,thick] node[above] {b} (g)
		      (e) edge[->,thick,bend left=40] (g);
	\end{tikzpicture}
	\\
	
	\vspace{-1em}\ttt{where not edge(1,3)} \\
	
\end{tabular}
\end{minipage}
}
\caption{GP\,2 program \ttt{transitive-closure}.}
\label{fig:transitive-closure-program}
\end{figure}

Applying the \ttt{link} rule as long as possible requires to find a match satisfying the application condition, and if one is found, adding a fresh edge according to the rule and the match. This is repeated zero or more times until a match cannot be found anymore. Finding a match involves finding a grey node corresponding to node \texttt{1}, then finding an outgoing edge to another grey node, corresponding to node \texttt{2}, then finding an outgoing edge to yet another grey node, corresponding to node \texttt{3}, and then checking the application condition. This search is controlled by the matching algorithm and is described in general in Bak's thesis \cite{Bak15a}. Updates to the algorithm are described in \cite{Campbell-Courtehoute-Plump19b}).

We conclude this subsection by stressing that the GP\,2-to-C compiler does not attempt to avoid program failure or to generate more than one output graph. This would require backtracking, which is only used in the rule matching algorithm and in the implementation of rule sets (a rule set only fails if none of its rules can be matched). The compiler has been deliberately designed not to backtrack in program execution, which is necessary to achieve the fast run times reported in this paper.

\subsection{Reasoning about Time Complexity}
\label{subsec:complexity}

When analysing the time complexity of programs, we assume that these are fixed. This is customary in algorithm analysis where programs are fixed and runtime is measured in terms of input size \cite{Aho-Hopcroft-Ullman74a,Skiena20a}. In our setting, the input size is the \emph{size} of a host graph, which we define to be the total number of nodes and edges. 

When we discuss the time complexity of rule applications, it suffices to reason about the complexity of finding a match because all the programs in this paper satisfy the assumption of the following lemma.

\begin{lemC}[Constant Time Graph Construction\cite{campbell2022fast}] \label{lem:const-time}
    If a match has been found for a rule that does not modify labels, other than possibly changing marks or changing non-host-graph labels to host-graph labels, the resulting graph can be constructed in constant time.
\end{lemC}
\noindent 
To reason about a program's time complexity, we need to make assumptions on the complexity of certain elementary operations generated by the compiler. Figure \ref{fig:complexity-assumptions} shows the complexity of the basic procedures of the rule matching algorithm, adapted from \cite{campbell2022fast}. The grey rows indicate procedures updated by the changes introduced in Section~\ref{s:enhancement}. The runtimes reported in this paper's case studies and in \cite{ismaili2024linear,Ismaili-Plump25a} are consistent with the complexity assumptions of Figure \ref{fig:complexity-assumptions}.

\definecolor{modified}{gray}{0.9}
\definecolor{added}{rgb}{0.88,1,1}

\begin{figure}[!h]
\begin{center}
\scalebox{.7}{
\begin{tabular}{llr} 
Operation               & Description                                                    & Complexity  \\ \hline
\texttt{alreadyMatched}    & Test if the given item has been matched in the host graph.     & \(\mathrm{O}(1)\)    \\
\texttt{clearMatched}      & Clear the \texttt{is matched} flag for a given item.              & \(\mathrm{O}(1)\)    \\
\texttt{setMatched}        & Set the \texttt{is matched} flag for a given item.                & \(\mathrm{O}(1)\)    \\
\rowcolor{modified}\texttt{firstHostNode(m)}     & \textbf{Fetch the first node of mark \texttt{m} in the host graph.}                        & \(\mathrm{O}(1)\)    \\
\rowcolor{modified}\texttt{nextHostNode(m)}      & \textbf{Given a node of mark \texttt{m}, fetch the next node of mark \texttt{m} in the host graph.}           & \(\mathrm{O}(1)\)    \\
\texttt{firstHostRootNode\,\,\,} & Fetch the first root node in the host graph.                   & \(\mathrm{O}(1)\)    \\
\texttt{nextHostRootNode\,\,\,}  & Given a root node, fetch the next root node in the host graph. & \(\mathrm{O}(1)\)    \\
\rowcolor{modified}\texttt{firstInEdge(m)}       & \textbf{Given a node, fetch the first inedge of mark \texttt{m}.}                 & \(\mathrm{O}(1)\)    \\
\rowcolor{modified}\texttt{nextInEdge(m)}        & \textbf{Given a node and an edge of mark \texttt{m}, fetch the next inedge of mark \texttt{m}.}\,\,\,         & \(\mathrm{O}(1)\)    \\
\rowcolor{modified}\texttt{firstOutEdge(m)}      & \textbf{Given a node, fetch the first outedge of mark \texttt{m}.}                   & \(\mathrm{O}(1)\)    \\
\rowcolor{modified}\texttt{nextOutEdge(m)}       & \textbf{Given a node and an edge of mark \texttt{m}, fetch the next outedge of mark \texttt{m}.}        & \(\mathrm{O}(1)\)    \\
\rowcolor{modified}\texttt{firstLoop(m)}       & \textbf{Given a node, fetch the first loop edge of mark \texttt{m}.}        & \(\mathrm{O}(1)\)    \\
\rowcolor{modified}\texttt{nextLoop(m)}       & \textbf{Given a node and an edge of mark \texttt{m}, fetch the next loop edge of mark \texttt{m}.}        & \(\mathrm{O}(1)\)    \\
\texttt{getInDegree}       & Given a node, fetch its indegree.                       & \(\mathrm{O}(1)\)    \\
\texttt{getOutDegree}      & Given a node, fetch its outdegree.                       & \(\mathrm{O}(1)\)    \\
\texttt{getMark}           & Given a node or edge, fetch its mark.                          & \(\mathrm{O}(1)\)    \\
\texttt{isRooted}          & Given a node, determine if it is rooted.                       & \(\mathrm{O}(1)\)    \\
\texttt{getSource}         & Given an edge, fetch the source node.                          & \(\mathrm{O}(1)\)    \\
\texttt{getTarget}         & Given an edge, fetch the target node.                          & \(\mathrm{O}(1)\)    \\
\texttt{parseInputGraph}   & Parse and load the input graph into memory.    & \(\mathrm{O}(n)\)    \\
\texttt{printHostGraph}    & Write the current host graph as output.                  & \(\mathrm{O}(n)\)    \\
\end{tabular}
}
\end{center}
\caption{Updated complexity assumptions on elementary operations generated by the compiler, where $n$ is the size of the input. Modified operations are highlighted in grey.}
\label{fig:complexity-assumptions}
\end{figure}

\section{The Problem with Unbounded-Degree Graphs}
\label{s:problem1}
This section explains the problem caused by classes of input graphs with an unbounded node degree, while the next section introduces the compiler improvements.

Previously, non-destructive GP\,2 programs based on depth-first search ran in linear time on graph classes of bounded node degree but in non-linear time on graph classes of unbounded degree \cite{campbell2022fast}. For example, consider the program \texttt{is-connected} in Figure \ref{fig:is-con-fig-old} which checks whether a graph is connected.\footnote{A (possibly dashed) line without arrowhead between two nodes is a \emph{bidirectional}\/ edge which matches host graph edges of either direction.} Input graphs are arbitrary \gp{} host graphs with grey, non-rooted nodes and unmarked edges. If the graph is connected, the program returns a graph that is isomorphic up to marks. On disconnected graphs, the program fails.
\begin{figure}[!ht]
\begin{mdframed}[linewidth=0.8pt]
\begin{verbatim}
Main  = try init then (DFS!; Check)
DFS   = forward!; try back else break
Check = if match then fail

\end{verbatim}

\begin{tikzpicture}
\tikzstyle{every node}=[font=\ttfamily]

\draw (0.9,0.75) node[align=left] {init(x:list)};
\node[rectangle, thick, thick, rounded corners=7, draw=black, minimum size=5mm, fill=gray!50, label=below:\tiny\tiny 1](a2) at (0,0){x};
\draw (1,0) node[] {$\Rightarrow$};
\node[rectangle, thick, rounded corners=7, draw=black, minimum size=5mm,  double, double distance=1pt, fill=cyan, label=below:\tiny\tiny1](a2) at (2,0){x};
\draw[->, thick];
\draw (4.8,1) -- (4.8,-0.5);
\end{tikzpicture}
\begin{tikzpicture}
\tikzstyle{every node}=[font=\ttfamily]
\hspace{1.55em}
\draw (1,0.75) node[align=left] {match(x:list)};
\node[rectangle, thick, rounded corners=7, draw=black, minimum size=5mm, fill=gray!50, label=below:\tiny\tiny1](a2) at (0,0){x};
\draw (1,0) node[] {$\Rightarrow$};
\node[rectangle, thick, rounded corners=7, draw=black, minimum size=5mm, fill=gray!50, label=below:\tiny\tiny 1](a2) at (2,0){x};
\draw[->, thick];

\end{tikzpicture}

\begin{tikzpicture}
\tikzstyle{every node}=[font=\ttfamily]
\draw (1.6,0.75) node[align=left] {forward(x,y,z:list)};
\node[rectangle, thick, rounded corners=7, draw=black, minimum size=5mm,  double, double distance=1pt, fill=cyan, label=below:\tiny\tiny1](a1) at (0,0){x};
\node[rectangle, thick, rounded corners=7, draw=black, minimum size=5mm, fill=gray!50, label=below:\tiny\tiny2](a2) at (1,0){y};
\draw (2,0) node[] {$\Rightarrow$};
\node[rectangle, thick, rounded corners=7, draw=black, minimum size=5mm, fill=cyan, label=below:\tiny\tiny1](a3) at (3,0){x};
\node[rectangle, thick, rounded corners=7, draw=black, minimum size=5mm, double, double distance=1pt, fill=cyan, label=below:\tiny\tiny2](a4) at (4,0){y};
\draw[-, line width=1.2pt] (a1) edge[] node[above, color = black]{z} (a2) (a3) edge[dashed] node[above, color = black]{z} (a4);
\draw (4.8,1) -- (4.8,-0.5);
\end{tikzpicture}
\hspace{1.5em}
\begin{tikzpicture}
\tikzstyle{every node}=[font=\ttfamily]
\draw (1.3,0.75) node[align=left] {back(x,y,z:list)};
\node[rectangle, thick, rounded corners=7, draw=black, minimum size=5mm, fill=cyan, label=below:\tiny\tiny1](a1) at (0,0){x};
\node[rectangle, thick, rounded corners=7, draw=black, minimum size=5mm, double, double distance=1pt, fill=cyan, label=below:\tiny\tiny2](a2) at (1,0){y};
\draw (2,0) node[] {$\Rightarrow$};
\node[rectangle, thick, rounded corners=7, draw=black, minimum size=5mm, double, double distance=1pt, fill=cyan, label=below:\tiny\tiny1](a3) at (3,0){x};
\node[rectangle, thick, rounded corners=7, draw=black, minimum size=5mm, label=below:\tiny\tiny2](a4) at (4,0){y};
\draw[-, line width=1.2pt] (a1) edge[dashed,shorten <=1.5] node[above, color = black]{z} (a2) (a3) edge[] node[above, color = black]{z} (a4);
\end{tikzpicture}

\end{mdframed}
\caption{The old program \texttt{is-connected}.}
\label{fig:is-con-fig-old}
\end{figure}

The rule \texttt{init} picks an arbitrary grey node as a root (if the input graph is non-empty) and then the loop \texttt{DFS!} performs a depth-first search of the connected component of the node chosen by \texttt{init}. Rule \texttt{forward} marks each newly visited node blue, and \texttt{back} unmarks it once it has been processed. The procedure \texttt{DFS} ends when \texttt{back} fails to match, indicating that the search is complete. Rule \texttt{match} checks whether a grey-marked node still exists in the graph following the execution of \texttt{DFS!}. This is the case if and only if the input graph contains more than one connected component. In this situation the program invokes the command \texttt{fail}, otherwise it terminates by returning the graph resulting from the depth-first search.

It can be shown that the program \texttt{is-connected} runs in linear time on classes of graphs with bounded node degree \cite{campbell2022fast}. However, as the following example shows, the program may require non-linear time on unbounded-degree graph classes. Figure \ref{fig:unbounded} shows an execution of \texttt{is-connected} on a star graph with $8$ edges. Below each graph, we record the range of attempts that the matching algorithm may make. For instance, in the second graph of the top row, either a match is found immediately among the edges that connect the central node with the grey nodes, or the dashed edge is unsuccessfully tried first. 
\begin{figure}[!ht]
    \centering
        \input{Figures/star-unbounded}
    \caption{Matching attempts with the \texttt{forward} rule. \texttt{fd} and \texttt{bk} denote \texttt{forward} and \texttt{back}, respectively.}
    \label{fig:unbounded}
\end{figure}
In order to find a match for the rule \texttt{forward}, the matching algorithm considers, in the worst case, every edge incident with the root. When the node central to the graph is rooted and the rule \texttt{forward} is called, the matching algorithm may first attempt a match with the dashed back edge and all edges incident with an unmarked node. Therefore, the maximum number of matching attempts for \texttt{forward} grows as the root moves back to the central node. As can be seen from this example, the worst-case complexity of matching \texttt{forward} throughout the program's execution is $2|E|+\sum_{i=1}^{|E|}i=\mathrm{O}(|E|^2)$ where $E$ is the set of edges.


\section{Compiler Enhancements}
\label{s:enhancement}
\newcommand{\red}[1]{\textcolor{red}{#1}}
\newcommand{\blue}[1]{\textcolor{blue}{#1}}
\newcommand{\green}[1]{\textcolor{green}{#1}}
\newcommand{\brown}[1]{\textcolor{brown}{#1}}
\newcommand{\black}[1]{\textcolor{black}{#1}}

To overcome the limitation that depth-first search requires bounded-degree graph classes to run in linear time, we changed the GP\,2 compiler described in \cite{Campbell-Romo-Plump20d}. We refer to that compiler as the \emph{2020 compiler} and call the version introduced in this paper the \emph{new compiler}\footnote{Available at: \url{https://github.com/UoYCS-plasma/GP2}.}.

\subsection{First Enhancement}
\label{ss:first-enhancement}

The 2020 compiler stored the host graph's structure as one linked list containing every node in the graph, with each node storing two linked lists of edges: one for incoming edges and one for outgoing edges. Searching through these lists to find an edge with a particular mark could involve looking up a large number of edges with incompatible marks, slowing down the search. In particular, the number of unmatchable edges could grow during program execution if the marks changed.


When trying to match an edge of a given rule, host graph edges with incorrect orientation or mark can be ignored. Thus, by organising the edges incident to a node into linked lists based on their mark and orientation, the matching algorithm can select edges of correct mark and orientation. More precisely, the new compiler generates for each node a two-dimensional array such that each entry is a linked list of all incident edges with a particular combination of mark and orientation. We also consider loops to be a distinct type of orientation. The array, therefore, consists of five rows (unmarked, dashed, red, green, blue) and three columns (incoming, outgoing, loop), totalling $15$ cells. See Figure \ref{fig:enter-label} for an illustration. We mention that the program in Figure~\ref{fig:is-con-fig-old} still does not run in linear time for input graph classes of unbounded degree; see Section~\ref{s:connectedness} for an adjusted program which exploits the aforementioned changes and runs in linear time.

Note that, given a rule with a node $v$ and an incident edge with a particular mark and orientation, the matching algorithm can access in constant time the corresponding list in $v$'s array. The first element of that list is a host graph edge with matching mark and orientation; in case the list is empty, the algorithm has determined that no matching edge exists in the host graph. 

\begin{figure}[htb] 
    \begin{center}
    \begin{tabular}{l|c|c|c}
                & in   & out    & loop \\
  \hline
  unmarked      & \dots & \dots & \dots \\
  \hline
  dashed        & \dots & \dots & \dots \\
  \hline
  \red{red}     & \dots & \dots & \dots \\
  \hline
  \green{green} & \dots & \dots & \dots \\
  \hline
  \blue{blue}   & \dots & \dots & \dots \\

\end{tabular} 
\end{center}
    \caption{Two-dimensional array of linked lists of edges.}
    \label{fig:enter-label}
\end{figure}

\subsection{Finding Nodes in Constant Time}
\label{s:problem2}
The previous subsection implies that the code generated by the new compiler can match edges labelled with variables of type \ttt{list} in constant time (or detect that there are no matching edges in the host graph). The corresponding problem of finding nodes labelled with \ttt{list} variables in constant time could not be solved with the 2020 compiler either. This is explained below. 

Consider the program \texttt{is-discrete} in Figure \ref{fig:is-discrete} which checks whether a graph is discrete, that is, contains no edges. Given an unmarked input graph, the program fails if and only if the input graph contains any edges. The program repeatedly picks some unmarked node, turns it into a red root node, and checks whether the node is isolated by applying the rule \ttt{isolated}. The left-hand node of the rule is not an interface node and hence, by the dangling condition, the rule is applicable if and only if the root node is isolated. If \texttt{isolated} is applied, the root node is unrooted and the loop continues. If \texttt{isolated} is not applicable, a non-isolated node has been found and the loop breaks. In this case rule \texttt{root} finds that a root exists and the \ttt{fail} command is invoked.

\definecolor{gp2pink}{RGB}{255, 153, 238}

\begin{figure}[!ht]
    \begin{mdframed}[linewidth=1pt]
    \begin{verbatim}
Main = (mark; try isolated else break)!; if root then fail
    \end{verbatim}
    
    \begin{tikzpicture}
    \tikzstyle{every node}=[font=\ttfamily]
    
    \draw (0.9,0.75) node[align=left] {mark(x:list)};
    \node[rectangle, thick, thick, rounded corners=7, draw=black, minimum size=5mm, label=below:\tiny\tiny 1](a2) at (0,0){x};
    \draw (1,0) node[] {$\Rightarrow$};
    \node[rectangle, thick, rounded corners=7, draw=black, minimum size=5mm,  double, double distance=1pt, fill=red!60, label=below:\tiny\tiny1](a2) at (2,0){x};
    \draw[->, thick];
    \draw (3.2,1) -- (3.2,-0.5);
    \end{tikzpicture}
    \hspace{1em}
    \begin{tikzpicture}
    \tikzstyle{every node}=[font=\ttfamily]
    \draw (1.3,0.75) node[align=left] {isolated(x:list)};
    \node[rectangle, thick, rounded corners=7, draw=black, minimum size=5mm,  double, double distance=1pt, fill=red!60, label=below:\tiny\tiny](a2) at (0,0){x};
    \draw (1,0) node[] {$\Rightarrow$};
    \node[rectangle, thick, rounded corners=7, draw=black, minimum size=5mm, fill=red!60, label=below:\tiny\tiny](a2) at (2,0){x};
    \draw (3.4,1) -- (3.4,-0.5);
    \draw (0,-0.62) -- (0,-0.62);
    \draw[->, thick];
    \end{tikzpicture}
    \hspace{1em}
    \begin{tikzpicture}
    \tikzstyle{every node}=[font=\ttfamily]
     \draw (0.9,0.75) node[align=left] {root(x:list)};
    \node[rectangle, thick, thick, rounded corners=7, draw=black, minimum size=5mm, fill=red!60,  double, double distance=1pt,  label=below:\tiny\tiny 1](a2) at (0,0){x};
    \draw (1,0) node[] {$\Rightarrow$};
    \node[rectangle, thick, rounded corners=7, draw=black, minimum size=5mm, double, double distance=1pt, fill=red!60, label=below:\tiny\tiny1](a2) at (2,0){x};
    \draw[->, thick];
    \end{tikzpicture}
    \end{mdframed}
    \caption{The program \texttt{is-discrete}.}
    \label{fig:is-discrete}
    \end{figure}

The code generated by the 2020 compiler matched the rule \texttt{mark} with a complexity of $\mathrm{O}(n)$, where $n$ is the number of nodes in the host graph. This is because finding an unmarked node required iterating through a unique linked list holding all nodes in the host graph. As a result, the overall time complexity of \texttt{is-discrete} was $\mathrm{O}(n^2)$, as empirically evidenced by the timing in Figure \ref{fig:bench-is-dis}.

\begin{figure}[!ht]
    \centering
    \begin{tikzpicture}
    \begin{axis}[
      xlabel=number of nodes,
      ylabel=runtime (ms), ylabel style={above=0.2mm},
      width=9.2cm,height=7.2cm,
      legend style={at={(1.525,0.5)}},
      ymajorgrids=true,
      grid style=dashed]
      \addplot[color=plot8, mark=square*] table [y=time, x=n]{Figures/Benchmarks/is-dis.dat};
      \addlegendentry{2020 compiler}
      \addplot[color=plot7, mark=square*] table [y=time, x=n]{Figures/Benchmarks/is-dis-new.dat};
      \addlegendentry{New compiler}
    \end{axis}  
    \end{tikzpicture}
    \caption{Measured performance of the program \texttt{is-discrete} on discrete graphs with the 2020 compiler and the new compiler.}
    \label{fig:bench-is-dis}
    \end{figure}

\subsection{Second Enhancement}
\label{ss:second-enhancement}
As explained above, the 2020 compiler generated code storing all host graph nodes in a single linked list. Hence the matching algorithm had to carry out  unnecessary lookups to find a match for a node in the host graph when there existed nodes with marks that did not match. 
The solution in the new compiler is to store nodes in five separate linked lists: one list for each node mark and one list for unmarked nodes. The linked lists are stored in an array, providing constant-time access to the first element of each list. See Figure \ref{fig:node-lists} for an illustration. This structure allows to match the rule \texttt{mark} in constant time as the matching algorithm immediately inspects the linked list holding all unmarked nodes in the host graph. Since the left-hand node of \texttt{mark} is labelled with a variable of type \ttt{list}, the first node found in the linked list by the matching algorithm matches successfully. As a result of the improved data structure, the time complexity of the program \texttt{is-discrete} under the new compiler is $\mathrm{O}(n)$. Figure \ref{fig:bench-is-dis} highlights the difference in runtimes of the program \texttt{is-discrete} run under both compilers.

\begin{figure}
\begin{center}
\begin{tabular}{l|c} 
              
  unmarked                  & \dots \\
  \hline
  \textcolor{gp2grey}{grey} & \dots \\
  \hline
  \red{red}                 & \dots \\
  \hline
  \green{green}             & \dots \\
  \hline
  \blue{blue}               & \dots 
  
\end{tabular}  
    \caption{Array of linked lists of nodes.}
    \label{fig:node-lists}
\end{center}
\end{figure}

\section{Case Study: Recognising Connected Graphs}
\label{s:connectedness}
Checking for the connectedness of a graph is a fundamental application of depth-first search (DFS). A \gp{} program solving this problem is given in \cite{Bak15a, campbell2022fast}, but to run in linear time it requires input graphs of bounded node degree.

\subsection{Program}
\label{ss:problem-con}

The program \texttt{is-connected}\footnote{The concrete syntax of the program available at: \url{https://gist.github.com/ismaili-ziad/7b419a12a8d156abf5ad8ac8c352713a}.} in Figure \ref{fig:is-con-fig} recognises connected graphs, in that each pair of nodes are connected by a sequence of edges of arbitrary orientation, with respect to the following specification.

\begin{description}
\item[\textbf{Input}] An arbitrary \gp{} host graph such that
\begin{enumerate}
        \item each node is non-rooted and marked grey, and
        \item each edge is unmarked.
    \end{enumerate}
\item[\textbf{Output}] If the input graph is connected, a host graph that is isomorphic to the input graph up to marks. Otherwise \emph{failure}.
\end{description}
\noindent 
This can be achieved by conducting a DFS from an arbitrarily chosen node while marking newly visited nodes. Since the DFS cannot propagate beyond the connected component it started in, the presence of an unmarked node following the DFS traversal indicates that the host graph is not connected. This program improves implementations of the same problem in \cite{Bak15a, campbell2022fast}, which is discussed in Section~\ref{s:problem1}, by achieving linear-time complexity on arbitrary input graphs (including star graphs). 

We stress that the old \texttt{is-connected} program (Figure~\ref{fig:is-con-fig-old}) still does not run in linear time for input graph classes of unbounded degree. This means that the new compiler does not necessarily improve the time complexity of existing programs. Instead, we need a new programming technique to exploit the improved graph data structure. To explain this technique, consider the program in Figure~\ref{fig:is-con-fig}. The rule \texttt{next\_edge} selects in constant time an adjacent node that shares an unmarked edge with the root. Subsequently, the ruleset \texttt{\{move, ignore\}} acts like a case statement, distinguishing the cases where the adjacent node is blue or grey. We use a similar technique for the case studies of Section~\ref{s:acyclicity} and Section~\ref{s:bf}.

\begin{center}
    \begin{figure}[!ht]
\begin{mdframed}[linewidth=0.8pt]
\begin{verbatim}
Main    = try init then (DFS!; Check)
DFS     = FORWARD!; try back else break
FORWARD = next_edge; {move, ignore}
Check   = if match then fail

\end{verbatim}

\begin{tikzpicture}
\tikzstyle{every node}=[font=\ttfamily]

\draw (0.9,0.75) node[align=left] {init(x:list)};
\node[rectangle, thick, thick, rounded corners=7, draw=black, minimum size=5mm, fill=gray!50, label=below:\tiny\tiny 1](a2) at (0,0){x};
\draw (1,0) node[] {$\Rightarrow$};
\node[rectangle, thick, rounded corners=7, draw=black, minimum size=5mm,  double, double distance=1pt, fill=cyan, label=below:\tiny\tiny1](a2) at (2,0){x};
\draw[->, thick];
\draw (4.8,1) -- (4.8,-0.5);
\end{tikzpicture}
\begin{tikzpicture}
\tikzstyle{every node}=[font=\ttfamily]
\hspace{1.55em}
\draw (1,0.75) node[align=left] {match(x:list)};
\node[rectangle, thick, rounded corners=7, draw=black, minimum size=5mm, fill=gray!50, label=below:\tiny\tiny1](a2) at (0,0){x};
\draw (1,0) node[] {$\Rightarrow$};
\node[rectangle, thick, rounded corners=7, draw=black, minimum size=5mm, fill=gray!50, label=below:\tiny\tiny 1](a2) at (2,0){x};
\draw[->, thick];

\end{tikzpicture}

\begin{tikzpicture}
\tikzstyle{every node}=[font=\ttfamily]
\draw (1.8,0.75) node[align=left] {next\_edge(x,y,z:list)};
\node[rectangle, thick, rounded corners=7, draw=black, minimum size=5mm,  double, double distance=1pt, fill=cyan, label=below:\tiny\tiny1](a1) at (0,0){x};
\node[rectangle, thick, rounded corners=7, draw=black, minimum size=5mm, fill=magenta!60, label=below:\tiny\tiny2](a2) at (1,0){y};
\draw (2,0) node[] {$\Rightarrow$};
\node[rectangle, thick, rounded corners=7, draw=black, double, double distance=1pt, minimum size=5mm, fill=cyan, label=below:\tiny\tiny1](a3) at (3,0){x};
\node[rectangle, thick, rounded corners=7, draw=black, minimum size=5mm, fill=magenta!60, label=below:\tiny\tiny2](a4) at (4,0){y};
\draw[-, line width=1.2pt] (a1) edge[] node[above, color = black]{z} (a2) (a3) edge[red] node[above, color = black]{z} (a4);
\draw (4.8,1) -- (4.8,-0.5);
\end{tikzpicture}
\hspace{1.35em}
\begin{tikzpicture}
\tikzstyle{every node}=[font=\ttfamily]
\draw (1.5,0.75) node[align=left] {ignore(x,y,z:list)};
\node[rectangle, thick, rounded corners=7, draw=black, minimum size=5mm,  double, double distance=1pt, fill=cyan, label=below:\tiny\tiny1](a1) at (0,0){x};
\node[rectangle, thick, rounded corners=7, draw=black, minimum size=5mm, fill=cyan, label=below:\tiny\tiny2](a2) at (1,0){y};
\draw (2,0) node[] {$\Rightarrow$};
\node[rectangle, thick, rounded corners=7, draw=black, double, double distance=1pt, minimum size=5mm, fill=cyan, label=below:\tiny\tiny1](a3) at (3,0){x};
\node[rectangle, thick, rounded corners=7, draw=black, minimum size=5mm, fill=cyan, label=below:\tiny\tiny2](a4) at (4,0){y};
\draw[-, line width=1.2pt] (a1) edge[red] node[above, color = black]{z} (a2) (a3) edge[cyan] node[above, color = black]{z} (a4);
\end{tikzpicture}

\begin{tikzpicture}
\tikzstyle{every node}=[font=\ttfamily]
\draw (1.35,0.75) node[align=left] {move(x,y,z:list)};
\node[rectangle, thick, rounded corners=7, draw=black, minimum size=5mm,  double, double distance=1pt, fill=cyan, label=below:\tiny\tiny1](a1) at (0,0){x};
\node[rectangle, thick, rounded corners=7, draw=black, minimum size=5mm, fill=gray!50, label=below:\tiny\tiny2](a2) at (1,0){y};
\draw (2,0) node[] {$\Rightarrow$};
\node[rectangle, thick, rounded corners=7, draw=black, minimum size=5mm, fill=cyan, label=below:\tiny\tiny1](a3) at (3,0){x};
\node[rectangle, thick, rounded corners=7, draw=black, minimum size=5mm, double, double distance=1pt, fill=cyan, label=below:\tiny\tiny2](a4) at (4,0){y};
\draw[-, line width=1.2pt] (a1) edge[red] node[above, color = black]{z} (a2) (a3) edge[dashed] node[above, color = black]{z} (a4);
\draw (4.8,1) -- (4.8,-0.5);
\end{tikzpicture}
\hspace{1.35em}
\begin{tikzpicture}
\tikzstyle{every node}=[font=\ttfamily]
\draw (1.3,0.75) node[align=left] {back(x,y,z:list)};
\node[rectangle, thick, rounded corners=7, draw=black, minimum size=5mm, fill=cyan, label=below:\tiny\tiny1](a1) at (0,0){x};
\node[rectangle, thick, rounded corners=7, draw=black, minimum size=5mm, double, double distance=1pt, fill=cyan, label=below:\tiny\tiny2](a2) at (1,0){y};
\draw (2,0) node[] {$\Rightarrow$};
\node[rectangle, thick, rounded corners=7, draw=black, minimum size=5mm, double, double distance=1pt, fill=cyan, label=below:\tiny\tiny1](a3) at (3,0){x};
\node[rectangle, thick, rounded corners=7, draw=black, minimum size=5mm, fill=cyan, label=below:\tiny\tiny2](a4) at (4,0){y};
\draw[-, line width=1.2pt] (a1) edge[dashed,shorten <=1.5] node[above, color = black]{z} (a2) (a3) edge[cyan] node[above, color = black]{z} (a4);
\end{tikzpicture}

\end{mdframed}
\caption{The (new) program \texttt{is-connected}.}
\label{fig:is-con-fig}
\end{figure}
\end{center}

\subsection{Proof of Correctness}

Let us examine the correctness of \texttt{is-connected}.
\noindent 
 We write $|X|$ for the cardinality of a finite set $X$. We use the notation $A \Rightarrow_{r} B$ to indicate that $B$ results from applying $r$ to $A$. 


First of all, we establish what we mean by an input graph in the context of the \texttt{is-connected} program.

\begin{defn}[Connectedness]
    A graph $G$ is \emph{connected} if and only if, for any pair of distinct vertices $u$ and $v$ in $G$, there exists a path of edges (regardless of direction) from $u$ to $v$.
\end{defn}

From now on, we call graphs satisfying the assumptions of the specification given at the beginning of the subsection as \emph{input graphs}.

The choice of having input nodes grey is motivated by the fact that nodes with the wildcard colour magenta (corresponding to \texttt{any}) cannot match unmarked nodes. We overcome this problem by requiring that all nodes in input graphs are marked grey.

\begin{prop}
\label{prop:all-same-comp-con}
    Throughout the execution of \texttt{is-connected} on an input graph all non-grey nodes in the host graph share a connected component.
\end{prop}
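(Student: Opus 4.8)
The plan is to treat the statement as an invariant of the program and prove it by induction on the number of rule applications performed so far in the derivation. The crucial preliminary observation is that \emph{none} of the rules of \texttt{is-connected} --- \texttt{init}, \texttt{next\_edge}, \texttt{move}, \texttt{ignore}, \texttt{back}, \texttt{match} --- ever creates or deletes a node or an edge; each of them only rewrites marks and toggles the root flag. Consequently the underlying (undirected, unmarked) graph is preserved by every step, so its partition into connected components is fixed throughout execution and coincides with that of the input graph. In particular, ``sharing a connected component'' is a stable, well-defined notion, and it suffices to track which nodes are grey and which are not (the non-grey nodes being exactly those that have been visited by the search).

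For the base case, note that before \texttt{init} fires every node is grey by Definition \ref{def:input-con}, so the set of non-grey nodes is empty and the claim holds vacuously. The rule \texttt{init} marks a single grey node blue and roots it; a one-element set of nodes trivially lies within a single connected component, so the invariant holds immediately after \texttt{init}.

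For the inductive step I would go through the rules one at a time, assuming the invariant holds before the application. The only rule that enlarges the set of non-grey nodes is \texttt{move}: its interface node \texttt{2} passes from grey to blue. Here the left-hand side requires node \texttt{1} to be a blue (hence non-grey) root joined to node \texttt{2} by the edge \texttt{z}, and this edge persists in the host graph after the step; thus node \texttt{2} is adjacent to the already-non-grey node \texttt{1}. By the induction hypothesis node \texttt{1} lies in the common component shared by all non-grey nodes, so node \texttt{2} --- being adjacent to it --- lies in that same component, and the invariant is preserved. The remaining rules leave the set of non-grey nodes untouched: \texttt{next\_edge} and \texttt{ignore} only re-mark an edge (their node \texttt{2} carries the \texttt{any} mark or stays blue), \texttt{back} only re-marks an edge and shifts the root between two already-blue nodes, and \texttt{match} is the identity; in each case the invariant is inherited verbatim from the induction hypothesis.

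The only real subtlety, and the step I would be most careful about, is the bookkeeping behind the preliminary observation: one must confirm rule-by-rule that the underlying graph (and hence the component structure) is never altered, so that the phrase ``share a connected component'' refers to the same fixed partition at every point of the run. Granting this, the argument for \texttt{move} reduces to the single fact that its matched node \texttt{1} carries a blue mark on the left-hand side and is therefore genuinely non-grey, which is exactly what licenses the application of the induction hypothesis. Since the invariant holds after \texttt{init} and is preserved by every subsequent rule application, it holds throughout the entire execution, establishing the claim.
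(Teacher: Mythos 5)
Your proof is correct and follows essentially the same route as the paper: an induction over rule applications with \texttt{init} as the base case and \texttt{move} as the only rule that can extend the set of non-grey nodes, whose left-hand side forces the newly blue node to be adjacent to an already non-grey node. Your explicit preliminary check that every rule is structure-preserving (so the component partition is fixed) is left implicit in the paper's proof but is a sound and worthwhile addition.
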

\begin{proof}
    All nodes in the input graph are initially grey. Upon inspection of the rules, \texttt{init} and \texttt{move} are the only rules altering the mark of a grey node, where \texttt{init} is only called once at the start of the program. Let us show inductively that the invariant holds.

    If \texttt{init} is applied at the start of the program, the host graph will contain exactly one non-grey node, which does not violate the invariant. If \texttt{init} is not applied, \texttt{move} is never called, and the invariant remains true.

    Now, assume that the invariant is true at some stage of the execution. If \texttt{move} is applied, the non-grey node $v$ that turns blue has to be adjacent to a non-grey node. Thus, if all non-grey nodes shared a connected component prior to the application of \texttt{move}, $v$ has to be part of that component as well. Thus, the invariant remains satisfied.
\end{proof}
\begin{lem}[Termination of 
\texttt{is-connected}]
\label{lem:term-con}
    The program \texttt{is-connected} terminates on any input graph.
\end{lem}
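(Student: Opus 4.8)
The plan is to prove termination compositionally, following the control structure of \texttt{is-connected}. Since \texttt{Main} is assembled from single rule applications, the test \texttt{Check}, and the two as-long-as-possible loops \texttt{FORWARD!} and \texttt{DFS!}, and since each individual rule application and each test plainly terminates, it suffices to show that both loops terminate; \texttt{Main} is then a finite composition of terminating constructs. Here termination means halting, not succeeding: an invocation of \texttt{fail} in \texttt{Check} still counts as halting. The whole argument rests on recording how each rule affects edge marks, because my termination measures will be counts of edges of a fixed mark. The relevant facts are that \texttt{next\_edge} turns an unmarked edge red, \texttt{move} turns a red edge dashed, \texttt{ignore} turns a red edge blue, \texttt{back} turns a dashed edge blue, and \texttt{init} and \texttt{match} leave all edges unchanged; moreover no rule creates a new edge, reintroduces an unmarked edge, or has a blue edge on its left-hand side. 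Consequently, over the entire execution the number of unmarked edges is non-increasing and the number of blue edges is non-decreasing, i.e.\ blue edges are permanent.

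For \texttt{FORWARD!} I would take the number of unmarked edges as the measure. Each iteration of \texttt{FORWARD} begins with \texttt{next\_edge}, which consumes one unmarked edge, and since no rule ever restores an unmarked edge this count strictly decreases by one per iteration; being bounded below by $0$, the loop terminates after at most $|E_G|$ iterations, where $E_G$ is the edge set of $G$. For \texttt{DFS!} the per-mark counts are more delicate, because within one \texttt{DFS} iteration \texttt{FORWARD!} creates dashed edges (via \texttt{move}) while the following \texttt{back} destroys one, so neither the unmarked nor the dashed count alone decreases per iteration. I would therefore measure the number of \emph{non-blue} edges ($|E_G|$ minus the number of blue edges). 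Within a single \texttt{DFS} iteration, \texttt{FORWARD!} leaves this count unchanged at each \texttt{move} and decreases it at each \texttt{ignore}, hence it is non-increasing; a subsequent successful \texttt{back} then strictly decreases it by one. Since the only iteration that does not apply \texttt{back} is the one that triggers \texttt{break} and ends the loop, every non-final iteration strictly reduces a non-negative measure, and so \texttt{DFS!} terminates (after at most $|E_G|+1$ iterations).

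I expect the main obstacle to be precisely the \texttt{DFS!} loop: finding a measure that is immune to the oscillation of dashed edges between the forward and backtracking phases. The resolution is the observation that blue edges are permanent, so the non-blue count is globally monotone, is strictly reduced by every backtracking step, and is never pushed upward by the forward phase. Two minor points I would check carefully are that \texttt{FORWARD!} really never increases the non-blue count (immediate from the rule effects above) and that it always halts before each \texttt{back} attempt, which follows from the \texttt{FORWARD!} argument together with the fact that a \gp{} loop \texttt{P!} cannot fail and terminates as soon as \texttt{P} becomes inapplicable.
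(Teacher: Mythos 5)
Your proposal is correct and follows essentially the same route as the paper's proof: the number of unmarked edges as the termination measure for \texttt{FORWARD!} (strictly decreased by \texttt{next\_edge}, never restored by any rule), and the number of non-blue edges as the measure for \texttt{DFS!} (strictly decreased by each successful \texttt{back}, never increased since blue edges are permanent). Your write-up is somewhat more explicit about why \texttt{FORWARD!} cannot increase the non-blue count and about the per-iteration accounting, but the decomposition and both measures coincide with the paper's argument.
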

\begin{proof}
    The program \texttt{is-connected} contains two looping bodies: \texttt{DFS!} and \texttt{FORWARD!}. We first show termination of \texttt{FORWARD!}. Consider the measure $\#_1X$ consisting of the number of unmarked edges in the host graph $X$. The rule \texttt{next\_edge} is invoked at the beginning of \texttt{FORWARD}, and if it fails to match, the loop breaks. Clearly, an application of \texttt{next\_edge} reduces the measure $\#_1$ because the edge in the left-hand side is unmarked, and is marked red in the right-hand side. Since no rule in \texttt{FORWARD} creates or unmarks an edge, \texttt{FORWARD!} terminates. 
    
    We now show that the loop \texttt{DFS!} terminates. This time, define $\#_2X$ to be the number of non-blue edges in the host graph $X$. Suppose that $G \Rightarrow_{\texttt{FORWARD!}} H$. Since no rule in \texttt{FORWARD!} increases the number of non-blue edges, we have $\#_2G \geq \#_2H$.
    
    Let now $H \Rightarrow_{\texttt{back}} M$. Clearly, $\#_2H > \#_2M$ since the rule marks a dashed edge blue. As we know that \texttt{FORWARD!} is terminating, it follows that \texttt{DFS!} is also terminating.
\end{proof}
\begin{prop}
\label{prop:con-dashed}
    Throughout the execution of \texttt{DFS!}, (1) there is exactly one root node in the host graph, and (2) all dashed edges in the host graph form a directed path such that every node on the path is blue, and the endpoint of the path is the root node.
\end{prop}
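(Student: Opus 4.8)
The plan is to prove the statement as an invariant of \texttt{DFS!} by induction on the number of rule applications performed since \texttt{DFS!} was entered. First I would observe which rules can possibly affect the three quantities mentioned in the statement: the set of dashed edges, the marks of nodes, and the identity of the (unique) root. During \texttt{DFS!} only the rules \texttt{next\_edge}, \texttt{move}, \texttt{ignore} and \texttt{back} are applied (\texttt{init} runs before the loop and \texttt{match} only afterwards, in \texttt{Check}). Of these, \texttt{next\_edge} merely recolours an unmarked edge red and \texttt{ignore} merely recolours a red edge blue; neither creates or deletes a dashed edge, alters any node mark, or moves the root. Hence these two rules preserve the invariant trivially, and the real work is confined to \texttt{move} and \texttt{back}.

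For the base case, note that \texttt{DFS!} is only entered after \texttt{init} has succeeded in \texttt{try init then (DFS!; Check)}, leaving the host graph with exactly one blue, rooted node and no dashed edges. The empty set of dashed edges forms the trivial path consisting of that single blue node, whose sole endpoint is the root, so the invariant holds initially.

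For the inductive step I would assume the invariant at some intermediate state, where the dashed edges form a simple path $v_0, v_1, \ldots, v_k$ of blue nodes with $v_k$ the root, and treat the two substantive rules. For $G \Rightarrow_{\texttt{move}} H$: the rule matches the root $v_k$ together with a red edge to a grey node $g$, recolours $g$ blue, makes $g$ the new root, and turns the matched edge dashed. Since every node on the path is blue by hypothesis, the grey node $g$ cannot already lie on the path; therefore $g$ is a fresh node carrying no prior dashed edge, and appending the new dashed edge yields the simple path $v_0, \ldots, v_k, g$, all blue, with endpoint $g$ --- precisely the new root. For $G \Rightarrow_{\texttt{back}} H$: because the root $v_k$ is an endpoint of the path, it is incident with exactly one dashed edge, namely $v_{k-1}\,v_k$; this is the only edge \texttt{back} can match, and it recolours that edge blue and relocates the root to the other endpoint $v_{k-1}$, leaving the shorter path $v_0, \ldots, v_{k-1}$ of blue nodes with the new root $v_{k-1}$ as its endpoint. (If $k = 0$ the root has no incident dashed edge, \texttt{back} fails, and \texttt{DFS!} breaks, so this case does not arise inside the loop.) As neither rule changes the marks of the remaining path nodes, the invariant is restored in both cases, completing the induction.

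The main obstacle I anticipate is justifying the two structural claims that keep the dashed edges a single non-branching path: that \texttt{move} always extends the path at its root endpoint rather than closing a cycle or creating a branch, and that \texttt{back} removes exactly the edge at the root endpoint. Both hinge on the grey/blue distinction maintained by the invariant --- an unvisited target is necessarily grey and hence off the path, while the root, being an endpoint, carries a unique dashed edge. I would also be careful about the orientation convention: since the rule edges match host edges of either direction, I would read ``directed path'' as the sequence of distinct blue nodes linked by dashed edges with the root as the distinguished (deepest) endpoint, and verify that \texttt{move} and \texttt{back} respectively push and pop this endpoint, so that the claim about the endpoint being the root is preserved at every step.
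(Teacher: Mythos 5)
Your proof is correct and follows essentially the same route as the paper's: an induction over the execution steps of \texttt{DFS!}, with the base case established after \texttt{init} and the inductive step showing that \texttt{move} extends the dashed path at the root endpoint while \texttt{back} contracts it there. You in fact supply details the paper leaves implicit --- notably that the grey mark of \texttt{move}'s target keeps it off the (all-blue) path, so no cycle or branch can form, and that the root's position as path endpoint forces \texttt{back} to match exactly one dashed edge --- which strengthens rather than departs from the published argument.
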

\begin{proof}
    (1) Initially, every node of the host graph is non-rooted. Assuming that the input graph is non-empty, the main program applies the rule \texttt{init} to root exactly one node. Then, observe that every rule in the procedure \texttt{DFS} preserves the number of roots in the host graph. Therefore, (1) follows.
    
    (2) Upon inspection of the rules, \texttt{move} is the only rule dashing edges in the program. We proceed by induction on the execution steps of the program.

    Initially, before \texttt{move} is applied, no dashed edges exist in the host graph. The rule \texttt{init} creates a non-grey root node, and \texttt{back} cannot be applied. Thus, the invariant holds vacuously. Now, assume the invariant holds at some stage of execution. Clearly, the rule \texttt{move} extends the dashed path by adding one edge, and \texttt{back} reduces the dashed path by removing one edge. Both operations preserve the invariant and ensure the dashed path remains unique, its nodes remain blue, and the endpoint remains the root. Therefore, the invariant is maintained throughout the execution.
\end{proof}
\begin{prop}
\label{prop:every-node-non-grey-con}
    Let $G$ be a non-empty input graph to which \texttt{init} is applied. Then, the graph resulting from \texttt{DFS!} contains a connected component in which every node is non-grey.
\end{prop}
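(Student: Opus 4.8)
The plan is to identify the connected component $C$ that contains the node rooted by \texttt{init} and to show that, once \texttt{DFS!} has terminated, every node of $C$ is non-grey; since no rule of \texttt{is-connected} adds or deletes nodes or edges, $C$ is a genuine connected component of the output graph, so exhibiting it proves the claim. By Proposition~\ref{prop:all-same-comp-con} all non-grey nodes lie in a single component, and the start node is non-grey, so $C$ is precisely that component. Termination of \texttt{DFS!} is guaranteed by Lemma~\ref{lem:term-con}, so the terminal state is well defined and I only have to analyse it.

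I would first record two structural facts about edge marks that hold throughout the run. First, every \emph{blue} edge has both endpoints blue: blue edges are produced only by \texttt{ignore} and \texttt{back}, and in both rules the two incident nodes are already blue; since no rule ever turns a blue node grey, this property is permanent. Second, each red edge is transient: inspecting the rules shows that every node stays grey or blue throughout, so the red edge created by \texttt{next\_edge} is immediately consumed by \texttt{move} (grey target) or \texttt{ignore} (blue target), and hence no red edge survives a completed \texttt{FORWARD} step.

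Next I would establish the key \emph{saturation} property: at termination every edge incident to a blue node is blue. There are no red edges (by transience) and no dashed edges at termination, since \texttt{DFS!} halts because \texttt{back} fails, i.e.\ the root has no incoming dashed edge, and Proposition~\ref{prop:con-dashed} then forces the dashed path to be empty, so the root is the only ``open'' node and it carries no unmarked edge because \texttt{FORWARD!} has just failed on it. It remains to see that every other blue node (those already ``backed out of'') also has no unmarked incident edge. For this I would track the dashed path as a DFS stack: \texttt{move} pushes a node and \texttt{back} pops the current root, turning its parent edge blue; the moment \texttt{back} removes a node, that node is the root and \texttt{FORWARD!} has just been exhausted on it, so it has no unmarked incident edge, and because \texttt{move} only targets grey nodes and the node's parent edge is now blue, it can never be re-rooted, while no rule ever unmarks or creates an edge. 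Thus its ``no unmarked incident edge'' status is permanent, giving the saturation property for all blue nodes.

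Finally I would close the argument by contradiction. If some node of $C$ were grey, then since $C$ is connected and contains the blue start node, a path inside $C$ from the start node to that grey node must contain an edge $e$ joining a blue node to a grey node. By the saturation property $e$ is not unmarked, and it is neither red nor dashed, hence blue; but then the first structural fact forces both endpoints of $e$ to be blue, contradicting that one endpoint is grey. Hence $C$ contains no grey node, as required. I expect the main obstacle to be the saturation property for already-explored nodes: it hinges on the stack discipline of \texttt{move}/\texttt{back} and on the irreversibility of exploration (a popped node can never be pushed again, because its parent edge is blue and \texttt{move} requires a grey target), and stating and maintaining this invariant cleanly is the delicate part, whereas the edge-mark bookkeeping and the final connectivity argument are routine.
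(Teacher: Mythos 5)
Your proof is correct, and it reaches the conclusion by a genuinely different decomposition than the paper's. The paper also argues by contradiction on an adjacent grey/non-grey pair $(v,u)$, but it then performs a case analysis on the execution history of the non-grey node $u$ --- still rooted at termination, unrooted by \texttt{back}, or unrooted by \texttt{move} --- and in each case exhibits a rule that would still be applicable (\texttt{next\_edge} followed by \texttt{move}, or \texttt{back}), contradicting the termination of \texttt{FORWARD!} or \texttt{DFS!}; in particular, Proposition \ref{prop:con-dashed} is used there in the opposite direction to yours (a \texttt{move}-unrooted node forces a non-empty dashed path, hence \texttt{back} is still applicable). You instead package the same underlying facts as standing invariants --- blue edges have blue endpoints, red edges are transient, and the ``saturation'' property that a popped node permanently carries no unmarked incident edge --- deduce that at termination the dashed path is empty and every blue node is either the final root or popped, and then close with a purely mark-based contradiction (an edge joining a blue node to a grey node would have to be blue). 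The case-analysis route is shorter; your invariant route is more self-contained, because it makes explicit two things the paper uses tacitly: why the edge between a non-grey and a grey node must still be unmarked (exactly your Facts~1 and~2; your transience fact is the paper's Proposition \ref{prop:red-edges-con}, proved there only later for the complexity analysis), and why every blue non-root node at termination has in fact been backed out of. One small point worth tightening: to exclude re-rooting of a popped node via \texttt{back}, you need that it has \emph{no} incident dashed edge at all, not merely that its parent edge has turned blue; this follows from Proposition \ref{prop:con-dashed}, since the popped node was the endpoint of the dashed path and hence incident to exactly that one dashed edge.
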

\begin{proof}
    The proposition is trivially true for an input graph consisting of a single node: \texttt{init} marks the node blue, and no rule marks a non-grey node grey. Assume now that the input graph contains at least two nodes. By Proposition \ref{prop:all-same-comp-con}, all non-grey nodes belong to the same connected component.\\ 
    
    \noindent \textit{Claim.} Each connected component containing at least one non-grey node does not contain any grey node. \\
    \noindent \textit{Proof of Claim.} For the sake of contradiction, suppose the connected component containing all non-grey nodes in the output graph also includes at least one grey node. Let $v$ and $u$ be such that $v$ is grey, $u$ is non-grey, and $v$ is adjacent to $u$, which must occur if they share a connected component.

    Upon inspection of the rules, \texttt{init} and \texttt{move} are the only rules altering the mark of a grey node; both mark a grey node blue and root it. Thus, $v$ must have been rooted at some point during the execution. Let us consider two cases based on whether $v$ remains rooted in the output graph.\\
    
    \textit{Case 1:}
        $v$ is rooted in the output graph. The procedure \texttt{DFS!} terminates only if \texttt{back} fails to apply, which occurs when \texttt{next\_edge} fails to find an unmarked edge. However, $u$, being non-grey, shares an unmarked edge with $v$. Since $u$ is blue and $v$ is rooted, the rules \texttt{next\_edge} and \texttt{move} must have applied, which contradicts the assumption that $v$ is grey in the output graph. \\
        
    \textit{Case 2:}
        $v$ is not rooted in the output graph. Throughout the execution of \texttt{DFS!}, by Proposition~\ref{prop:con-dashed}(1), there is exactly one root node at any time.
        Since $v$ is not rooted, it must have been unrooted by either \texttt{move}, which dashed an edge incident with $v$, or \texttt{back}. Two scenarios arise:
        \begin{enumerate}
            \item If \texttt{back} was applied to $v$, it would have unrooted $v$ only after no rules in \texttt{FORWARD} were applicable. However, since $v$ is adjacent to $u$, and $u$ shares an unmarked edge, \texttt{next\_edge} and \texttt{move} would have applied, turning $u$ non-grey. This contradicts the assumption.
            
            \item If \texttt{back} was not applied to $v$, $v$ must be part of a path of dashed edges with a root node as the endpoint (by Proposition \ref{prop:con-dashed}). The termination of \texttt{DFS!} implies that the rule \texttt{back} was no longer applicable, which is, again, a contradiction to the fact that there exists some root node adjacent to a blue node sharing a dashed edge (which follows from the fact that $v$ must be incident with at least one dashed edge).
    \end{enumerate}
    Given that both cases contradict the assumption, at the termination of \texttt{try init then DFS!} on a non-empty graph, there must be a connected component in which every node is non-grey.
\end{proof}
\begin{thm}[Correctness of \texttt{is-connected}]
    The program \texttt{is-connected} is totally correct\footnote{A program is \emph{totally correct} if it terminates on all input graphs and produces an output graph satisfying the program's specification.} with respect to the specification at the beginning of Subsection \ref{ss:problem-con}.
\end{thm}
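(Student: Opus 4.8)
The plan is to split total correctness into its two constituents. Termination on every input graph is already supplied by Lemma~\ref{lem:term-con}, so the work lies entirely in partial correctness: on an arbitrary input graph, the program must invoke \texttt{fail} exactly when the graph is disconnected, and otherwise return a host graph isomorphic to the input up to marks. I would first dispose of the structural half of the output specification once and for all. By inspecting the five rules of \texttt{is-connected}, one sees that none of them creates or deletes a node or an edge; each rule preserves the underlying graph and only alters node marks, edge marks, and root flags. Hence, for \emph{any} terminating execution the output is isomorphic to the input up to marks (and root status), and the remaining task reduces to pinning down precisely when control reaches the \texttt{fail} command.

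Next I would case-split on whether the input is empty. If it is, \texttt{init} cannot match, the \texttt{try} construct leaves the graph untouched, and the program returns the empty graph, which is vacuously connected, as required. If the input is non-empty, \texttt{init} applies and roots a single blue node, after which \texttt{DFS!} runs. At this point I would invoke the two central invariants: Proposition~\ref{prop:every-node-non-grey-con} shows that the graph produced by \texttt{DFS!} contains a connected component all of whose nodes are non-grey, while Proposition~\ref{prop:all-same-comp-con} shows that all non-grey nodes lie in a single connected component. Combining them, after \texttt{DFS!} the non-grey nodes constitute exactly the connected component $C$ containing the node chosen by \texttt{init}, and every node outside $C$ remains grey.

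From this characterisation the behaviour of \texttt{Check = if match then fail} reads off directly. The test \texttt{match} succeeds precisely when a grey node survives \texttt{DFS!}, i.e.\ precisely when some node lies outside $C$, which by the previous paragraph occurs if and only if the graph has more than one connected component. Therefore, if the input is connected, $C$ is the whole graph, no grey node remains, \texttt{match} fails, \texttt{Check} does nothing, and the program returns the structure-preserving output graph; if the input is disconnected, a grey node persists, \texttt{match} succeeds, and the program fails. Both outcomes match the specification of Subsection~\ref{ss:problem-con}, which establishes partial correctness and hence the theorem.

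The genuine difficulty does not reside in the assembly above but in the two invariants it rests upon, which are already proved: the \emph{completeness} direction, that \texttt{DFS!} explores the entire starting component (Proposition~\ref{prop:every-node-non-grey-con}), and the \emph{soundness} direction, that the search never escapes that component (Proposition~\ref{prop:all-same-comp-con}). Given these, the only remaining care is in faithfully matching the GP\,2 control-flow semantics of \texttt{try}/\texttt{then}, the loop \texttt{DFS!}, and \texttt{if}/\texttt{then}/\texttt{fail} to the invariants, handling the empty-graph corner case, and recording that marks and root flags are the sole graph attributes any rule modifies, so that ``isomorphic up to marks'' follows uniformly.
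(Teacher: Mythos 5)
Your proof is correct and follows essentially the same route as the paper's: termination via Lemma~\ref{lem:term-con}, a separate empty-graph case, and then combining Propositions~\ref{prop:all-same-comp-con} and~\ref{prop:every-node-non-grey-con} to conclude that after \texttt{DFS!} the grey nodes are exactly those outside the explored component, so \texttt{Check} invokes \texttt{fail} precisely on disconnected inputs. Your explicit observation that every rule preserves the underlying graph structure (so the output is isomorphic to the input up to marks) is a small addition that the paper leaves implicit, but it does not change the argument.
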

\begin{proof}
    Observe that each rule can only change marks or rootedness, and therefore the resulting graph is isomorphic to the input graph up to marks.
    
    Termination follows from Lemma \ref{lem:term-con}. Let $G$ be the input graph. First, assume that the input graph $G$ is empty. Since the rule \texttt{init} is not applicable, the program immediately terminates and no \texttt{fail} command is invoked.

    Now, assume $G$ contains at least one node. Consider the graph $H$ as the output graph of \texttt{try init then DFS!} on $G$, i.e. $G \Rightarrow_{\texttt{init}} I \Rightarrow_{\texttt{DFS!}} H$ for some intermediate graph $I$. By Proposition \ref{prop:every-node-non-grey-con}, there exists some connected component in $H$ where every node is non-grey, let us call it $C$, and by Proposition \ref{prop:all-same-comp-con}, it follows that $H-C$ contains no non-grey node. Therefore, $G$ is not connected if and only if $H-C$ is non-empty. The procedure \texttt{Check} is executed after \texttt{DFS!}, and verifies that no grey node, which must belong to $H-C$ should one exist, exists. If a grey node exists, $H-C$ is non-empty and the program fails, indicating that $G$ is not connected. If none exists, the program terminates without invoking the \texttt{fail} command, indicating that $G$ is connected.
\end{proof}

\subsection{Proof of Complexity}

\begin{prop}
    \label{prop:red-edges-con}
    Throughout the execution of \texttt{is-connected} on an input graph, there is at most one red edge in the host graph.
\end{prop}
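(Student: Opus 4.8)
The plan is to proceed by induction on the number of rule applications in an arbitrary execution of \texttt{is-connected}, tracking the number of red edges in the host graph and showing it never exceeds $1$. The base case is immediate: by Definition \ref{def:input-con} every edge of an input graph is unmarked, so the host graph starts with zero red edges.

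First I would classify the six rules by their effect on red edges. Inspecting the rules, only \texttt{next\_edge} creates a red edge, recolouring a single unmarked edge red; the rules \texttt{move} and \texttt{ignore} each \emph{remove} one red edge (recolouring it dashed, resp. blue); and \texttt{init}, \texttt{back}, and \texttt{match} neither create an edge nor recolour any edge to or from red. Hence the only step that can raise the red-edge count is an application of \texttt{next\_edge}, and each such application raises it by exactly one. The entire content of the proposition therefore reduces to showing that \texttt{next\_edge} only ever fires in a state that already contains no red edge.

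The key is the control structure \texttt{FORWARD = next\_edge; \{move, ignore\}}: within one pass through \texttt{FORWARD}, the edge reddened by \texttt{next\_edge} is immediately consumed by the ensuing choice \texttt{\{move, ignore\}}. To guarantee that this choice cannot fail and leave a persisting red edge, I would first record the colouring invariant that every node is grey or blue throughout the execution — the only mark-altering rules are \texttt{init} and \texttt{move}, both of which produce blue nodes (this is the same observation used in the proof of Proposition \ref{prop:all-same-comp-con}), and the input is entirely grey. Consequently the non-root endpoint $y$ of the freshly reddened edge is grey or blue: if $y$ is grey then \texttt{move} applies, and if $y$ is blue then \texttt{ignore} applies, so in either case the red edge is removed before control leaves \texttt{FORWARD}.

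Combining these gives the inductive step. Between two consecutive applications of \texttt{next\_edge} the only intervening rules are the \texttt{\{move, ignore\}} consuming the previous red edge, together with applications of \texttt{back} (and at most one \texttt{init}), none of which create a red edge; so whenever \texttt{next\_edge} fires the count is $0$, rising to $1$ only until the immediately following \texttt{\{move, ignore\}} returns it to $0$. Thus at every point at most one red edge is present. The main obstacle I anticipate is making the ``immediately consumed'' step watertight against \gp{}'s operational semantics for sequential composition and rule-set choice — specifically, ruling out that \texttt{\{move, ignore\}} fails so that a red edge survives into a later \texttt{FORWARD!} iteration where a second \texttt{next\_edge} could add another — which is precisely where the grey-or-blue colouring invariant does the essential work.
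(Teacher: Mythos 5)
Your proof is correct and follows essentially the same route as the paper's: identify \texttt{next\_edge} as the only rule creating red edges, then argue that the following \texttt{\{move, ignore\}} call must succeed (since node \texttt{2} can only be grey or blue) and thus immediately removes the red edge. Your version merely makes explicit the induction and the grey-or-blue colouring invariant that the paper's terser proof leaves implicit.
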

\begin{proof}
    Upon inspection of the rules, \texttt{next\_edge} is the only rule that marks an edge red in the host graph (note that the program is structure-preserving). It is easy to see that when \texttt{next\_edge} is applied, either \texttt{move} or \texttt{ignore} has to apply, as node \texttt{2} of \texttt{next\_edge} can either be blue or grey. Therefore, it follows that the number of red edges in the host graph is bounded to $1$.
\end{proof}

\begin{thm}[Complexity of \texttt{is-connected}]

The program \texttt{is-connected} terminates in time linear in the size of the input graph.
    \label{the:complexity-con}
\end{thm}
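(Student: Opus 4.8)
The plan is to bound the total running time as the product of the number of rule-application attempts (both successful and failed) and the worst-case cost of a single attempt, plus the $\mathrm{O}(n)$ overhead for parsing the input and printing the output graph (Figure \ref{fig:complexity-assumptions}). I would first argue that every individual match attempt, together with its associated data-structure update, runs in constant time, and then show that the total number of attempts is $\mathrm{O}(|V_G| + |E_G|)$, where $E_G$ denotes the edge set of the input graph $G$. Since \texttt{is-connected} is structure-preserving, neither $|V_G|$ nor $|E_G|$ changes during execution, which keeps the counting clean.

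For the per-attempt bound, the key observation is that every rule of \texttt{is-connected} matches (an incident edge of) the unique root node. After \texttt{init}, exactly one root exists at any time, since \texttt{move} and \texttt{back} relocate the root rather than duplicate it; hence \texttt{firstHostRootNode} returns it in $\mathrm{O}(1)$, and the bounded-root condition is met. Having fixed the root, each rule then needs an incident edge of a prescribed mark and orientation: \texttt{next\_edge} seeks an unmarked edge, \texttt{move} and \texttt{ignore} seek the (by Proposition \ref{prop:red-edges-con}, unique) red edge, and \texttt{back} seeks a dashed edge. Under the new graph data structure each such edge — or a certificate of its absence — is located by a constant number of calls to \texttt{firstInEdge(m)}, \texttt{firstOutEdge(m)} and \texttt{firstLoop(m)}, all of which are $\mathrm{O}(1)$. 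Crucially, node \texttt{2} of \texttt{next\_edge} carries the wildcard (\texttt{any}) mark, so once the incident edge is found its other endpoint matches without any search, and distinguishing \texttt{move} from \texttt{ignore} only requires reading that endpoint's mark via \texttt{getMark}. Finally, the data-structure updates triggered by a successful application (moving a node or edge between mark-indexed lists, toggling root status) are themselves $\mathrm{O}(1)$ with doubly-linked lists. Thus every successful match and every failed attempt costs $\mathrm{O}(1)$.

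For the counting, I would use the monotonicity of marks. Edges evolve only along unmarked $\to$ red $\to$ \{dashed, blue\} and dashed $\to$ blue, and no rule ever restores a mark; nodes evolve only grey $\to$ blue. Consequently \texttt{init} applies at most once; \texttt{next\_edge} succeeds at most $|E_G|$ times (one per edge leaving the unmarked state); \texttt{ignore} succeeds at most $|E_G|$ times; and \texttt{move} succeeds at most $|V_G|-1$ times (one per grey node it turns blue). Each dashed edge is created by exactly one \texttt{move} and destroyed by at most one \texttt{back}, so \texttt{back} succeeds at most $|V_G|-1$ times. The number of iterations of \texttt{DFS!} equals the number of successful \texttt{back} applications plus one, hence is $\mathrm{O}(|V_G|)$; this in turn bounds the failed \texttt{next\_edge} attempts (exactly one per \texttt{FORWARD!} exit) and the single failed \texttt{back} attempt that triggers \texttt{break}. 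Adding the at-most-one application of \texttt{match}, the total number of attempts is $\mathrm{O}(|V_G| + |E_G|) = \mathrm{O}(|G|)$.

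Combining the two bounds yields a running time of $\mathrm{O}(|G|)$. The counting is essentially mechanical once mark monotonicity is recorded; the genuine obstacle — and the place where the compiler enhancements of Sections \ref{s:enhancement1} and \ref{s:enhancement2} are indispensable — is the constant-time claim for each match. In particular I would take special care to justify that the \emph{failures} are detected in $\mathrm{O}(1)$: \texttt{next\_edge} fails precisely when the root's unmarked-edge lists are all empty, and \texttt{back} fails precisely when its dashed-edge lists are empty, each of which is read in constant time rather than by scanning edges of incompatible marks, as the 2020 compiler would have had to do on unbounded-degree graphs.
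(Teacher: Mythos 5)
Your proposal is correct and follows essentially the same route as the paper's proof: a constant cost per matching attempt, guaranteed by the mark-indexed node and edge lists together with the uniqueness of the root and of the red edge (Proposition \ref{prop:red-edges-con}), followed by a linear bound on the number of calls obtained from mark monotonicity, with only cosmetic bookkeeping differences (e.g.\ you bound successful \texttt{back} applications by the at most $|V_G|-1$ dashed edges that \texttt{move} creates, where the paper uses the at most $|E_G|$ edges that turn blue). The one spot where you are thinner than the paper is \texttt{back}: a single $\mathrm{O}(1)$ access to the root's dashed-edge lists yields only a \emph{candidate} edge, and to conclude that this candidate immediately gives a successful match (rather than forcing further attempts on other dashed edges or failing the endpoint's mark check) you also need Proposition \ref{prop:con-dashed} --- the dashed edges form a path of blue nodes ending at the root, so the root is incident to at most one dashed edge, with blue other endpoint --- which is precisely what the paper invokes at this point.
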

\begin{proof}
We now prove that the program \texttt{is-connected} terminates in time linear in the size of the input graph. The argument proceeds by bounding both the number of matching attempts and the number of rule applications. More specifically, we show (1) that for every rule, there is at most one matching attempt; then, (2) we derive an expression for the number of calls during the execution of the program for each rule; finally, (3) we combine those results to derive the time complexity of \texttt{is-connected}.

\begin{enumerate}

\item We consider the matching behaviour of each rule. The rule \texttt{init} attempts to match a single grey node. If the input graph is non-empty, this match succeeds immediately on the first node; otherwise, it fails without making further attempts. Hence, there is at most one matching attempt for \texttt{init}. The rule \texttt{match} also attempts to find a grey node using the compiler procedure \texttt{firstHostNode(m)}, which operates in constant time. As a result, \texttt{match} also makes only one matching attempt.

Since there is at most one root in the host graph (\texttt{init} is the only rule creating or deleting a root), node \texttt{1} of \texttt{next\_edge}, \texttt{move} and \texttt{ignore}, and node \texttt{2} of \texttt{back} match in constant time. The rule \texttt{next\_edge} matches any unmarked edge incident to the root and incident to an any-marked node. Since any node adjacent to the root is marked, and non-loop edges are stored in distinct lists with respect to their marks, there is at most one matching attempt for \texttt{next\_edge}. An analogous argument can be made for all rules beside \texttt{init} and \texttt{match}, as it is known from Proposition \ref{prop:red-edges-con} that there is at most one red edge in the host graph, and Proposition \ref{prop:con-dashed} establishes that a root node is incident to at most one dashed edge, hence limiting the number of possible matches to a single edge. The rule \texttt{match} requires the matching algorithm to find a grey node, which the procedure \texttt{firstHostNode(m)} of the updated compiler (Figure \ref{fig:complexity-assumptions}) executes in constant time.

\item We now analyse the number of times each rule is called during execution. For the purpose of this proof, let $n$ and $m$ be the number of nodes and edges, respectively. 

The rule \texttt{init} is invoked once at the beginning of execution and either succeeds (if the graph is non-empty) or fails. The rule \texttt{match} is also called exactly once in the procedure \texttt{Check}.

The rule \texttt{back} marks an edge blue and is called within the depth-first search loop. Since blue edges retain their mark, each edge can only be marked blue once. Thus, \texttt{back} may be successfully applied at most $m$ times. The loop containing \texttt{back} terminates when it fails to apply, so there is at most one unsuccessful application, giving us a total of at most $m+1$ calls.

The rules \texttt{move} and \texttt{ignore} are called only after a successful application of \texttt{next\_edge}. The rule \texttt{move} marks grey nodes blue, and since non-grey nodes do not revert to grey, at most $n-1$ applications are needed (i.e. one for each node except the one initially marked by \texttt{init}). The rule \texttt{ignore}, like \texttt{back}, marks edges blue, and thus can be applied at most $m$ times.

The rule \texttt{next\_edge} selects and marks an unmarked edge. Since edges cannot revert to being unmarked, it can be successfully applied at most $m$ times. Unsuccessful applications occur when no matching edge remains, which leads to a call to \texttt{back}. As previously noted, there are at most $m+1$ such calls, resulting in at most $m$ unsuccessful applications of \texttt{next\_edge}. Therefore, the total number of calls to \texttt{next\_edge} is bounded by $2m \in \mathrm{O}(m)$.

\item Since each rule application and match is performed in constant time, and the total number of applications across all rules is $\mathrm{O}(n + m)$, it follows that the overall time complexity of \texttt{is-connected} is linear in the size of the input graph (i.e. the number of nodes and edges).\qedhere 
\end{enumerate}
\end{proof}
\noindent 
Figure \ref{fig:is-con-bench} shows the empirical timings of \texttt{is-connected} on the graph classes of Figure \ref{fig:graph-class-1b} to Figure \ref{fig:graph-class-6b}. The measured runtimes do not account for graph parsing, building and printing, as these operations have a linear-time complexity with respect to the input size (Figure \ref{fig:complexity-assumptions}). Compilation time is also not included. Note that on discrete graphs, the program runs in constant time because no rule in \texttt{DFS} is applicable, and both \texttt{init} and \texttt{match} are applied only once.
\\

\begin{center}\begin{center}
    \resizebox{0.82\linewidth}{!}{\centering
\begin{minipage}{3.93cm}
\centering
\begin{tikzpicture}[scale=0.7]
	\node (a) at (-1.500,1.333)  [draw,circle,thick,fill=gray!50] {\,};
	\node (b) at (0.000,1.333)   [draw,circle,thick,fill=gray!50] {\,};
	\node (c) at (1.500,1.333)   [draw,circle,thick,fill=gray!50] {\,};
	\node (d) at (-1.500,0.000)  [draw,circle,thick,fill=gray!50] {\,};
	\node (e) at (0.000,0.000)   [draw,circle,thick,fill=gray!50] {\,};
	\node (f) at (1.500,0.000)   [draw,circle,thick,fill=gray!50] {\,};
	\node (g) at (-1.500,-1.333) [draw,circle,thick,fill=gray!50] {\,};
	\node (h) at (0.000,-1.333)  [draw,circle,thick,fill=gray!50] {\,};
	\node (i) at (1.500,-1.333)  [draw,circle,thick,fill=gray!50] {\,};
	
	\draw (a) edge[->, thick] (b)
	      (a) edge[->, thick] (d)
	      (b) edge[->, thick] (c)
	      (b) edge[->, thick] (e)
	      (c) edge[->, thick] (f)
	      (d) edge[->, thick] (e)
	      (d) edge[->, thick] (g)
	      (e) edge[->, thick] (f)
	      (e) edge[->, thick] (h)
	      (f) edge[->, thick] (i)
	      (g) edge[->, thick] (h)
	      (h) edge[->, thick] (i);
\end{tikzpicture}
\captionof{figure}{Square grid.}
\vspace{0.5cm}
\label{fig:graph-class-1b}
\end{minipage}
\begin{minipage}{3.93cm}
\centering
\begin{tikzpicture}[scale=0.7]
	\node (a) at (0.000,1.333)   [draw,circle,thick,fill=gray!50] {\,};
	\node (b) at (1.333,0.000)   [draw,circle,thick,fill=gray!50] {\,};
	\node (c) at (-1.333,0.000)  [draw,circle,thick,fill=gray!50] {\,};
	\node (d) at (2.000,-1.333)  [draw,circle,thick,fill=gray!50] {\,};
	\node (e) at (0.666,-1.333)  [draw,circle,thick,fill=gray!50] {\,};
	\node (f) at (-0.666,-1.333) [draw,circle,thick,fill=gray!50] {\,};
	\node (g) at (-2.000,-1.333) [draw,circle,thick,fill=gray!50] {\,};
	
	\draw (a) edge[->, thick] (b)
	      (a) edge[->, thick] (c)
	      (b) edge[->, thick] (d)
	      (b) edge[->, thick] (e)
	      (c) edge[->, thick] (f)
	      (c) edge[->, thick] (g);
\end{tikzpicture}
\captionof{figure}{Binary tree.}
\vspace{0.5cm}
\label{fig:graph-class-1c}
\end{minipage}
\begin{minipage}{3.93cm}
\centering
\begin{tikzpicture}[scale=0.7]
	\node (a) at (0.000,0.000)   [draw,circle,thick,fill=gray!50] {\,};
	\node (b) at (0.000,1.333)   [draw,circle,thick,fill=gray!50] {\,};
	\node (c) at (0.943,0.943)   [draw,circle,thick,fill=gray!50] {\,};
	\node (d) at (1.333,0.000)   [draw,circle,thick,fill=gray!50] {\,};
	\node (e) at (0.943,-0.943)  [draw,circle,thick,fill=gray!50] {\,};
	\node (f) at (0.000,-1.333)  [draw,circle,thick,fill=gray!50] {\,};
	\node (g) at (-0.943,-0.943) [draw,circle,thick,fill=gray!50] {\,};
	\node (h) at (-1.333,0.000)  [draw,circle,thick,fill=gray!50] {\,};
	\node (i) at (-0.943,0.943)  [draw,circle,thick,fill=gray!50] {\,};
	
	\draw (a) edge[->, thick] (b)
	      (c) edge[->, thick] (a)
	      (a) edge[->, thick] (d)
	      (e) edge[->, thick] (a)
	      (a) edge[->, thick] (f)
	      (g) edge[->, thick] (a)
	      (a) edge[->, thick] (h)
	      (i) edge[->, thick] (a);
\end{tikzpicture}
\captionof{figure}{Star graph.}
\vspace{0.5cm}
\label{fig:graph-class-2a}
\end{minipage}
\begin{minipage}{3.93cm}
\centering
\begin{tikzpicture}[scale=0.7]
	\node (a) at (0.0000,1.3333)   [draw,circle,thick,fill=gray!50] {\,};
	\node (b) at (1.1545,0.6666)   [draw,circle,thick,fill=gray!50] {\,};
	\node (c) at (1.1545,-0.6666)  [draw,circle,thick,fill=gray!50] {\,};
	\node (d) at (0.0000,-1.3334)  [draw,circle,thick,fill=gray!50] {\,};
	\node (e) at (-1.1545,-0.6666) [draw,circle,thick,fill=gray!50] {\,};
	\node (f) at (-1.1545,0.6666)  [draw,circle,thick,fill=gray!50] {\,};
	
	\draw (a) edge[->, thick] (b)
	      (b) edge[->, thick] (c)
	      (c) edge[->, thick] (d)
	      (d) edge[->, thick] (e)
	      (e) edge[->, thick] (f)
	      (f) edge[->, thick] (a);
\end{tikzpicture}
\captionof{figure}{Cycle graph.}
\vspace{0.5cm}
\label{fig:graph-class-2b}
\end{minipage}}
    \resizebox{0.82\linewidth}{!}{\begin{minipage}{6.9cm}
\centering
\begin{tikzpicture}[scale=0.7]
	\node (a) at (-1.500,1.333)  [draw,circle,thick,fill=gray!50] {\,};
	\node (b) at (1.500,1.333)   [draw,circle,thick,fill=gray!50] {\,};
	\node (c) at (-1.500,-1.333) [draw,circle,thick,fill=gray!50] {\,};
	\node (d) at (1.500,-1.333)  [draw,circle,thick,fill=gray!50] {\,};
	
	\draw 
            (a) edge[->, thick, loop left] (a)
	      (a) edge[->, thick, bend right=10=10] (b)
	      (a) edge[->, thick, bend right=10] (c)
	      (a) edge[->, thick, bend right=10] (d)
	      (b) edge[->, thick, bend right=10] (a)
	      (b) edge[->, thick, loop right] (b)
	      (b) edge[->, thick, bend right=10] (c)
	      (b) edge[->, thick, bend right=10] (d)
	      (c) edge[->, thick, bend right=10] (a)
	      (c) edge[->, thick, bend right=10] (b)
	      (c) edge[->, thick, loop left] (c)
	      (c) edge[->, thick, bend right=10] (d)
            (d) edge[->, thick, bend right=10] (a)
	      (d) edge[->, thick, bend right=10] (b)
	      (d) edge[->, thick, bend right=10] (c)
	      (d) edge[->, thick, loop right] (d);
\end{tikzpicture}
\captionof{figure}{Complete graph.}
\label{fig:graph-class-3a}
\end{minipage}
\begin{minipage}{4.88cm}
\centering
\begin{tikzpicture}[scale=0.7]
	\node (a) at (-2.5,0)  [draw,circle,thick,fill=gray!50] {\,};
	\node (b) at (-0.8333,0)   [draw,circle,thick,fill=gray!50] {\,};
	\node (c) at (0.8333,0) [draw,circle,thick,fill=gray!50] {\,};
	\node (d) at (2.500,0)  [draw,circle,thick,fill=gray!50] {\,};
        \node (X) at (0,1.333){\,};
        \node (y) at (0,-1.333){\,};
	
	\draw (a) edge[->, thick] (b)
	      (b) edge[->, thick] (c)
	      (c) edge[->, thick] (d);
\end{tikzpicture}
\captionof{figure}{Linked list.}
\label{fig:graph-class-3b}
\end{minipage}
\begin{minipage}{4.88cm}
    \centering
    \begin{tikzpicture}[scale=0.7]
        \node (b) at (-0.8333,0)   [draw,circle,thick,fill=gray!50] {\,};
        \node (c) at (0.8333,0) [draw,circle,thick,fill=gray!50] {\,};
        \node (d) at (2.500,0)  [draw,circle,thick,fill=gray!50] {\,};
            \node (X) at (0,1.333){\,};
            \node (y) at (0,-1.333){\,};

    \end{tikzpicture}
    \captionof{figure}{Discrete graph.}
    \label{fig:graph-class-6b}
\end{minipage}}
    \end{center}
\end{center}

\begin{center}
    \begin{figure}[!ht]
\centering
\begin{tikzpicture}
\begin{axis}[
  xlabel=size (number of nodes and edges),
  ylabel=runtime (ms), ylabel style={above=0.2mm},
  width=9.2cm,height=7.2cm,
  legend style={at={(1.65,1)}},
  ymajorgrids=true,
  grid style=dashed]
  \addplot[color=plot1, mark=square*] table [y=time, x=n]{Figures/Benchmarks/Connectedness/is-con-list.dat};
  \addlegendentry{List graphs}
  \addplot[color=plot2, mark=square*] table [y=time, x=n]{Figures/Benchmarks/Connectedness/is-con-cycle.dat};
  \addlegendentry{Cycle graphs}
  \addplot[color=red, mark=square*] table [y=time, x=n]{Figures/Benchmarks/Connectedness/is-con-grid.dat};
  \addlegendentry{Square grids}
  \addplot[color=plot4, mark=square*] table [y=time, x=n]{Figures/Benchmarks/Connectedness/is-con-tree.dat};
  \addlegendentry{Binary trees}
  \addplot[color=plot5, mark=square*] table [y=time, x=n]{Figures/Benchmarks/Connectedness/is-con-star.dat};
  \addlegendentry{{Star graphs}}
  \addplot[color=plot6, mark=square*] table [y=time, x=n]{Figures/Benchmarks/Connectedness/is-con-complete.dat};
  \addlegendentry{{Complete graphs}}
  \addplot[color=plot7, mark=square*] table [y=time, x=n]{Figures/Benchmarks/Connectedness/is-con-discrete.dat};
  \addlegendentry{{Discrete graphs}}
\end{axis}  
\end{tikzpicture}
\caption{Measured performance of the program \texttt{is-connected}.}
\label{fig:is-con-bench}
\end{figure}
\end{center}

\section{Case Study: Recognising Acyclic Graphs}
\label{s:acyclicity}
Checking whether a given graph contains a directed cycle is a basic problem in the area of graph algorithms \cite{Skiena20a}. A \gp{} program solving this problem is given in \cite{campbell2022fast}, but to run in linear time it requires input graphs of bounded node degree. The same paper contains a program for the related problem of recognising binary DAGs, which are acyclic graphs in which each node has at most two outgoing edges. This program has a linear runtime on arbitrary input graphs but is destructive in that the input graph is partially or totally deleted. 


\subsection{Program}
\label{ss:program-dag}

The program \texttt{is-dag}\footnote{The concrete syntax of the program available at: \url{https://gist.github.com/ismaili-ziad/959f26e188b210821d08eb5ed2965404}.} in Figure \ref{fig:is-dag-fig} recognises acylic graphs with respect to the following specification.

\begin{description}
\item[\textbf{Input}] An arbitrary \gp{} host graph such that
\begin{enumerate}
        \item each node is non-rooted and marked grey, and
        \item each edge is unmarked.
    \end{enumerate}
\item[\textbf{Output}] If the input graph is acyclic, a host graph that is isomorphic to the input graph up to marks. Otherwise \emph{failure}.
\end{description}

\definecolor{gp2pink}{RGB}{255, 153, 238}

\begin{figure}[!ht]
    \begin{mdframed}[linewidth=0.8pt]
    \begin{verbatim}
Main  = (init; DFS!; try unroot else break)!; Check
DFS   = try next_edge then (try {move, ignore} 
                            else (set_flag; break)) 
        else (try loop; try back else break)
Check = if flag then fail
    \end{verbatim}
    
    \begin{tikzpicture}
    \tikzstyle{every node}=[font=\ttfamily]
    
    \draw (0.9,0.75) node[align=left] {init(x:list)};
    \node[rectangle, thick, thick, rounded corners=7, draw=black, minimum size=5mm, fill=gray!50, label=below:\tiny\tiny 1](a2) at (0,0){x};
    \draw (1,0) node[] {$\Rightarrow$};
    \node[rectangle, thick, rounded corners=7, draw=black, minimum size=5mm,  double, double distance=1pt, fill=red!60, label=below:\tiny\tiny1](a2) at (2,0){x};
    \draw[->, thick];
    \draw (2.6,1) -- (2.6,-0.5);
    \end{tikzpicture}
    \begin{tikzpicture}
    \tikzstyle{every node}=[font=\ttfamily]
    \draw (1.1,0.75) node[align=left] {unroot(x:list)};
    \node[rectangle, thick, rounded corners=7, draw=black, minimum size=5mm,  double, double distance=1pt, fill=red!60, label=below:\tiny\tiny1](a2) at (0,0){x};
    \draw (1,0) node[] {$\Rightarrow$};
    \node[rectangle, thick, rounded corners=7, draw=black, minimum size=5mm, fill=cyan, label=below:\tiny\tiny 1](a2) at (2,0){x};
    \draw[->, thick];
    \draw (2.8,1) -- (2.8,-0.5);
    \end{tikzpicture}
    \begin{tikzpicture}
    \tikzstyle{every node}=[font=\ttfamily]
    \draw (1.3,0.75) node[align=left] {set\_flag(x:list)};
    \node[rectangle, thick, rounded corners=7, draw=black, minimum size=5mm,  double, double distance=1pt, fill=red!60, label=below:\tiny\tiny1](a2) at (0,0){x};
    \draw (1,0) node[] {$\Rightarrow$};
    \node[rectangle, thick, rounded corners=7, draw=black, minimum size=5mm, double, double distance=1pt, fill=green!60, label=below:\tiny\tiny 1](a2) at (2,0){x};
    \draw[->, thick];
    \draw (3.1,1) -- (3.1,-0.5);
    \end{tikzpicture}
    \begin{tikzpicture}
    \tikzstyle{every node}=[font=\ttfamily]
    \draw (0.9,0.75) node[align=left] {flag(x:list)};
    \node[rectangle, thick, rounded corners=7, draw=black, minimum size=5mm, double, double distance=1pt, fill=green!60, label=below:\tiny\tiny1](a2) at (0,0){x};
    \draw (1,0) node[] {$\Rightarrow$};
    \node[rectangle, thick, rounded corners=7, draw=black, minimum size=5mm, double, double distance=1pt, fill=green!60, label=below:\tiny\tiny 1](a2) at (2,0){x};
    \draw[->, thick];
    \end{tikzpicture}
    
    \begin{tikzpicture}
    \tikzstyle{every node}=[font=\ttfamily]
    \draw (1.8,0.75) node[align=left] {next\_edge(x,y,z:list)};
    \node[rectangle, thick, rounded corners=7, draw=black, minimum size=5mm,  double, double distance=1pt, fill=red!60, label=below:\tiny\tiny1](a1) at (0,0){x};
    \node[rectangle, thick, rounded corners=7, draw=black, minimum size=5mm, fill=gp2pink, label=below:\tiny\tiny2](a2) at (1,0){y};
    \draw (2,0) node[] {$\Rightarrow$};
    \node[rectangle, thick, rounded corners=7, draw=black, minimum size=5mm, double, double distance=1pt, fill=red!60, label=below:\tiny\tiny1](a3) at (3,0){x};
    \node[rectangle, thick, rounded corners=7, draw=black, minimum size=5mm, fill=gp2pink, label=below:\tiny\tiny2](a4) at (4,0){y};
    \draw[->, line width=1.2pt] (a1) edge[black] node[above, color = black]{z} (a2) (a3) edge[red] node[above, color = black]{z} (a4);
    \draw (5,1) -- (5,-0.5);
    \end{tikzpicture}
    \hspace{1em}
    \begin{tikzpicture}
    \tikzstyle{every node}=[font=\ttfamily]
    \draw (1.45,0.75) node[align=left] {ignore(x,y,z:list)};
    \node[rectangle, thick, rounded corners=7, draw=black, minimum size=5mm,  double, double distance=1pt, fill=red!60, label=below:\tiny\tiny1](a1) at (0,0){x};
    \node[rectangle, thick, rounded corners=7, draw=black, minimum size=5mm, fill=cyan, label=below:\tiny\tiny2](a2) at (1,0){y};
    \draw (2,0) node[] {$\Rightarrow$};
    \node[rectangle, thick, rounded corners=7, draw=black, minimum size=5mm, double, double distance=1pt, fill=red!60, label=below:\tiny\tiny1](a3) at (3,0){x};
    \node[rectangle, thick, rounded corners=7, draw=black, minimum size=5mm, fill=cyan, label=below:\tiny\tiny2](a4) at (4,0){y};
    \draw[->, line width=1.2pt] (a1) edge[red] node[above, color = black]{z} (a2) (a3) edge[cyan] node[above, color = black]{z} (a4);
    \end{tikzpicture}
    
    \begin{tikzpicture}
    \tikzstyle{every node}=[font=\ttfamily]
    \draw (1.3,0.75) node[align=left] {move(x,y,z:list)};
    \node[rectangle, thick, rounded corners=7, draw=black, minimum size=5mm,  double, double distance=1pt, fill=red!60, label=below:\tiny\tiny1](a1) at (0,0){x};
    \node[rectangle, thick, rounded corners=7, draw=black, minimum size=5mm, fill=gray!50, label=below:\tiny\tiny2](a2) at (1,0){y};
    \draw (2,0) node[] {$\Rightarrow$};
    \node[rectangle, thick, rounded corners=7, draw=black, minimum size=5mm, fill=red!60, label=below:\tiny\tiny1](a3) at (3,0){x};
    \node[rectangle, thick, rounded corners=7, draw=black, minimum size=5mm, double, double distance=1pt, fill=red!60, label=below:\tiny\tiny2](a4) at (4,0){y};
    \draw[->, line width=1.2pt] (a1) edge[red] node[above, color = black]{z} (a2) (a3) edge[dashed] node[above, color = black]{z} (a4);
    \draw (5,1) -- (5,-0.5);
    \end{tikzpicture}
    \hspace{1em}
    \begin{tikzpicture}
    \tikzstyle{every node}=[font=\ttfamily]
    \draw (1.3,0.75) node[align=left] {back(x,y,z:list)};
    \node[rectangle, thick, rounded corners=7, draw=black, minimum size=5mm, fill=red!60, label=below:\tiny\tiny1](a1) at (0,0){x};
    \node[rectangle, thick, rounded corners=7, draw=black, minimum size=5mm, double, double distance=1pt, fill=red!60, label=below:\tiny\tiny2](a2) at (1,0){y};
    \draw (2,0) node[] {$\Rightarrow$};
    \node[rectangle, thick, rounded corners=7, draw=black, minimum size=5mm, double, double distance=1pt, fill=red!60, label=below:\tiny\tiny1](a3) at (3,0){x};
    \node[rectangle, thick, rounded corners=7, draw=black, minimum size=5mm, fill=cyan, label=below:\tiny\tiny2](a4) at (4,0){y};
    \draw[->, line width=1.2pt] (a1) edge[dashed] node[above, color = black]{z} (a2) (a3) edge[cyan] node[above, color = black]{z} (a4);
    \end{tikzpicture}

    \begin{tikzpicture}
    \tikzstyle{every node}=[font=\ttfamily]
    \draw (1.1,0.65) node[align=left] {loop(x,z:list)};
    \node[rectangle, thick, rounded corners=7, draw=black, minimum size=5mm,  double, double distance=1pt, fill=red!60, label=below:\tiny\tiny1](a2) at (0,0){x};
    \draw (1,0) node[] {$\Rightarrow$};
    \node[rectangle, thick, rounded corners=7, draw=black, minimum size=5mm, double, double distance=1pt, fill=green!60, label=below:\tiny\tiny1](a3) at (2,0){x};
    \draw[->, line width=1.2pt] (a2) to [out=330,in=300,looseness=8] node[right, color = black]{z} (a2);
    \draw[->, line width=1.2pt] (a3) to [out=330,in=300,looseness=8] node[right, color = black]{z} (a3);
    \end{tikzpicture}
    
    \end{mdframed}
    \caption{The program \texttt{is-dag}.}
    \label{fig:is-dag-fig}
    \end{figure}

\noindent 
Figure \ref{fig:is-dag-ex} illustrates an execution of \texttt{is-dag} on a cyclic input graph, and Figure \ref{fig:is-dag-ex-acyclic} on an acylic graph. The program implements a directed DFS (depth-first search) of the host graph that marks the visited nodes red or blue, where the red nodes are currently being visited. All the red nodes are connected by a directed path of dashed edges, where the last node is rooted.

It is an invariant of \texttt{is-dag} that there is at most one root in the host graph throughout the program's execution (see Proposition~\ref{prop:unique-root}). The graph contains a cycle if and only if the search finds an edge from the root to a red node (see Proposition~\ref{prop:dag-dashed-red-green}).

\begin{figure}
    \centering
    \input{Figures/is-dag-ex}
    \caption{Sample execution of \texttt{is-dag} on an acyclic graph (\texttt{nx\_dg}, \texttt{mv}, \texttt{bk} and \texttt{unrt} correspond to \texttt{next\_edge}, \texttt{move}, \texttt{back} and \texttt{unroot}, respectively).}
    \label{fig:is-dag-ex}
\end{figure}

\begin{figure}
    \centering
    \usetikzlibrary{overlay-beamer-styles}

\definecolor{gp2green}{RGB}{153, 255, 170}
\definecolor{gp2blue}{RGB}{120, 161, 242}
\definecolor{gp2red}{RGB}{233, 73, 87}
\definecolor{gp2pink}{RGB}{255, 153, 238}
\definecolor{gp2grey}{RGB}{210, 210, 210}
\definecolor{plot1}{RGB}{70, 116, 193}
\definecolor{plot2}{RGB}{235, 125, 60}
\definecolor{plot3}{RGB}{165, 165, 165}
\definecolor{plot4}{RGB}{252, 190, 45}
\definecolor{plot5}{RGB}{94, 156, 210}
\definecolor{plot6}{RGB}{113, 171, 77}
\definecolor{plot7}{RGB}{156, 72, 25}
\definecolor{plot8}{RGB}{40, 69, 117}

\tikzset{gp2 node/.style={draw, circle, thick, minimum width=0.64cm}}
\tikzset{root node/.style={draw, circle, thick, minimum width=0.64cm, double, double distance=0.3mm}}

\scalebox{0.95}{%
\begin{tabular}{ccccccc}
\begin{minipage}{2.1cm}
\centering
\begin{tikzpicture}[scale=0.6]
	\node (a) at (-0.500,0.000)   [draw,circle,thick, fill=gray!50] {\,};
	\node (b) at (1.000,0.000)   [draw,circle,thick, fill=gray!50] {\,};
	\node (c) at (0.250,2.000)   [draw,circle,thick, fill=gray!50] {\,};
        \node (d) at (-1.000,2.000)   [draw,circle,thick, fill=gray!50] {\,};
	\draw (a) edge[->, thick] (b)
              (b) edge[->, thick] (c)
              (c) edge[->, thick] (a)
              (d) edge[->, thick] (a);
\end{tikzpicture}
\end{minipage}
&
$\Rightarrow_{\mtt{init}}$
&
\begin{minipage}{2.1cm}
\centering
\begin{tikzpicture}[scale=0.6]
	\node (a) at (-0.500,0.000)   [draw,circle,thick, fill=gray!50] {\,};
	\node (b) at (1.000,0.000)   [draw,circle,thick, fill=gray!50] {\,};
	\node (c) at (0.250,2.000)  [draw,circle,thick, fill=gray!50] {\,};
        \node (d) at (-1.000,2.000)   [root node,fill=red!60, inner sep=0pt, minimum size=0.4cm] {\,};
	\draw (a) edge[->, thick] (b)
              (b) edge[->, thick] (c)
              (c) edge[->, thick] (a)
              (d) edge[->, thick] (a);
\end{tikzpicture}
\end{minipage}
&
$\Rightarrow_{\mtt{nx\_dg}}$
&
\begin{minipage}{2.1cm}
\centering
\begin{tikzpicture}[scale=0.6]
	\node (a) at (-0.500,0.000)   [draw,circle,thick, fill=gray!50] {\,};
	\node (b) at (1.000,0.000)   [draw,circle,thick, fill=gray!50] {\,};
	\node (c) at (0.250,2.000)  [draw,circle,thick, fill=gray!50] {\,};
        \node (d) at (-1.000,2.000)   [root node,fill=red!60, inner sep=0pt, minimum size=0.4cm] {\,};
	\draw (a) edge[->, thick] (b)
              (b) edge[->, thick] (c)
              (c) edge[->, thick] (a)
              (d) edge[->, thick, red] (a);
\end{tikzpicture}
\end{minipage}
&
$\Rightarrow_{\mtt{mv}}$
&
\begin{minipage}{2.1cm}
\centering
\begin{tikzpicture}[scale=0.6]
	\node (a) at (-0.500,0.000)  [root node,fill=red!60, inner sep=0pt, minimum size=0.4cm] {\,};
	\node (b) at (1.000,0.000)   [draw,circle,thick, fill=gray!50] {\,};
	\node (c) at (0.250,2.000)  [draw,circle,thick, fill=gray!50] {\,};
        \node (d) at (-1.000,2.000)  [draw,circle,thick, fill=red!60] {\,};
	\draw (a) edge[->, thick] (b)
              (b) edge[->, thick] (c)
              (c) edge[->, thick] (a)
              (d) edge[->, thick, dashed] (a);
\end{tikzpicture}
\end{minipage}
\\
\\
 &&&&&& $\Downarrow_{\mtt{nx\_dg}}$
\\[1ex]
\begin{minipage}{2.1cm}
\centering
\begin{tikzpicture}[scale=0.6]
	\node (a) at (-0.500,0.000){};  	
    \node (a) at (-0.500,0.000)  [draw,circle,thick, fill=red!60] {\,};
	\node (b) at (1.000,0.000)   [draw,circle,thick, fill=red!60] {\,};
	\node (c) at (0.250,2.000)  [root node,fill=red!60, inner sep=0pt, minimum size=0.4cm] {\,};
        \node (d) at (-1.000,2.000)  [draw,circle,thick, fill=red!60] {\,};
	\draw (a) edge[->, thick, dashed] (b)
              (b) edge[->, thick, dashed] (c)
              (c) edge[->, thick] (a)
              (d) edge[->, thick, dashed] (a);
\end{tikzpicture}
\end{minipage}
&
$\Leftarrow_{\mtt{mv}}$
&
\begin{minipage}{2.1cm}
\centering
\begin{tikzpicture}[scale=0.6]
	\node (a) at (-0.500,0.000)  [draw,circle,thick, fill=red!60] {\,};
	\node (b) at (1.000,0.000) [root node,fill=red!60, inner sep=0pt, minimum size=0.4cm] {\,};
	\node (c) at (0.250,2.000)  [draw,circle,thick, fill=gray!50] {\,};
        \node (d) at (-1.000,2.000)  [draw,circle,thick, fill=red!60] {\,};
	\draw (a) edge[->, thick, dashed] (b)
              (b) edge[->, thick, red] (c)
              (c) edge[->, thick] (a)
              (d) edge[->, thick, dashed] (a);
\end{tikzpicture}
\end{minipage}
&
$\Leftarrow_{\mtt{nx\_dg}}$
&
\begin{minipage}{2.1cm}
\centering
\begin{tikzpicture}[scale=0.6]
	\node (a) at (-0.500,0.000)  [draw,circle,thick, fill=red!60] {\,};
	\node (b) at (1.000,0.000)   [root node,fill=red!60, inner sep=0pt, minimum size=0.4cm] {\,};
	\node (c) at (0.250,2.000)  [draw,circle,thick, fill=gray!50] {\,};
        \node (d) at (-1.000,2.000)  [draw,circle,thick, fill=red!60] {\,};
	\draw (a) edge[->, thick, dashed] (b)
              (b) edge[->, thick] (c)
              (c) edge[->, thick] (a)
              (d) edge[->, thick, dashed] (a);
\end{tikzpicture}
\end{minipage}
&
$\Leftarrow_{\mtt{mv}}$
&
\begin{minipage}{2.1cm}
\centering
\begin{tikzpicture}[scale=0.6]
	\node (a) at (-0.500,0.000)  [root node,fill=red!60, inner sep=0pt, minimum size=0.4cm] {\,};
	\node (b) at (1.000,0.000)   [draw,circle,thick, fill=gray!50] {\,};
	\node (c) at (0.250,2.000)  [draw,circle,thick, fill=gray!50] {\,};
        \node (d) at (-1.000,2.000)  [draw,circle,thick, fill=red!60] {\,};
	\draw (a) edge[->, thick, red] (b)
              (b) edge[->, thick] (c)
              (c) edge[->, thick] (a)
              (d) edge[->, thick, dashed] (a);
\end{tikzpicture}
\end{minipage}
\\
\\
$\Downarrow_{\mtt{nx\_dg}}$ &&&&&&
\\[1ex]
\begin{minipage}{2.1cm}
\centering
\begin{tikzpicture}[scale=0.6]
	\node (a) at (-0.500,0.000)  [draw,circle,thick, fill=red!60] {\,};
	\node (b) at (1.000,0.000)   [draw,circle,thick, fill=red!60] {\,};
	\node (c) at (0.250,2.000)  [root node,fill=red!60, inner sep=0pt, minimum size=0.4cm] {\,};
        \node (d) at (-1.000,2.000)  [draw,circle,thick, fill=red!60] {\,};
	\draw (a) edge[->, thick, dashed] (b)
              (b) edge[->, thick, dashed] (c)
              (c) edge[->, thick, red] (a)
              (d) edge[->, thick, dashed] (a);
\end{tikzpicture}
\end{minipage}
&
$\Rightarrow_{\mtt{set\_flag}}$
&
\begin{minipage}{2.1cm}
\centering
\begin{tikzpicture}[scale=0.6]
	\node (a) at (-0.500,0.000)  [draw,circle,thick, fill=red!60] {\,};
	\node (b) at (1.000,0.000)   [draw,circle,thick, fill=red!60] {\,};
	\node (c) at (0.250,2.000)  [root node,fill=green!60, inner sep=0pt, minimum size=0.4cm] {\,};
        \node (d) at (-1.000,2.000)  [draw,circle,thick, fill=red!60] {\,};
	\draw (a) edge[->, thick, dashed] (b)
              (b) edge[->, thick, dashed] (c)
              (c) edge[->, thick, red] (a)
              (d) edge[->, thick, dashed] (a);
\end{tikzpicture}
\end{minipage}
&
$\Rightarrow_{\mtt{flag}}$
&
\begin{minipage}{2.1cm}
\centering
\begin{tikzpicture}[scale=0.6]
	\node (a) at (-0.500,0.000)  [draw,circle,thick, fill=red!60] {\,};
	\node (b) at (1.000,0.000)   [draw,circle,thick, fill=red!60] {\,};
	\node (c) at (0.250,2.000)  [root node,fill=green!60, inner sep=0pt, minimum size=0.4cm] {\,};
        \node (d) at (-1.000,2.000)  [draw,circle,thick, fill=red!60] {\,};
	\draw (a) edge[->, thick, dashed] (b)
              (b) edge[->, thick, dashed] (c)
              (c) edge[->, thick, red] (a)
              (d) edge[->, thick, dashed] (a);
\end{tikzpicture}
\end{minipage}
&
$\Rightarrow_{\mtt{fail}}$
&
\begin{minipage}{2.1cm}
\centering
\begin{tikzpicture}[scale=0.6]
\tikzstyle{every node}=[]

\node[](v1) at (0.5,0.6){Failure};
\node[](v2) at (0,0){};
\end{tikzpicture}
\end{minipage}
\end{tabular}}
    \caption{Sample execution of \texttt{is-dag} on a cyclic graph.}
    \label{fig:is-dag-ex-acyclic}
\end{figure}

Consider the loop \texttt{(init; DFS!; try unroot else break)!} of \texttt{is-dag}'s main procedure. Rule \texttt{init} selects an arbitrary grey node as a root to start a directed DFS. The loop \texttt{DFS!} moves the root in depth-first fashion through the host graph. The procedure uses a \texttt{try-else} command to find any unprocessed (that is, unmarked) edge outgoing from the root. It does this by calling \texttt{next\_edge}. If there is such an edge, the rule marks it red so that it can be uniquely identified by the rest of the procedure. If no such edge exits, the root can no longer move forward and the \texttt{else} statement is invoked instead. 

After a successful application of \texttt{next\_edge}, the root is adjacent to either (1) a grey node, (2) a blue node, or (3) a red node. In case (1), the rule \texttt{move} moves the root to the grey node, marks it red and dashes the traversed edge. Dashed edges represent the path followed by the directed DFS. In case (2), the red edge is marked blue by the rule \texttt{ignore} so that it can no longer be matched by \texttt{next\_edge}. In case (3), neither \texttt{move} nor \texttt{ignore} is applicable so that \texttt{set\_flag} marks the root green, indicating the existence of a cycle. 

If \texttt{next\_edge} is not applicable, the command sequence \texttt{(try loop; try back else break)} is executed. Rule \texttt{loop} checks whether there is a loop attached to the root. If this is the case, the rule marks the root green, similar to \texttt{set\_flag}. Then rule \texttt{back} is tried which implements the \textit{pop} operation in the above mentioned stack model. The rule moves the root backwards along an incoming dashed edge. If no incoming dashed edge is present, the root must be the only element on the stack so that the \texttt{break} command terminates the loop \texttt{DFS!}. 

Upon termination of \texttt{DFS!}, the rule \texttt{unroot} attempts to turn the root into an unrooted blue node. If this is not possible, the root must have been marked green by \texttt{set\_flag} or \texttt{loop}. This implies the existence of a cycle and hence the outer loop of \texttt{is-dag} is terminated. 

If rule \texttt{unroot} could be applied, there may still be nodes that have not been visited by the DFS. These are nodes that are not directly reachable from the initial nodes chosen so far. In this case the execution of the outer loop is continued until \texttt{init} is no longer applicable or  \texttt{unroot} fails. 

\subsection{Proof of Correctness}

\begin{defn}[Input Graph]
    An \emph{input graph}, in the context of the acyclicity problem, is an arbitrary GP\,2 host graph such that: 
    \begin{enumerate}
        \item every node is marked grey and non-rooted, and
        \item every edge is unmarked.
    \end{enumerate}
    \label{def:input-dag}
\end{defn}
\begin{prop}
\label{prop:unique-root}
    Throughout the execution of \texttt{is-dag} on an input graph, there is at most one root node in the host graph.
\end{prop}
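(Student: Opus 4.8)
The plan is to reduce the statement to bookkeeping of how each rule affects the number of rooted nodes, and then to combine this with the control flow of the main procedure. First I would inspect the seven rules of \texttt{is-dag} and classify them by their effect on the root count. The rule \texttt{init} turns a non-rooted grey node into a rooted red node, so it increases the number of roots by one; the rule \texttt{unroot} turns the rooted red node into a non-rooted blue node, decreasing the count by one. Every remaining rule leaves the total number of roots unchanged: \texttt{move} and \texttt{back} merely \emph{relocate} the root along an edge (node~\texttt{1} loses its root status exactly as the adjacent node~\texttt{2} gains it, or vice versa), while \texttt{next\_edge}, \texttt{ignore}, \texttt{set\_flag}, \texttt{loop} and \texttt{flag} preserve the rootedness of every node (in each case the doubly-circled node stays doubly-circled). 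Hence the only two operations that can alter the number of roots are \texttt{init} ($+1$) and \texttt{unroot} ($-1$).

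Next I would prove, by induction on the iterations of the outer loop \texttt{(init; DFS!; try unroot else break)!}, the invariant that \emph{at the start of each iteration the host graph contains no root}. The base case follows from Definition~\ref{def:input-dag}, since every node of an input graph is non-rooted. For the inductive step, assume an iteration begins with no root. Then \texttt{init} either fails (leaving the graph root-free and terminating the loop harmlessly) or introduces exactly one root. In the latter case the loop \texttt{DFS!} is executed, and since all of its constituent rules preserve the root count by the classification above, the graph still has exactly one root when \texttt{DFS!} finishes. Finally \texttt{try unroot else break} is executed: if \texttt{unroot} succeeds it removes the unique root, so the graph is again root-free when the next iteration starts, re-establishing the invariant; if \texttt{unroot} fails, the \texttt{break} command terminates the outer loop, so no further iteration—and in particular no further application of \texttt{init}—takes place. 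Combining the invariant with the per-iteration analysis shows that the count never exceeds one, and the residual procedure \texttt{Check} only applies \texttt{flag}, which preserves roots, so the bound continues to hold through termination.

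I expect the main obstacle to be the control-flow argument rather than the rule bookkeeping: the delicate point is that \texttt{init} is applied \emph{only} when no root is present. Unlike many root-creating rules, \texttt{init} carries no guard forbidding a second root, so the bound is not local to the rule but is enforced by the surrounding loop structure. The crux is the observation that the outer loop can re-enter its body (and hence re-apply \texttt{init}) only when the previous pass through \texttt{try unroot else break} took the \texttt{unroot} branch, which necessarily drove the root count back to zero; any failure of \texttt{unroot} triggers \texttt{break} and halts the loop. I would therefore take care to treat all three ways the loop body can conclude—\texttt{init} failing, \texttt{unroot} succeeding, and \texttt{unroot} failing via \texttt{break}—to confirm that each either preserves the invariant for the following iteration or prevents any further iteration.
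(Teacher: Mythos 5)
Your proposal is correct and follows essentially the same route as the paper's proof: identify \texttt{init} and \texttt{unroot} as the only rules that create or delete roots, observe that the input graph is root-free, and argue via the structure of the loop \texttt{(init; DFS!; try unroot else break)!} that a root exists only between an application of \texttt{init} and the subsequent \texttt{unroot} or \texttt{break}. Your version merely makes explicit the induction on loop iterations and the three exit cases that the paper's terser argument leaves implicit.
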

\begin{proof}
    Upon inspection of the rules, \texttt{init} and \texttt{unroot} are the only rules adding or deleting root nodes in the host graph, both invoked in the looping sequence \texttt{(init; DFS!; try unroot else break)!}.

    The input graph initially contains no root nodes according to the specification at the beginning of this section. On the first invocation of the parenthetical looping sequence in \texttt{Main}, \texttt{init} adds one root node to the host graph. The looping procedure \texttt{DFS!} does not add nor delete root nodes. The rule \texttt{unroot} deletes the node added by \texttt{init} should it apply, or exits the loop.
\end{proof}
\begin{prop}
\label{prop:path-red-dag}
    Let $G$ be a non-empty graph where all nodes are either grey or blue, at least one node is grey, all edges outgoing from blue nodes are marked blue, and all other edges are unmarked. Let $v$ be the grey node marked blue and rooted by the rule \texttt{init} on $G$. Throughout the execution of \texttt{DFS!} on $G$, all dashed edges in the host graph form a directed path such that all red nodes in the graph are in the path, $v$ is the starting node, a root node (marked either red or green) is the endpoint, and all nodes on the path, except possibly the last node, are red.
\end{prop}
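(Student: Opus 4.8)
The plan is to strengthen the statement into a precise loop invariant and establish it by induction on the number of rule applications during a single run of \texttt{DFS!} (that is, between the \texttt{init} that marks $v$ and the exit of the loop). Concretely, I would maintain the invariant that \emph{the dashed edges are precisely the edges $w_{i-1}\to w_i$ ($1\le i\le k$) of a simple directed path $v = w_0 \to w_1 \to \cdots \to w_k$ with $k\ge 0$, where $w_0,\dots,w_{k-1}$ are red and $w_k$ is the unique root node, marked red or (once \texttt{set\_flag} or \texttt{loop} has fired) green.} By Proposition~\ref{prop:unique-root} there is at most one root throughout, so "the root" is well defined and serves as the distinguished endpoint $w_k$; for $k=0$ the path degenerates to the single node $v$ with no dashed edges (note that the \texttt{init} of \texttt{is-dag} marks $v$ red, so this is the colour I use throughout). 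The first move is to inspect the rules of \texttt{is-dag} and note that only \texttt{move} and \texttt{back} create or remove dashed edges, \texttt{set\_flag} and \texttt{loop} merely recolour the root, and \texttt{next\_edge} and \texttt{ignore} affect neither dashed edges nor the marks of path nodes.

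The base case is immediate: by the hypothesis on $G$ there are no dashed edges, and \texttt{init} makes $v$ a red root without adding any, so $k=0$ and the invariant holds with the trivial path $v$. For the inductive step I would dispatch the path-preserving rules first. \texttt{next\_edge} only remarks an unmarked edge red and \texttt{ignore} remarks a red edge blue; neither touches a dashed edge or a node mark, so the path and its endpoint colour are unchanged. Both \texttt{set\_flag} and \texttt{loop} recolour the red root $w_k$ green and leave every edge intact, so the path is preserved and the endpoint clause still holds, now with a green endpoint.

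The substance of the argument is \texttt{move} and \texttt{back}. Rule \texttt{move} applies only to a red edge leaving the red root $w_k$ (node~\texttt{1}) and entering a \emph{grey} node (node~\texttt{2}); it dashes that edge, marks the target red, and makes it the new root. Since every node of the current path is red by the invariant, the grey target lies outside $\{w_0,\dots,w_k\}$, so appending it produces a longer \emph{simple} path $v\to\cdots\to w_k\to w_{k+1}$ ending at the new red root. This appeal to the "target is grey" premise to rule out a repeated vertex is exactly where I expect the main obstacle to sit: it is the step that keeps the dashed subgraph a path rather than permitting a dashed cycle. Rule \texttt{back} applies only to a dashed edge from a red node (node~\texttt{1}) into the red root (node~\texttt{2}), hence requires $k\ge 1$; it unmarks the final dashed edge, marks $w_k$ blue, and makes $w_{k-1}$ the new red root, yielding the shorter path $v\to\cdots\to w_{k-1}$. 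Crucially \texttt{back} cannot fire when $k=0$ (no dashed edge) or when the root is green (it demands a red root), which matches the \texttt{break} taken after \texttt{loop} or \texttt{set\_flag}. Finally, $v=w_0$ is never the target of a dashed edge, so it is never the node that \texttt{back} recolours blue; hence $v$ persists as the start of the path for the whole run of \texttt{DFS!}, and dropping the extra colour bookkeeping from the strengthened invariant yields exactly the proposition.
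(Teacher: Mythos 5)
Your proof is correct and follows essentially the same route as the paper's: induction on the rule applications of \texttt{DFS!}, with \texttt{move} extending and \texttt{back} shortening the dashed path, \texttt{set\_flag} and \texttt{loop} only recolouring the root green, and the remaining rules leaving the invariant untouched. Your write-up is in fact more careful than the paper's terse one-paragraph argument --- in particular, the observation that \texttt{move}'s target must be grey (hence off the all-red path, which is what keeps the dashed subgraph a \emph{simple} path rather than allowing a dashed cycle) and the note that \texttt{init} in \texttt{is-dag} marks $v$ red rather than blue are both glossed over in the published proof.
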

\begin{proof}
    The invariant holds in $G$, i.e. before \texttt{DFS!} is invoked; here, the host graph contains exactly one red (rooted) node and no dashed edges. Now, assume the invariant holds at some stage of execution. Clearly, the rule \texttt{move} extends the dashed path by adding one dashed edge and one red node, and \texttt{back} reduces the dashed path by removing one edge. These two rules are the only ones creating dashed edges in the host graph, and no other rule deletes or adds root nodes (Proposition \ref{prop:unique-root}). Furthermore, \texttt{set\_flag} and \texttt{loop} are the only rules altering the mark of the root node, and mark it green. Therefore, the invariant is maintained throughout the execution. 
\end{proof}
\begin{prop}
\label{prop:dag-dashed-red-green}
    Let $G$ be a host graph and $v$ be a node as specified in Proposition \ref{prop:path-red-dag}. Upon the execution of \texttt{DFS!} on $G$, either \texttt{set\_flag} or \texttt{loop} is applied if and only if there exists some directed cycle all nodes of which are reachable from $v$.
\end{prop}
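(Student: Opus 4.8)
The plan is to prove the two implications separately, treating the ``only if'' (forward) direction directly and the ``if'' (backward) direction by contraposition. Throughout, the central structural fact I would rely on --- strengthening Proposition \ref{prop:path-red-dag} --- is that at every stage of \texttt{DFS!} the set of red nodes is \emph{exactly} the set of nodes lying on the unique dashed path from $v$ to the root; equivalently, the red nodes are precisely the nodes currently ``on the stack''. This holds because a node turns red only through \texttt{init} (creating the start $v$ of the path) or \texttt{move} (appending a new node to the end of the path), and leaves the red set only through \texttt{back} (which recolours the current root blue while retracting the path by one edge) or through \texttt{set\_flag}/\texttt{loop} (which recolour the root green, after which \texttt{DFS!} breaks). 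I would record this as a short invariant, proved by the same inductive bookkeeping as Proposition \ref{prop:path-red-dag}.

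For the forward direction, suppose \texttt{loop} or \texttt{set\_flag} is applied. If \texttt{loop} fires, its match provides a self-loop on the root $u$; this loop is a directed cycle, and $u$ lies on the dashed path from $v$ (or equals $v$), so $u$ is reachable from $v$ along directed dashed edges. If \texttt{set\_flag} fires, the program is in the branch following a successful \texttt{next\_edge}, so there is a red edge from the root $u$ to some node $w$, and neither \texttt{move} (requiring $w$ grey) nor \texttt{ignore} (requiring $w$ blue) applies; hence $w$ is red. By the invariant above, $w$ lies on the dashed path $v \rightsquigarrow u$, so its suffix gives a directed path $w \rightsquigarrow u$, which together with the edge $u \to w$ closes a directed cycle. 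Every node of this cycle lies on the dashed path and is therefore reachable from $v$.

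For the backward direction I would argue the contrapositive: assuming that neither \texttt{set\_flag} nor \texttt{loop} is ever applied during \texttt{DFS!}, I show that the subgraph induced by the nodes reachable from $v$ is acyclic. Under this assumption \texttt{DFS!} performs an ordinary depth-first traversal: it visits every node reachable from $v$ and eventually recolours each of them blue via \texttt{back} (no root is ever frozen green), and no reachable node carries a self-loop (else \texttt{loop} would apply). Ordering the reachable nodes by the time at which \texttt{back} recolours them blue --- their \emph{finish order} --- I claim that every directed edge $y \to t$ between reachable nodes satisfies $f(t) < f(y)$. Indeed, when $y$ is the root the edge $y \to t$ is marked red by \texttt{next\_edge} and then resolved: if $t$ is grey, \texttt{move} pushes $t$ and the whole subtree rooted at $t$ finishes before $y$ is popped; if $t$ is blue, \texttt{ignore} applies and $t$ is already finished; the remaining possibility, $t$ red, would trigger \texttt{set\_flag} and is excluded by hypothesis. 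Thus the finish order is a reverse topological order of the reachable subgraph, so a directed cycle $w_0 \to \dots \to w_{k-1} \to w_0$ would force the impossible chain $f(w_0) > \dots > f(w_0)$. Hence no directed cycle reachable from $v$ exists.

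I expect the main obstacle to be the rigorous operational justification of the two facts underpinning the backward direction: that, absent cycle detection, \texttt{DFS!} really does finish every node reachable from $v$ (the reachability completeness of the traversal), and that the finish order obtained from successive applications of \texttt{back} is a genuine reverse topological order. Both amount to re-deriving the classical depth-first-search invariants --- in particular the stack discipline guaranteeing that the subtree of a \texttt{move}-child is fully popped before its parent --- from the step-by-step rewriting semantics of the program rather than from a pseudocode recursion. Making the correspondence between ``red node'' and ``stack element'' precise, as in the invariant above, is what lets these classical arguments go through; establishing it carefully is the technical heart of the proof.
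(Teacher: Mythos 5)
Your proof is correct, and your forward direction is essentially the paper's: the paper likewise observes that if both \texttt{move} and \texttt{ignore} fail after a successful \texttt{next\_edge}, node \texttt{2} must be red, and an edge from the root back to a node on the dashed path closes a directed cycle (with the \texttt{loop} case giving a cycle trivially). Where you genuinely diverge is the converse implication. The paper argues it directly and rather loosely: it treats self-loops explicitly (\texttt{loop} fires once the looping node becomes the root, ``which will eventually occur since all nodes reachable from $v$ will be rooted \dots provided there is no edge-cycle''), but it never actually proves that an edge-cycle forces \texttt{set\_flag} to fire --- the paragraph about \texttt{set\_flag} only establishes the reverse direction (rule applied implies cycle). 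Your contrapositive argument --- if neither rule fires, the traversal finishes every node reachable from $v$, the finish order induced by \texttt{back} is a reverse topological order, and hence the reachable subgraph is acyclic --- is the classical DFS correctness argument, and it closes precisely the step the paper glosses over. The cost is the extra bookkeeping you honestly flag: the strengthened invariant that the red nodes are exactly the stack (the paper only asserts informally that ``the nodes that remain red are precisely those ancestral to the current rooted endpoint''), reachability completeness, and the stack discipline guaranteeing a child's subtree is popped before its parent; one small detail to handle is that $v$ itself is never popped by \texttt{back} inside \texttt{DFS!} (it is unrooted only afterwards by \texttt{unroot}), so $v$ should simply be assigned the latest finish time. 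In short, the paper's proof is shorter because it asserts these depth-first-search facts; yours is longer but is rigorous on exactly the direction where the paper hand-waves.
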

\begin{proof}
    As established in Proposition \ref{prop:path-red-dag}, there exists exactly one path of red nodes starting from $v$, where the endpoint is marked either green or red. Initially, this path consists of exactly one root node marked red, $v$.
    
    Observe that the rule \texttt{move} moves the root to a node directly reachable from the current root while leaving the previously rooted node marked red. The only rule that turns a red node blue is \texttt{back}, which corresponds to the \textit{pop} operation in the DFS. Once a node is turned blue, it will no longer be revisited (this can be verified by inspecting the marks of the rules). The nodes that remain red are precisely those ancestral to the current rooted endpoint in the DFS.
    
    If the rules \texttt{move} or \texttt{ignore} fail to apply, it implies that the second node (\texttt{node 2}) in these rules is already marked red. This condition indicates the presence of an edge from the current endpoint to a node already on the path, hence the existence of a directed cycle. Consequently, the rule \texttt{set\_flag} is applied, and the looping procedure \texttt{DFS!} terminates.
    
    Now, assume that some node $u$, reachable from $v$, has a looping edge. In this case, the rule \texttt{loop} applies before the invocation of \texttt{back}. The rule \texttt{loop} is called when the endpoint of the current path is $u$, which will eventually occur since all nodes reachable from $v$ will be rooted during the DFS traversal, provided there is no edge-cycle (i.e. a cycle that is not a looping edge).
\end{proof}
\begin{lem}[Termination of \texttt{is-dag}]
    The program \texttt{is-dag} terminates on any input graph.
    \label{lem:termination-dag}
\end{lem}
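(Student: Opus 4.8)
The plan is to treat the two looping constructs of \texttt{is-dag} separately, namely the inner loop \texttt{DFS!} and the outer loop \texttt{(init; DFS!; try unroot else break)!}; the trailing command \texttt{Check} is a single conditional and terminates at once, so the whole program terminates as soon as both loops do. Exactly as in the proof of Lemma~\ref{lem:term-con}, I would, for each loop, exhibit a measure $\#X$ on host graphs that is bounded below and that strictly decreases on every iteration which does not immediately invoke \texttt{break}.

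For \texttt{DFS!} I would weight each edge according to the progression its mark undergoes during the search, say unmarked, red, dashed, blue with weights $3,2,1,0$, and let $\#X$ be the sum of these weights over all edges of $X$. The first point to record is that no rule of \texttt{is-dag} moves an edge mark backwards along this chain (no rule restores an unmarked edge and no rule alters a blue edge), so $\#X$ never increases. I would then walk through one iteration of \texttt{DFS}: \texttt{next\_edge} turns an unmarked edge red ($-1$), \texttt{move} turns a red edge dashed ($-1$), \texttt{ignore} turns a red edge blue ($-2$), and \texttt{back} turns a dashed edge blue ($-1$). The only rules leaving all edge weights unchanged are \texttt{set\_flag} and \texttt{loop}, both of which merely recolour the (unique, by Proposition~\ref{prop:unique-root}) root node green. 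The decisive observation is that each occurrence of these two rules is immediately followed by a \texttt{break}: \texttt{set\_flag} appears only in the branch \texttt{(set\_flag; break)}, and once \texttt{loop} has marked the root green, the ensuing \texttt{try back} cannot match, since \texttt{back} requires a red root and Proposition~\ref{prop:unique-root} guarantees there is no other root, so control passes to \texttt{break}. Hence every iteration of \texttt{DFS} that does not terminate the loop strictly decreases $\#X$, and since $\#X \geq 0$ the loop \texttt{DFS!} terminates.

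For the outer loop I would take the number of grey nodes in the host graph as the measure. By inspection of the right-hand sides, no rule of \texttt{is-dag} ever produces a grey node, whereas \texttt{init} and \texttt{move} consume one, so this count is non-increasing throughout the execution. Every iteration of the outer loop begins with \texttt{init}, which strictly decreases the grey-node count by one; once no grey node remains, \texttt{init} fails, the body fails, and the loop terminates (it may also terminate earlier through the \texttt{break} in \texttt{try unroot else break}, which fires when \texttt{unroot} cannot match a green root). As the initial number of grey nodes is finite and each body execution terminates (the atomic commands trivially, and \texttt{DFS!} by the previous paragraph), only finitely many iterations occur.

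The main obstacle I anticipate is the case analysis for \texttt{DFS!}: the nested \texttt{try\,\dots\,then\,\dots\,else} structure splits into several branches, and I must verify for each that the iteration either strictly decreases $\#X$ or ends in \texttt{break}. The delicate case is the treatment of \texttt{set\_flag} and \texttt{loop}, which do not touch the edge measure at all; pinning down that they can occur only immediately before a \texttt{break} — which for \texttt{loop} relies on Proposition~\ref{prop:unique-root} to exclude any matching of \texttt{back} at a green-rooted node — is precisely what makes the descent argument for $\#X$ go through.
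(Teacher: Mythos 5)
Your proof is correct and follows essentially the same route as the paper, whose entire proof is the assertion that ``an argument analogous to Lemma~\ref{lem:term-con} can be made'': namely, termination of each loop via a measure that strictly decreases under the rules' irreversible mark changes, with \texttt{break} covering the remaining cases. Your weighted edge measure for \texttt{DFS!} and the grey-node count for the outer loop supply exactly the details the paper leaves implicit, and your observation that \texttt{set\_flag} and \texttt{loop} can only occur immediately before a \texttt{break} (the latter because \texttt{back} needs a red root and, by Proposition~\ref{prop:unique-root}, the green root is the only root) correctly closes the one step where the analogy to \texttt{is-connected} is not immediate.
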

\begin{proof}
    Consider the following weight assignment to edge marks: $\texttt{unmarked} \mapsto 3$, $\texttt{red} \mapsto 2$, $\texttt{dashed} \mapsto 1$, and $\texttt{blue} \mapsto 0$.

    Given a host graph $X$, define $\#_1X$ to be the sum of all the edge weights, as defined above. Then, \texttt{DFS!} terminates because its loop body reduces $\#_1X$: rules \texttt{next\_edge}, \texttt{move}, \texttt{ignore}, and \texttt{back} all decrease the total weight, while no other rule increases it. 

    Now, define $\#_2X$ to be the number of grey nodes in $X$. Then, \texttt{init} decreases $\#_2X$ while \texttt{DFS!} and \texttt{unroot} do not increase the measure.

    Hence, \texttt{is-dag} terminates.
\end{proof}
\begin{thm}[Correctness of \texttt{is-dag}]
    \label{thm:correctness-dag}
    The program \texttt{is-dag} is totally correct with respect to the specification at the beginning of Subsection \ref{ss:program-dag}.
\end{thm}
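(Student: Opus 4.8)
The plan is to split total correctness into termination, which is already supplied by Lemma~\ref{lem:termination-dag}, and partial correctness, which I would prove by case analysis on whether the input graph $G$ is empty, non-empty and acyclic, or non-empty and cyclic. The empty case is immediate: \texttt{init} is inapplicable, the outer loop does nothing, \texttt{Check} finds no green node, and the empty graph is returned unchanged; since it is trivially acyclic this meets the specification. The crux is therefore to show that on a non-empty input the program invokes \texttt{fail} exactly when $G$ contains a directed cycle, and otherwise returns a graph isomorphic to $G$ up to marks.

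The first reduction is structural: every rule of \texttt{is-dag} preserves the underlying graph, as no rule adds or deletes a node or an edge and each one only recolours items and relocates the root. Hence whatever graph the program outputs is isomorphic to $G$ up to marks, and it remains only to determine when \texttt{fail} fires. Since \texttt{Check} triggers \texttt{fail} precisely when a green node is present, the only rules producing a green mark are \texttt{set\_flag} and \texttt{loop}, and no rule ever removes a green mark (so a green node persists and, via the \texttt{try unroot else break} test, routes control out of the outer loop into \texttt{Check}), the whole question reduces to: \emph{a green node is created during the run if and only if $G$ is cyclic}. To invoke the earlier invariants I would first observe, by induction over the outer loop, that at the start of each \texttt{DFS!} call the host graph has exactly the form required by Propositions~\ref{prop:path-red-dag} and~\ref{prop:dag-dashed-red-green}: previously explored nodes are blue with blue outgoing edges, the remaining nodes are grey with unmarked incident edges, and the \texttt{init}-node $v$ has just been reddened and rooted.

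For the acyclic direction I would argue contrapositively from Proposition~\ref{prop:dag-dashed-red-green}: if $G$ is acyclic then no directed cycle is reachable from any node, so in no iteration does \texttt{set\_flag} or \texttt{loop} apply; consequently every \texttt{DFS!} call leaves its root red, \texttt{unroot} always succeeds, and since each iteration turns at least the \texttt{init}-node blue, the loop terminates only when \texttt{init} fails for lack of grey nodes. No green node is ever created, \texttt{Check} does not fail, and the graph (structurally equal to $G$) is returned. For the cyclic direction I would use a white-path argument to meet the hypothesis of Proposition~\ref{prop:dag-dashed-red-green}: given a directed cycle $C$, consider the iteration in which the first node $w$ of $C$ is discovered; at that moment every other node of $C$ is still grey and every edge of $C$ unmarked, so the whole of $C$ is reachable from that iteration's \texttt{init}-node $v$ through unmarked edges, and Proposition~\ref{prop:dag-dashed-red-green} then guarantees that \texttt{set\_flag} or \texttt{loop} is applied in this iteration. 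A green node is created, it persists, and \texttt{Check} invokes \texttt{fail}.

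I expect the cyclic direction to be the main obstacle. Proposition~\ref{prop:dag-dashed-red-green} only certifies detection of a cycle all of whose nodes are reachable from the single \texttt{init}-node of \emph{one} \texttt{DFS!} call, whereas a cycle in $G$ may be approached across several iterations and partly through nodes that have already been blued. The white-path reasoning — pinpointing the iteration in which the first node of $C$ is discovered and checking that at that instant the entire cycle is still grey and unmarked, hence wholly reachable from $v$ — is precisely what closes this gap. This step also relies on Propositions~\ref{prop:unique-root} and~\ref{prop:path-red-dag} to ensure that the DFS stack is faithfully represented by the unique red dashed path ending at the root, so that ``an edge from the root to a red node'' genuinely means ``an edge to a stack ancestor'' and therefore witnesses a directed cycle.
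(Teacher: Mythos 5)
Your proposal is correct and takes essentially the same route as the paper's proof: termination is delegated to Lemma \ref{lem:termination-dag}, and partial correctness is settled by the same empty/acyclic/cyclic case split, with Proposition \ref{prop:dag-dashed-red-green} characterising exactly when \texttt{set\_flag} or \texttt{loop} fires. Your white-path argument for the cyclic case is a slightly more careful rendering of the paper's claim that nodes of an undetected cycle stay grey and are visited in later iterations of the outer loop, and your explicit remarks on structure preservation and on verifying the preconditions of Propositions \ref{prop:path-red-dag} and \ref{prop:dag-dashed-red-green} fill in details the paper leaves implicit.
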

\begin{proof}
    Termination follows from Lemma \ref{lem:termination-dag}. Let $G$ be the input graph. If $G$ is empty, \texttt{init} fails to match, \texttt{flag} fails to match, and the program terminates, outputting the empty graph and satisfying the specification. Suppose now that $G$ consists of at least one node. We then split the remainder of this proof into two cases and define the parenthetical sequence \texttt{(init; DFS!; try unroot else break)} as \texttt{R}.
    
        \textit{Case 1:} $G$ is acyclic. Observe that \texttt{R} terminates under two conditions: (1) \texttt{init} fails to apply, or (2) \texttt{unroot} fails to apply.

        The rule \texttt{init} fails to apply if and only if no grey node exists in the host graph, which indicates that all nodes have been visited in the DFS implemented by \texttt{DFS!}. By Proposition \ref{prop:dag-dashed-red-green}, the inapplicability of \texttt{init} implies that no directed cycle was found during any execution of \texttt{DFS!}, as \texttt{init} cannot be invoked after \texttt{unroot} fails to match. Consequently, \texttt{R} terminates, and the rule \texttt{flag} in \texttt{Check} fails to match. Furthermore, the rule \texttt{unroot} can never fail to apply in this context because the only rules capable of turning the root node green in \texttt{DFS!} are \texttt{set\_flag} and \texttt{loop}. Both of these rules apply if and only if $G$ contains a cycle (Proposition~\ref{prop:dag-dashed-red-green}), which it does not. Therefore, \texttt{unroot} will always succeed, leaving no green root for \texttt{flag} to apply successfully.

        \textit{Case 2:}
        $G$ is cyclic.
        Let $v$ be some node to which \texttt{init} is applied.
        
        \textit{Subcase 2.1:} A cycle is reachable from $v$. Then either the rule \texttt{set\_flag} or \texttt{loop} will eventually apply during the subsequent execution of \texttt{DFS!} (Proposition~\ref{prop:dag-dashed-red-green}). Once those rules are applied, \texttt{break} is immediately invoked, the rule \texttt{unroot} is not applicable, and therefore, \texttt{R} terminates. The rule \texttt{flag} will apply, causing the program to terminate with the execution of the \texttt{fail} command. 
        
        \textit{Subcase 2.2:} A cycle is not reachable from $v$. Then a cycle must exist within the set of nodes that are unreachable from $v$.\\
        \noindent \textit{Claim.} All the nodes on the cycle are grey. \\
        \noindent \textit{Proof of Claim.}
        Suppose by contradiction that there is at least one non-grey node on the cycle. Then, by Proposition~\ref{prop:dag-dashed-red-green}, at least one node that can reach this cycle had \texttt{init} applied to it. However, by Proposition~\ref{prop:dag-dashed-red-green}, this means that the program \texttt{is-dag} would have terminated with failure prior to the application of \texttt{init} on $v$, a contradiction. $\Box$
        
        By a straightforward induction on the number of remaining grey nodes, it follows that a node that can reach a cycle will eventually have \texttt{init} applied to it. By Proposition~\ref{prop:dag-dashed-red-green}, this implies that the program will eventually terminate by invocation of a \texttt{fail} command.
\end{proof}

\subsection{Proof of Complexity}

Let us examine the time complexity of \texttt{is-dag}.
\begin{prop}
    \label{prop:red-edges-dag}
    Throughout the execution of \texttt{is-dag}, there is at most one red edge in the host graph.
\end{prop}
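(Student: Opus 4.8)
The plan is to argue by induction on the execution steps of \texttt{is-dag}, maintaining the invariant that the host graph contains at most one red edge. The base case is immediate: by Definition \ref{def:input-dag}, every edge of an input graph is unmarked, so there are no red edges before execution begins.

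For the inductive step, I would first classify the rules by their effect on red edges. Inspecting the right-hand sides, \texttt{next\_edge} is the only rule that produces a red edge (it recolours one unmarked edge red), while \texttt{move} and \texttt{ignore} each consume a red edge (marking it dashed and blue, respectively). The remaining rules (\texttt{init}, \texttt{unroot}, \texttt{set\_flag}, \texttt{flag}, \texttt{loop}, \texttt{back}) neither create nor destroy red edges, since the program is structure-preserving. Hence every rule other than \texttt{next\_edge} trivially preserves the invariant, and it suffices to show that \texttt{next\_edge} is applied only in a state that already has no red edge.

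To see this, I would trace the control flow of the procedure \texttt{DFS}. The rule \texttt{next\_edge} is attempted at the top of the \texttt{DFS} body, and a successful application is immediately followed by \texttt{try \{move, ignore\}}. If \texttt{move} or \texttt{ignore} applies, the red edge just created is consumed, so the graph again has no red edge before \texttt{DFS!} may iterate and reach \texttt{next\_edge} once more. If neither applies---which, by the left-hand sides of \texttt{move} and \texttt{ignore}, happens exactly when node \texttt{2} is red---then \texttt{set\_flag} marks the root green and \texttt{break} terminates \texttt{DFS!}. The single red edge now persists, but the subsequent command \texttt{try unroot else break} fails, since \texttt{unroot} requires a \emph{red} root and the root has just been recoloured green; consequently the outer loop breaks and \texttt{next\_edge} is never attempted again. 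A routine check of the \texttt{else} branch \texttt{(try loop; try back else break)} confirms that it creates no red edge either, so whenever \texttt{next\_edge} is reached the graph contains no red edge, and the invariant is preserved.

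The main obstacle is precisely this persisting-edge case, which distinguishes \texttt{is-dag} from the connectedness program of Proposition \ref{prop:red-edges-con}: there, node \texttt{2} is always grey or blue, so \texttt{move} or \texttt{ignore} necessarily consumes the red edge on the very next step, and a one-line argument suffices. Here a red edge can survive the step that creates it, and the argument must instead show that no \emph{second} red edge can subsequently arise. This reduces to verifying that once a red edge is left un-consumed---the root having turned green---the control flow forces termination before \texttt{next\_edge} is reached again, which follows from the inapplicability of \texttt{unroot} to a green root together with the structure of the outer loop.
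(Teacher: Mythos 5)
Your proof is correct and takes essentially the same approach as the paper: \texttt{next\_edge} is identified as the only rule creating a red edge, which is then either consumed by \texttt{move} or \texttt{ignore}, or else persists while the control flow forces termination before \texttt{next\_edge} can fire again. In fact you are more careful than the paper's own (rather terse) proof, which merely says the failure of \texttt{move} and \texttt{ignore} "triggers the looping sequence to terminate"; your explicit verification that the green root makes \texttt{unroot} inapplicable, so that the outer loop also breaks and \texttt{next\_edge} is never reached again, fills in exactly the control-flow step the paper leaves implicit.
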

\begin{proof}
    Upon inspection of the rules, \texttt{next\_edge} is the only rule that marks an edge red in the host graph (again, note that the program is structure-preserving). It is easy to see that when \texttt{next\_edge} is applied, either \texttt{move} or \texttt{ignore} has to apply, or both fail to apply, triggering the looping sequence to terminate. Therefore, it follows that the number of red edges in the host graph is bounded to $1$.
\end{proof}

\begin{thm}[Complexity of \texttt{is-dag}]
The program \texttt{is-dag} terminates in time linear in the size of the input graph.
    \label{the:complexity-dag}
\end{thm}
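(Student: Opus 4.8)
The plan is to follow the two-part strategy used in the complexity proof of \texttt{is-connected} (Theorem \ref{the:complexity-con}): first establish that every rule of \texttt{is-dag} requires only a constant number of matching attempts per call under the complexity assumptions of the new compiler (Figure \ref{fig:complexity-assumptions}), and then bound the total number of calls of each rule by $\mathrm{O}(n)$ or $\mathrm{O}(m)$, where $n$ and $m$ are the numbers of nodes and edges of the input graph $G$. Since the program is structure-preserving, $n$ and $m$ stay fixed, so summing the per-rule costs yields a total of $\mathrm{O}(n+m)$, i.e.\ linear in the input size.

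For the constant-matching part, I would treat the rules according to the graph feature they search for. The rules \texttt{init} and \texttt{flag} locate a node of a fixed mark (grey, respectively green); by \texttt{firstHostNode} this succeeds or fails in a single attempt, and since the sought node carries a list variable, the first candidate found matches. The rules \texttt{unroot}, \texttt{set\_flag} and \texttt{loop} operate on the root: by Proposition \ref{prop:unique-root} there is at most one root, which \texttt{firstHostRootNode} returns in constant time, and \texttt{loop} then inspects the root's loops via \texttt{firstLoop} in one step. For \texttt{next\_edge} the root is found as before and an outgoing unmarked edge is fetched with \texttt{firstOutEdge}; because the target node is matched with the wildcard \texttt{any}, the first such edge matches. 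For \texttt{move} and \texttt{ignore} I would invoke Proposition \ref{prop:red-edges-dag}: the unique red edge is located with a single \texttt{firstOutEdge} on the root, and its target is thereby determined. Finally, Proposition \ref{prop:path-red-dag} guarantees that the root has at most one incoming dashed edge, so \texttt{back} matches via a single \texttt{firstInEdge}. In every case there is exactly one matching attempt.

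For the counting part, I would separate successful from unsuccessful applications and bound each structurally. Successful applications of \texttt{next\_edge}, \texttt{move} and \texttt{ignore} each consume an edge that permanently leaves the unmarked or red state (no rule restores these marks), giving at most $m$ such applications; successful \texttt{move} additionally turns a grey node red and successful \texttt{back} turns a dashed edge blue, both bounded by $n$. The outer loop is controlled by \texttt{init}, which turns one grey node red on each success, so it iterates $\mathrm{O}(n)$ times; this bounds the number of calls of \texttt{unroot} and the number of invocations of \texttt{DFS!}. The rules \texttt{set\_flag} and \texttt{loop} succeed at most once, since either immediately forces the outer loop to terminate through a failing \texttt{unroot}.

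The subtle point I expect to require the most care is the book-keeping around the outer loop \texttt{(init; DFS!; try unroot else break)!}, which---unlike the single \texttt{DFS!} of \texttt{is-connected}---invokes \texttt{DFS!} repeatedly. Because \texttt{back} can fail once per invocation of \texttt{DFS!}, and each failure of \texttt{next\_edge} is answered by a \texttt{back} attempt, there is a risk of a circular bound between the number of \texttt{next\_edge} failures and the number of \texttt{back} attempts. I would break this circularity by bounding \texttt{back} directly: its successes are at most the number of dashed edges created (hence $\le n-1$), and its failures are at most the number of \texttt{DFS!} invocations (hence $\mathrm{O}(n)$ via \texttt{init}), so \texttt{back} is attempted $\mathrm{O}(n)$ times, which in turn bounds the $\mathrm{O}(n)$ unsuccessful \texttt{next\_edge} calls. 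Combining these with the $\mathrm{O}(m)$ successful \texttt{next\_edge} calls gives the linear total.
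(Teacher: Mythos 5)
Your proposal is correct and takes essentially the same two-part approach as the paper's proof: first constant matching attempts per rule (using Propositions \ref{prop:unique-root}, \ref{prop:red-edges-dag} and \ref{prop:path-red-dag} together with the compiler's constant-time operations), then linear bounds on the number of calls of each rule. You simply spell out explicitly---notably the non-circular bounding of \texttt{back} attempts and unsuccessful \texttt{next\_edge} calls---what the paper compresses into ``analogous arguments'' to Theorem \ref{the:complexity-con}.
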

\begin{proof}
    We first show that, for every rule, there is at most one matching attempt with respect to the complexity assumptions of the updated compiler (Figure \ref{fig:complexity-assumptions}).

    An analogous argument to the proof of Theorem \ref{the:complexity-con} can be made with respect to the rules \texttt{init}, \texttt{next\_edge} (Proposition \ref{prop:red-edges-dag}), \texttt{ignore}, \texttt{move}, and \texttt{back}. Since there is at most one root node in the host graph, the rules \texttt{flag}, \texttt{loop} and \texttt{set\_flag} only require at most one matching attempt. 
    
    Similarly, analogous arguments can be made to prove that the rules \texttt{next\_edge}, \texttt{ignore} and \texttt{move} are called at most a number of times linear to the size of the graph. It is easy to see that the rule \texttt{init} can be called at most $n$ times, where $n$ is the number of nodes in the input graph (which does not vary). The rule \texttt{set\_flag} is only called once as its application results in the termination of \texttt{(init; DFS!; try unroot else break)!}. The rule \texttt{loop} precedes the invocation of \texttt{back} and is therefore called as many times as the latter, thus a number of times linear to the size of the input graph. Finally, the rule \texttt{flag} is only called once in \texttt{Check}. Therefore, the program \texttt{is-dag} runs in time linear to the size of the input graph.
\end{proof}

\begin{center}
    \begin{figure}[!ht]
    \centering
    \begin{tikzpicture}
    \begin{axis}[
      xlabel=size (number of nodes and edges),
      ylabel=runtime (ms), ylabel style={above=0.2mm},
      width=9.2cm,height=7.2cm,
      legend style={at={(1.58,,0.64)}},
      ymajorgrids=true,
      grid style=dashed]
      \addplot[color=plot2, mark=square*] table [y=time, x=n]{Figures/Benchmarks/DAG/is-dag-list.dat};
      \addlegendentry{List graphs}
      \addplot[color=plot1, mark=square*] table [y=time, x=n]{Figures/Benchmarks/DAG/is-dag-cycle.dat};
      \addlegendentry{Cycle graphs}
      \addplot[color=plot4, mark=square*] table [y=time, x=n]{Figures/Benchmarks/DAG/is-dag-grid.dat};
      \addlegendentry{Square grids}
      \addplot[color=plot5, mark=square*] table [y=time, x=n]{Figures/Benchmarks/DAG/is-dag-tree.dat};
      \addlegendentry{Binary trees}
      \addplot[color=plot7, mark=square*] table [y=time, x=n]{Figures/Benchmarks/DAG/is-dag-discrete.dat};
      \addlegendentry{Discrete graphs}
            \addplot[color=plot6, mark=square*] table [y=time, x=n]{Figures/Benchmarks/DAG/is-dag-star.dat};
      \addlegendentry{{Star graphs}}
    \end{axis}  
    \end{tikzpicture}
    \caption{Measured performance of the program \texttt{is-dag}.}
    \label{fig:bench-is-dag}
\end{figure}
\end{center}

Empirical runtimes are shown in Figure \ref{fig:bench-is-dag} for various graph classes of bounded degree (Figures \ref{fig:graph-class-1b}, \ref{fig:graph-class-1c}, \ref{fig:graph-class-2b}, \ref{fig:graph-class-3b} and \ref{fig:graph-class-6b}) and unbounded degree (Figure \ref{fig:graph-class-2a}).

\section{Case Study: The Bellman-Ford Algorithm}
\label{s:bf}
In this section, we present the first implementation of a single-source shortest-path graph algorithm in GP\,2 with a time complexity comparable to that of conventional imperative programming languages: $\mathrm{O}(nm)$, where $n$ and $m$ denote the numbers of nodes and edges, respectively. It computes the shortest distances from one node to all other nodes in the graph. The Bellman-Ford algorithm has been chosen for its simplistic elegance and its detection of negative-weight cycles, in contrast to other popular alternatives, such as Dijkstra's shortest-path algorithm, which only accepts graphs without negative-weight cycles.

We will first examine a pseudo-code representation of the Bellman-Ford algorithm.

\begin{algorithm}
\caption{Classic Bellman-Ford Algorithm}
\begin{algorithmic}[1]
  \STATE Input: A weighted digraph $G=(V,E)$ of $n$ nodes and a source node $s \in V$
  \STATE Set \texttt{dist}$[v] = \infty$ for all $v$ in $V$
  \STATE Set \texttt{dist}$[s] = 0$
  \FOR{$n-1$ iterations}
    \FOR{each edge $e$ in $E$}
      \STATE \texttt{dist}$[t(e)] = min(\texttt{dist}[s(e)]+w(e),\ \texttt{dist}[t(e)])$
    \ENDFOR
  \ENDFOR
  \FOR{each edge $e$ in $E$}
    \IF{$\texttt{dist}[s(e)]+w(e) < \texttt{dist}[t(e)]$}
        \STATE Return \texttt{FALSE}
    \ENDIF
  \ENDFOR
  \STATE Return $\texttt{dist}[\cdot]$
\end{algorithmic}
\label{a:bellman}
\end{algorithm}

Algorithm~\ref{a:bellman}, adapted from \cite{Sedgewick02a}, provides a high-level overview of the classical Bellman-Ford algorithm. The chief idea is to iteratively decrease the shortest distances \texttt{dist}$[\cdot]$ in the graph until the threshold of $n-1$ iterations is reached, where $n$ is the number of nodes in the graph. The following gives a descriptive account of each section of the code.\\

\textbf{Lines 1 to 3.} The algorithm begins by initialising an array \texttt{dist}$[\cdot]$. \texttt{dist}$[v]$ defines the lowest possible accumulated cost to go to $v$ from the source node $s$ in $G$. The upper bound distance for each node is initialised to $\infty$. Naturally, the distance from the source $s$ to itself is $0$.

\textbf{Lines 4 to 8.} This stage is called the \emph{relaxation phase}. At each iteration, the algorithm iterates through all the edges in $E$ and updates their upper bound whenever it finds a shorter distance. Let us consider the edge $e = (u,\ v)$. At the $i\textsuperscript{th}$ iteration ($1 \le i \le n-1$), the algorithm checks whether \texttt{dist}$[v]$ from the $(i-1)\textsuperscript{th}$ iteration (we can consider the initialisation phase to be the $0\textsuperscript{th}$ iteration) can be further improved if there exists a shorter path to $v$ via $u$ with a weighted cost of $\texttt{dist}[u]+w(e)$. That process is repeated $n-1$ times.

\textbf{Lines 9 to 13.} This section of the code detects negative cycles. The previous phase guarantees the computation of all shortest paths from $s$ to all other nodes $v$ in $G$, given that no negative cycle exists. If a negative cycle exists, it follows that there always exists a more optimal solution. The algorithm iterates through all edges and verifies whether a relaxation can still decrease at least one upper bound (which should have all been previously minimised). Should that be possible, this implies that a negative cycle exists and hence triggers the code to return \texttt{FALSE}.

\textbf{Line 14.} If the algorithm did not terminate during the negative cycle detection, it then successfully returns \texttt{TRUE} and exits.

\subsection{Program}
\label{ss:program-bf}

The program \texttt{bellman-ford}\footnote{The concrete syntax of the program is available at: \url{https://gist.github.com/ismaili-ziad/adc54b05aaff9ef6b3c57e7ea8a68543}.} (Figure \ref{fig:bf-fig}) expects an input graph whose edges are labelled with integers. It calculates the shortest distances from a unique root node to all nodes reachable via directed paths. For each such node, the program appends the shortest distance from the root (an integer value) to the node's label. The exact specification is as follows:

\begin{description}
\item[\textbf{Input}] An arbitrary \gp{} host graph such that
\begin{enumerate}
        \item every node is marked grey,
        \item there is exactly one root node,
        \item every edge is unmarked and labelled with an integer, and
        \item there are no looping edges.\footnote{We make this assumption for simplicity. It is straightforward to extend the program to handle looping edges.}
    \end{enumerate}
\item[\textbf{Output}] If the input graph contains a directed cycle whose sum of edge labels is negative and there is a directed path from the root to that cycle, \emph{failure}.\\
Otherwise, a graph obtained from the input graph as follows: 
\begin{enumerate}
        \item for each node reachable from the root, append its shortest distance to the label, 
        \item append the string \texttt{f} to the label of all other nodes, and 
        \item mark all edges blue.
    \end{enumerate}
\end{description}

\definecolor{gp2pink}{RGB}{255, 153, 238}

\begin{figure}[!ht]
    \begin{mdframed}[linewidth=1pt]
    \begin{verbatim}
Main  = set_counter; count!; (decrement; Relax!; Clean!)!; Final
Relax = root1; try no_deg 
               else ((unmarked_edge; try {unvisited, reduce}; 
                    finish)!; unroot1)
Clean = root2; unmark_edge!; unroot2
Final = (root1; (unmarked_edge; if reduce 
                                then set_flag; finish)!; 
                                unroot1)!;
        if flag then fail; delete_counter; no_deg_inv!
    \end{verbatim}
    
    \begin{tikzpicture}
    \tikzstyle{every node}=[font=\ttfamily]
    
    \draw (1.6,0.75) node[align=left] {set\_counter(x:list)};
    \node[rectangle, thick, rounded corners=7, draw=black, minimum size=5mm,  double, double distance=1pt, fill=gray!50 , label=below:\tiny\tiny1](a2) at (0,0){x};
    \draw (1,0) node[] {$\Rightarrow$};
    \node[rectangle, thick, rounded corners=7, draw=black, minimum size=5mm, fill=cyan, label=below:\tiny\tiny 1](a2) at (2,0){x:0};
    \node[rectangle, thick, rounded corners=7, draw=black, minimum size=5mm, fill=green!60, label=below:\tiny\tiny ](a3) at (3.33,0){0};
    \draw[->, thick] (a3) edge[dashed] (a2);
    \draw (4.33,1) -- (4.33,-0.5);
    \end{tikzpicture}
    \hspace{1em}
    \begin{tikzpicture}
    \tikzstyle{every node}=[font=\ttfamily]
    \draw (1.6,0.75) node[align=left] {count(x:list;i:int)};
    \node[rectangle, thick, rounded corners=7, draw=black, minimum size=5mm, fill=gray!50, label=below:\tiny\tiny 1](a1) at (0,0){x};
    \node[rectangle, thick, rounded corners=7, draw=black, minimum size=5mm, fill=green!60, label=below:\tiny\tiny 2](a2) at (1,0){i};
    \draw (2,0) node[] {$\Rightarrow$};
    \node[rectangle, thick, rounded corners=7, draw=black, minimum size=5mm, fill=cyan, label=below:\tiny\tiny 1](a3) at (3.5,0){x:"f"};
    \node[rectangle, thick, rounded corners=7, draw=black, minimum size=5mm, fill=green!60, label=below:\tiny\tiny 2](a4) at (5,0){i+1};
    \draw[-, thick];
    \end{tikzpicture}
    \\
    \begin{tikzpicture}
    \tikzstyle{every node}=[font=\ttfamily]
    \draw (1,0.75) node[align=left] {root1(x:list)};
    \node[rectangle, thick, rounded corners=7, draw=black, minimum size=5mm, fill=cyan, label=below:\tiny\tiny 1](a1) at (0,0){x};
    \draw (1,0) node[] {$\Rightarrow$};
    \node[rectangle, thick, rounded corners=7, draw=black, minimum size=5mm,  double, double distance=1pt, fill=cyan, label=below:\tiny\tiny1](a2) at (2,0){x};
    \draw[-, thick];
    \draw (3,1) -- (3,-0.5);
    \end{tikzpicture}
    \hspace{1em}
    \begin{tikzpicture}
    \tikzstyle{every node}=[font=\ttfamily]
    \draw (1.1,0.75) node[align=left] {no\_deg(x:list)};
    \node[rectangle, thick, rounded corners=7, draw=black, minimum size=5mm,  double, double distance=1pt, fill=cyan, label=below:\tiny\tiny\,](a1) at (0,0){x};
    \draw (1,0) node[] {$\Rightarrow$};
    \node[rectangle, thick, rounded corners=7, draw=black, minimum size=5mm, label=below:\tiny\tiny\,](a2) at (2,0){x};
    \draw[-, thick];
    \draw (0,-0.62) -- (0,-0.62);
    \draw (3,1) -- (3,-0.5);
    \end{tikzpicture}
    \hspace{1em}
    \begin{tikzpicture}
    \tikzstyle{every node}=[font=\ttfamily]
    \draw (1.15,0.75) node[align=left] {unroot1(x:list)};
    \node[rectangle, thick, rounded corners=7, draw=black, minimum size=5mm,  double, double distance=1pt, fill=cyan, label=below:\tiny\tiny1](a1) at (0,0){x};
    \draw (1,0) node[] {$\Rightarrow$};
    \node[rectangle, thick, rounded corners=7, draw=black, minimum size=5mm, fill=gray!50, label=below:\tiny\tiny 1](a2) at (2,0){x};
    \draw[-, thick];
    \end{tikzpicture}
    \\
    \begin{tikzpicture}
    \tikzstyle{every node}=[font=\ttfamily]
    \draw (1.38,0.75) node[align=left] {decrement(i:int)};
    \node[rectangle, thick, rounded corners=7, draw=black, minimum size=5mm, fill=green!60, label=below:\tiny\tiny 1](a1) at (0,0){i};
    \draw (1,0) node[] {$\Rightarrow$};
    \node[rectangle, thick, rounded corners=7, draw=black, minimum size=5mm, fill=green!60, label=below:\tiny\tiny1](a2) at (2.33,0){i-1};
    \draw[-, thick];
    \draw (3.66,1) -- (3.66,-0.9);
    \draw (0.8,-0.85) node[align=left] {where i > 0};
    \end{tikzpicture}
    \hspace{1em}
    \begin{tikzpicture}
    \tikzstyle{every node}=[font=\ttfamily]
    \draw (2.2,0.75) node[align=left] {unmarked\_edge(x,y,z:list)};
    \node[rectangle, thick, rounded corners=7, draw=black, minimum size=5mm,  double, double distance=1pt, fill=cyan, label=below:\tiny\tiny1](a1) at (0,0){x};
    \node[rectangle, thick, rounded corners=7, draw=black, minimum size=5mm, fill=gp2pink, label=below:\tiny\tiny2](a2) at (2,0){y};
    \draw (3,0) node[] {$\Rightarrow$};
    \node[rectangle, thick, rounded corners=7, draw=black, minimum size=5mm,  double, double distance=1pt, fill=cyan, label=below:\tiny\tiny1](a3) at (4,0){x};
    \node[rectangle, thick, rounded corners=7, draw=black, minimum size=5mm, fill=gp2pink, label=below:\tiny\tiny2](a4) at (6,0){y};
    \draw[->, line width=1.2pt] (a1) edge node[above, color = black]{z} (a2) (a3) edge[red] node[above, color = black]{z} (a4);
    \draw (0,-1.1) -- (0,-1.1);
    \end{tikzpicture}
    \\
    \begin{tikzpicture}
    \tikzstyle{every node}=[font=\ttfamily]
    \draw (2,0.75) node[align=left] {unmark\_edge(x,y,z:list)};
    \node[rectangle, thick, rounded corners=7, draw=black, minimum size=5mm,  double, double distance=1pt, fill=cyan, label=below:\tiny\tiny1](a1) at (0,0){x};
    \node[rectangle, thick, rounded corners=7, draw=black, minimum size=5mm, fill=gp2pink, label=below:\tiny\tiny2](a2) at (2,0){y};
    \draw (3,0) node[] {$\Rightarrow$};
    \node[rectangle, thick, rounded corners=7, draw=black, minimum size=5mm,  double, double distance=1pt, fill=cyan, label=below:\tiny\tiny1](a3) at (4,0){x};
    \node[rectangle, thick, rounded corners=7, draw=black, minimum size=5mm, fill=gp2pink, label=below:\tiny\tiny2](a4) at (6,0){y};
    \draw[->, line width=1.2pt] (a1) edge[cyan] node[above, color = black]{z} (a2) (a3) edge node[above, color = black]{z} (a4);
    \draw (6.95,1) -- (6.95,-0.5);
    \end{tikzpicture}
    \hspace{1em}
    \begin{tikzpicture}
    \tikzstyle{every node}=[font=\ttfamily]
    \draw (1.5,0.75) node[align=left] {no\_deg\_inv(x:list)};
    \node[rectangle, thick, rounded corners=7, draw=black, minimum size=5mm, label=below:\tiny\tiny1](a1) at (0,0){x};
    \draw (1,0) node[] {$\Rightarrow$};
    \node[rectangle, thick, rounded corners=7, draw=black, minimum size=5mm, fill=gray!50, label=below:\tiny\tiny 1](a2) at (2,0){x};
    \draw[-, thick];
    \end{tikzpicture}
    \\
    \begin{tikzpicture}
    \tikzstyle{every node}=[font=\ttfamily]
    \draw (2.25,0.75) node[align=left] {unvisited(x,y:list;s,w:int)};
    \node[rectangle, thick, rounded corners=7, draw=black, minimum size=5mm,  double, double distance=1pt, fill=cyan, label=below:\tiny\tiny1](a1) at (0,0){x:s};
    \node[rectangle, thick, rounded corners=7, draw=black, minimum size=5mm, fill=gp2pink, label=below:\tiny\tiny2](a2) at (2,0){y:"f"};
    \draw (3.1,0) node[] {$\Rightarrow$};
    \node[rectangle, thick, rounded corners=7, draw=black, minimum size=5mm,  double, double distance=1pt, fill=cyan, label=below:\tiny\tiny1](a3) at (4,0){x:s};
    \node[rectangle, thick, rounded corners=7, draw=black, minimum size=5mm, fill=gp2pink, label=below:\tiny\tiny2](a4) at (6,0){y:(s+w)};
    \draw[->, line width=1.2pt] (a1) edge[red] node[above, color = black]{w} (a2) (a3) edge[red] node[above, color = black]{w} (a4);
    \draw (7.33,1) -- (7.33,-0.5);
    \end{tikzpicture}
    \hspace{1em}
    \begin{tikzpicture}
    \tikzstyle{every node}=[font=\ttfamily]
    \draw (1,0.75) node[align=left] {root2(x:list)};
    \node[rectangle, thick, rounded corners=7, draw=black, minimum size=5mm, fill=gray!50, label=below:\tiny\tiny 1](a1) at (0,0){x};
    \draw (1,0) node[] {$\Rightarrow$};
    \node[rectangle, thick, rounded corners=7, draw=black, minimum size=5mm, double, double distance=1pt, fill=cyan, label=below:\tiny\tiny1](a2) at (2,0){x};
    \draw[-, thick];
    \end{tikzpicture}
    \\
    \begin{tikzpicture}
    \tikzstyle{every node}=[font=\ttfamily]
    \draw (2.15,0.75) node[align=left] {reduce(x,y:list;s,t,w:int)};
    \node[rectangle, thick, rounded corners=7, draw=black, minimum size=5mm,  double, double distance=1pt, fill=cyan, label=below:\tiny\tiny1](a1) at (0,0){x:s};
    \node[rectangle, thick, rounded corners=7, draw=black, minimum size=5mm, fill=gp2pink, label=below:\tiny\tiny2](a2) at (2,0){y:t};
    \draw (3.1,0) node[] {$\Rightarrow$};
    \node[rectangle, thick, rounded corners=7, draw=black, minimum size=5mm,  double, double distance=1pt, fill=cyan, label=below:\tiny\tiny1](a3) at (4,0){x:s};
    \node[rectangle, thick, rounded corners=7, draw=black, minimum size=5mm, fill=gp2pink, label=below:\tiny\tiny2](a4) at (6,0){y:(s+w)};
    \draw[->, line width=1.2pt] (a1) edge[red] node[above, color = black]{w} (a2) (a3) edge[red] node[above, color = black]{w} (a4);
    \draw (7.33,1) -- (7.33,-0.9);
    \draw (0.9,-0.85) node[align=left] {where s+w < t};
    \end{tikzpicture}
    \hspace{1em}
    \begin{tikzpicture}
    \tikzstyle{every node}=[font=\ttfamily]
    \draw (1.15,0.75) node[align=left] {unroot2(x:list)};
    \node[rectangle, thick, rounded corners=7, draw=black, minimum size=5mm,  double, double distance=1pt, fill=cyan, label=below:\tiny\tiny1](a1) at (0,0){x};
    \draw (1,0) node[] {$\Rightarrow$};
    \node[rectangle, thick, rounded corners=7, draw=black, minimum size=5mm, fill=cyan, label=below:\tiny\tiny 1](a2) at (2,0){x};
    \draw[-, thick];
    \draw (0,-1.1) -- (0,-1.1);
    \end{tikzpicture}
    \\
    \begin{tikzpicture}
    \tikzstyle{every node}=[font=\ttfamily]
    \draw (1.5,0.75) node[align=left] {finish(x,y,z:list)};
    \node[rectangle, thick, rounded corners=7, draw=black, minimum size=5mm,  double, double distance=1pt, fill=cyan, label=below:\tiny\tiny1](a1) at (0,0){x};
    \node[rectangle, thick, rounded corners=7, draw=black, minimum size=5mm, fill=gp2pink, label=below:\tiny\tiny2](a2) at (2,0){y};
    \draw (3.1,0) node[] {$\Rightarrow$};
    \node[rectangle, thick, rounded corners=7, draw=black, minimum size=5mm,  double, double distance=1pt, fill=cyan, label=below:\tiny\tiny1](a3) at (4,0){x};
    \node[rectangle, thick, rounded corners=7, draw=black, minimum size=5mm, fill=gp2pink, label=below:\tiny\tiny2](a4) at (6,0){y};
    \draw[->, line width=1.2pt] (a1) edge[red] node[above, color = black]{z} (a2) (a3) edge[cyan] node[above, color = black]{z} (a4);
    \draw (7,1) -- (7,-0.5);
    \end{tikzpicture}
    \hspace{1em}
    \begin{tikzpicture}
    \tikzstyle{every node}=[font=\ttfamily]
    \draw (1.3,0.75) node[align=left] {set\_flag(x:list)};
    \node[rectangle, thick, rounded corners=7, draw=black, minimum size=5mm, fill=green!60, label=below:\tiny\tiny1](a1) at (0,0){x};
    \draw (1,0) node[] {$\Rightarrow$};
    \node[rectangle, thick, rounded corners=7, draw=black, minimum size=5mm, fill=green!60, label=below:\tiny\tiny 1](a2) at (2,0){-1};
    \draw[-, thick];
    \end{tikzpicture}
    \\
    \begin{tikzpicture}
    \tikzstyle{every node}=[font=\ttfamily]
    \draw (2.05,0.75) node[align=left] {delete\_counter(x,y:list)};
    \node[rectangle, thick, rounded corners=7, draw=black, minimum size=5mm,  double, double distance=1pt, fill=cyan, label=below:\tiny\tiny1](a2) at (3.33,0){x};
    \draw (2.33,0) node[] {$\Rightarrow$};
    \node[rectangle, thick, rounded corners=7, draw=black, minimum size=5mm, fill=cyan, label=below:\tiny\tiny 1](a2) at (0,0){x};
    \node[rectangle, thick, rounded corners=7, draw=black, minimum size=5mm, fill=green!60, label=below:\tiny\tiny ](a3) at (1.33,0){y};
    \draw[->, thick] (a3) edge[dashed] (a2);
    \draw (4.8,1) -- (4.8,-0.5);
    \end{tikzpicture}
    \hspace{0.8em}
    \begin{tikzpicture}
    \tikzstyle{every node}=[font=\ttfamily]
    \draw (0.25,0.75) node[align=left] {flag()};
    \node[rectangle, thick, rounded corners=7, draw=black, minimum size=5mm, fill=green!60, label=below:\tiny\tiny1](a1) at (0,0){-1};
    \draw (1,0) node[] {$\Rightarrow$};
    \node[rectangle, thick, rounded corners=7, draw=black, minimum size=5mm, fill=green!60, label=below:\tiny\tiny 1](a2) at (2,0){-1};
    \draw[-, thick];
    \end{tikzpicture}
    
    \end{mdframed}
    \caption{The program \texttt{bellman-ford}.}
    \label{fig:bf-fig}
    \end{figure}

Note that the input graph may contain negative-weight cycles reachable from the root, as the program is designed to detect them. The program first creates a \emph{counter} node labelled with the integer $n-1$, where $n$ is the number of nodes in the input graph. The node is marked green, a mark unique to that node to allow for fast, constant-time access. The rule \texttt{set\_counter} initialises this process by creating a green node labelled $0$ and marking the root node blue to indicate it has been counted. The rule also unroots the node, creates a dashed edge from the counter to store the location of the original root, and appends the value $0$ to the previously rooted node's label to denote that its distance to itself is $0$.

The rule \texttt{count} is then applied iteratively to increment the counter for each grey node in the graph. As nodes are counted, they are marked blue to prevent duplicate counting. Additionally, the rule appends the string \texttt{f} to each node's label. Here, \texttt{f} serves as a representation of infinity, in practice commonly represented by a large positive integer in Bellman-Ford implementations, but is implemented as a string to avoid dependency on the integer size limitations of the GP\,2 compiler. Define $l(v)$ as the value appended to the list label of node $v$ by \texttt{set\_counter} or \texttt{count}.

The sequence \texttt{(decrement; Relax!; Clean!)} implements the relaxation phase of the Bellman-Ford algorithm. The rule \texttt{decrement} decrements the value in the counter by $1$ should that value be non-zero; this is to ensure that the relaxation phase happens exactly $n-1$ times. Initially, all nodes incident to at least one edge are marked blue in the host graph, no root exists, and all edges are unmarked. A single execution of \texttt{Relax} accomplishes the following:

\begin{enumerate}
    \item The rule \texttt{root1} selects a blue node and roots it. At this stage, there is exactly one root in the host graph.
    \item The rule \texttt{no\_deg} checks whether the root node is isolated (notice that the node is not in the interface). If it is, it unmarks and unroots it to prevent the procedure \texttt{Relax} from considering it again in a future iteration. Isolated nodes do not have outgoing edges, so they need not be considered in the relaxation phase.
    \item In case the node is not isolated, the sequence \texttt{(unmarked\_edge; try \{unvisited, reduce\})} is invoked as long as possible. \texttt{unmarked\_edge} selects an unmarked edge outgoing from the root node and marks it red; let $u$ and $v$ be the source and target of the edge, respectively, and $w$ the weight of the edge; then two possible cases arise:
    \begin{enumerate}
        \item $l(u)$ is an integer value (i.e. not the string \texttt{f}). If $l(v) = \texttt{f}$, then the rule \texttt{unvisited} applies and sets $l(v)$ to $l(u)+w$. Otherwise, if $l(u)+w$ is strictly lesser than $l(v)$, the rule \texttt{reduce} applies and sets $l(v)$ to $l(u)+w$; else, none of the rules in the rule set \texttt{\{unvisited, reduce\}} applies.
        \item $l(u)$ is the string \texttt{f}. None of the rules in the rule set \texttt{\{unvisited, reduce\}} applies.
    \end{enumerate}
    \item The rule \texttt{finish} marks the selected edge blue to denote that it has been processed, so that it is no longer considered for relaxation until all other unmarked edges have been relaxed.
    \item The rule \texttt{unroot1} unroots the selected root node and unmarks it, so that it is no longer considered as a source node for relaxation until all other grey nodes have been considered. After the application of this rule, there is no root node in the host graph.
\end{enumerate}
Following an execution of \texttt{Relax!}, all edges in the host graph are marked blue (due to \texttt{finish}), and all non-isolated nodes (i.e. nodes whose degrees are greater than $0$) are marked grey (due to \texttt{unroot1}). The looping procedure \texttt{Clean!} prepares the host graph for another round of edge relaxations by unmarking all edges and turning all grey nodes blue.

The procedure \texttt{Final} checks that no negative-weight cycle reachable from the source exists in the host graph by running a modified version of the procedure \texttt{Relax!} once more. Note that the counter node is labelled $0$ at this stage. Should the rule \texttt{reduce} be applicable, the counter node's label is set to $-1$ by the rule \texttt{set\_flag}.

Finally, the program checks the applicablity of the rule \texttt{flag}, which can only apply if and only if \texttt{set\_flag} had been applied. If \texttt{flag} is applicable, it entails that a negative cycle was found, and the program fails. Otherwise, no negative cycle exists in the host graph and \texttt{no\_deg\_inv!} marks all unmarked nodes (i.e. isolated nodes) grey, as they were in the input graph.

\subsection{Proof of Correctness}

We now examine the correctness of the program \texttt{bellman-ford}.

From now on, we call graphs satisfying the assumptions of the specification given at the beginning of the subsection as \emph{input graphs}.
\begin{prop}
    \label{prop:counter}
Let $G$, $I$ and $H$ be graphs such that $G$ is an input graph, and $G \Rightarrow_{\texttt{set\_counter}} I \Rightarrow_{\texttt{count!}} H$. Then $H$ is obtained from $G$ as follows:

\begin{enumerate}
    \item All nodes are marked blue.
    \item The integer $0$ is appended to the list label of the root node.
    \item The string \texttt{f} is appended to the list labels of other nodes (i.e. non-root nodes).
    \item A new green-marked node is added, labelled with the integer $n-1$, where $n$ is the number of nodes in $G$ (prior to the addition of the node).
    \item A new dashed-marked edge is added, with the green-marked node as the source and the root node in $G$ as the target.
    \item The root node is unrooted.
\end{enumerate}
\end{prop}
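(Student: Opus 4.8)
The plan is to analyse the two phases \texttt{set\_counter} and \texttt{count!} separately and then combine their effects. Since $G$ is an input graph (Definition \ref{def:input-bf}), it is non-empty, every node is grey, and exactly one node is rooted. The left-hand side of \texttt{set\_counter} is a grey rooted node, so the rule matches precisely the unique root of $G$; hence \texttt{set\_counter} applies and its match is forced. First I would read off directly from the rule what $I$ looks like: the matched root is marked blue, has $0$ appended to its label and is unrooted, a fresh green node labelled $0$ is created, and a dashed edge is added from this green node to the (now blue) former root. This already establishes items (2), (4) with the counter initialised to $0$, (5), and (6), as well as the fact that the former root is blue. All remaining $n-1$ nodes of $G$ are still grey in $I$, and $I$ contains exactly one green node.

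Next I would analyse \texttt{count!} by establishing an invariant maintained by every application of \texttt{count}. The left-hand side of \texttt{count} consists of a grey node (node~$1$) and a green node (node~$2$), which are necessarily distinct because their marks differ; an application marks the grey node blue, appends the string \texttt{f} to its label, and replaces the green node's integer label $i$ by $i+1$. The key invariant is: \emph{throughout \texttt{count!}, there is exactly one green node (the counter created by \texttt{set\_counter}), and its label equals the number of \texttt{count}-applications performed so far.} This holds because \texttt{count} neither creates nor deletes green nodes and only relabels the unique counter; consequently node~$2$ always matches the counter, and node~$1$ ranges over the grey nodes. Since each application strictly decreases the number of grey nodes by one and no rule in this phase produces a grey node, \texttt{count!} terminates exactly when no grey node remains, after precisely $n-1$ applications (the number of grey nodes in $I$). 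At that point every former non-root node has been marked blue with \texttt{f} appended, giving items (1) and (3), and the counter's label is $0 + (n-1) = n-1$, completing item (4).

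The main obstacle is the matching bookkeeping rather than any deep argument: I must justify that \texttt{count} can never match the counter or the former root as node~$1$ (both are non-grey), that node~$2$ is always the counter (uniqueness of the green node), and that the order in which the $n-1$ grey nodes are processed is irrelevant to the final graph, since each \texttt{count}-application targets a distinct node and the only shared object is the counter, whose label is merely incremented and hence order-independent. Once these points are settled, assembling items (1)--(6) is immediate, and I would conclude that $H$ is exactly the graph described, independently of the nondeterministic choices made by \texttt{count!}.
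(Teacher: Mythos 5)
Your proposal is correct and follows essentially the same route as the paper's proof: first read off the effect of \texttt{set\_counter} on the unique grey rooted node, then argue that \texttt{count!} is applied exactly once per remaining grey node ($n-1$ times), incrementing the unique green counter each time. Your explicit invariant and the order-independence remark merely spell out bookkeeping that the paper's proof leaves implicit, so there is no substantive difference in approach.
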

\begin{proof}
    Let us first look at the application of the rule \texttt{set\_counter}; that is, $G\Rightarrow_{\texttt{set\_counter}}I$. Initially, all nodes in $G$ are grey and one node is rooted. Thus, the application of the rule \texttt{set\_counter} marks the root node blue, appends $0$ to its list label, creates a green node labelled $0$ and generates a dashed edge from that newly created green node pointing at the previously rooted node. Now, let us examine the application of \texttt{count!}; that is, $I \Rightarrow_{\texttt{count!}} H$. $I$ consists of $n+1$ nodes and $m$ edges, where $n$ and $m$ are the numbers of nodes and edges in $G$, with one green node, one blue node and $n-1$ grey nodes. The rule \texttt{count} marks one grey node blue, appends the string \texttt{f} to its list label, and increments the counter by $1$. Since the rule is applied as many times as there are grey nodes in $I$, the counter is labelled $n-1$ in $H$. Hence, $H$ consists of one green node labelled $0$, $n$ blue nodes, $m$ unmarked edges retaining their source and target nodes (notice that none of the rules involves edges), and one dashed edge from the green node to the node rooted in $G$. In addition, the node rooted in $G$ has $0$ appended to its list label in $H$, and all other nodes have the string \texttt{f} appended to their list labels.
\end{proof}
\begin{lem}[Termination of \texttt{Relax!}]
\label{lem:term-relax}
    The looping procedure \texttt{Relax!} terminates.
\end{lem}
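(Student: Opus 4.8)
The plan is to treat \texttt{Relax!} as two nested loops and supply a well-founded measure for each, working from the inside out. The inner loop is \texttt{(unmarked\_edge; try \{unvisited, reduce\}; finish)!} and the outer loop is \texttt{Relax!} itself. Establishing termination of the inner loop first is a prerequisite for the outer argument, since a single \texttt{Relax} step only completes after the inner loop has run to exhaustion and \texttt{unroot1} has fired.

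For the inner loop, I would take as measure $\#X$ the number of unmarked edges in the host graph $X$. Each pass begins with \texttt{unmarked\_edge}, which either fails to match — in which case the loop breaks — or selects an unmarked edge outgoing from the root and marks it red. The intervening \texttt{try \{unvisited, reduce\}} only rewrites node labels (both rules leave node~\texttt{2} marked \texttt{any} and touch no edges), and \texttt{finish} re-marks the red edge blue. By inspection, none of these three rules creates an edge or restores an unmarked one, so every completed pass strictly decreases $\#$. As the edge set is finite, the inner loop terminates.

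For the outer loop I would use the measure $\#X$ equal to the number of blue non-rooted nodes in $X$. Each \texttt{Relax} step opens with \texttt{root1}, which requires a blue non-rooted node and roots it; if none exists, \texttt{root1} fails, \texttt{Relax} is inapplicable, and \texttt{Relax!} terminates. Otherwise control passes to \texttt{try no\_deg else (...)}: either \texttt{no\_deg} removes the isolated rooted node and reinserts it unmarked, or the inner loop runs to completion (terminating by the previous paragraph) and \texttt{unroot1} turns the rooted node grey. In both branches the node selected by \texttt{root1} ends the step non-blue, so $\#$ drops by one at \texttt{root1}. The decisive point, which I would confirm by inspecting every rule occurring in \texttt{Relax}, is that \emph{no} rule marks a node blue: \texttt{unmarked\_edge}, \texttt{unvisited}, \texttt{reduce} and \texttt{finish} leave node marks untouched, \texttt{no\_deg} yields an unmarked node, and \texttt{unroot1} yields a grey node. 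Hence $\#$ never increases during a \texttt{Relax} step, so it strictly decreases over the whole step; since the node set is finite, \texttt{Relax!} terminates.

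The main obstacle is the interplay between the two loops together with the verification that the outer measure genuinely decreases. One must check that the inner loop cannot undo the marking performed by \texttt{root1}, and — more importantly — that no rule invoked anywhere inside \texttt{Relax} reintroduces a blue node that would raise $\#$ back up. This reduces to a routine but essential case analysis over the six rules appearing in \texttt{Relax}. A shorter alternative would be to note that the argument is structurally identical to the termination proof of Lemma~\ref{lem:term-con}, with ``unmarked edges'' and ``blue non-rooted nodes'' playing the roles of the two measures used there.
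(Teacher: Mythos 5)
Your proof is correct and takes essentially the same route as the paper's: it first establishes termination of the inner loop \texttt{(unmarked\_edge; try \{unvisited, reduce\}; finish)!} using the number of unmarked edges as a measure, and then handles \texttt{Relax!} itself with a measure counting blue nodes. Your refinement of the outer measure to \emph{blue non-rooted} nodes is in fact slightly sharper than the paper's, which counts all blue nodes and asserts that \texttt{root1} decreases it even though \texttt{root1} leaves its node blue (only \texttt{no\_deg} or \texttt{unroot1} later removes the blue mark); with your measure the decrease genuinely occurs at \texttt{root1}, and both versions reach the same conclusion.
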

\begin{proof}
    Let us show that the loop \texttt{(unmarked\_edge; try {unvisited, reduce}; finish)!} terminates first. To show termination, consider the measure $\#X$ consisting of the number of unmarked edges in the host graph $X$. The rule \texttt{unmarked\_edge} is invoked at the beginning of the loop, and if it fails to match, the loop breaks. Clearly, an application of \texttt{unmarked\_edge} reduces the measure $\#$. Since the number of edges is finite and no rule in the loop unmarks an edge, the rule eventually fails and the loop terminates.

    Coming back to \texttt{Relax!}, now consider $\#X$ to be the measure consisting of the number of blue nodes in the host graph $X$. Clearly, an application of \texttt{root1} reduces the measure $\#$. Since there is a finite number of nodes in the graph and no other rule in the procedure marks a node blue, the rule eventually fails and \texttt{Relax!} terminates.
\end{proof}
\begin{lem}[Termination of \texttt{Clean!}]
\label{lem:term-clean}
    The looping procedure \texttt{Clean!} terminates.
\end{lem}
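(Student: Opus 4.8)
The plan is to mirror the termination argument of Lemma \ref{lem:term-relax}, separating the inner loop \texttt{unmark\_edge!} from the outer loop \texttt{Clean!} and giving each a strictly decreasing well-founded measure.

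First I would establish that the inner loop \texttt{unmark\_edge!} terminates. Take the measure $\#X$ to be the number of blue edges in the host graph $X$. The rule \texttt{unmark\_edge} matches a blue edge outgoing from the root and turns it unmarked, so every application strictly decreases $\#$. Since no rule inside this loop creates a blue edge or re-marks an edge blue, and the edge set is finite, \texttt{unmark\_edge} must eventually fail to match, whereupon the inner loop breaks.

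Next I would treat the outer loop \texttt{Clean!}, whose body is the sequence \texttt{root2; unmark\_edge!; unroot2}. This time let $\#X$ be the number of grey nodes in $X$. The body begins with \texttt{root2}, which requires a grey node; if none exists, \texttt{root2} fails, the body fails, and \texttt{Clean!} terminates. If \texttt{root2} does apply, it converts the chosen grey node into a rooted blue node; then \texttt{unmark\_edge!} terminates as shown above and changes no node mark; and finally \texttt{unroot2} matches the unique root and unroots it while keeping it blue. Thus each successful pass of the body turns exactly one grey node blue and creates no new grey node, so $\#$ strictly decreases. As the node set is finite, the outer loop terminates.

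The one point requiring care is that, once \texttt{root2} has succeeded, the remainder of the body cannot get stuck and thereby fail to effect the decrease of $\#$: I would note that after \texttt{root2} there is exactly one root node, that \texttt{unmark\_edge} keeps this root in its interface and hence preserves it throughout \texttt{unmark\_edge!}, and that \texttt{unroot2} matches any rooted (blue) node. Consequently \texttt{unroot2} is guaranteed to apply, so whenever \texttt{root2} succeeds the grey-node count genuinely drops by one. This makes the measure strictly decreasing and yields termination of \texttt{Clean!}.
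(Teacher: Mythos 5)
Your proof is correct and follows essentially the same route as the paper: the paper's own argument uses exactly your outer measure (the number of grey nodes, strictly decreased by \texttt{root2}, with no rule in \texttt{Clean} re-marking a node grey). Your additions — the explicit blue-edge measure for the inner loop \texttt{unmark\_edge!} and the check that \texttt{unroot2} always applies — are sound and merely make explicit what the paper leaves implicit (the inner-loop argument mirrors its proof of Lemma~\ref{lem:term-relax}, and in any case a failure inside the loop body would itself terminate the loop under GP\,2 semantics).
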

\begin{proof}
    Define $\#X$ as the measure consisting of the number of grey nodes in the host graph $X$. An application of \texttt{root2} clearly reduces that measure, and since there is a finite number of nodes in the host graph and no rule in \texttt{Clean} marks a node grey, the rule eventually fails and the loop terminates.
\end{proof}
\begin{prop}
    \label{prop:n-1-times}
    Upon the execution of the program \texttt{bellman-ford} on an input graph $G$, the looping sequence \texttt{(decrement; Relax!; Clean!)!} terminates after exactly $n-1$ iterations, where $n$ is the number of nodes in $G$.
\end{prop}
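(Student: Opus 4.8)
The plan is to treat the integer label of the green counter node as a loop variant. By Proposition~\ref{prop:counter}, immediately before the loop \texttt{(decrement; Relax!; Clean!)!} is entered, the host graph contains a unique green node whose label carries the integer $n-1$. The rule \texttt{decrement} is the first command of the loop body; it matches a green node labelled by an integer $i$ under the guard \texttt{where i > 0} and, when applicable, replaces $i$ by $i-1$. Since GP\,2 evaluates a sequence left-to-right and a sequence fails as soon as its first command fails, an iteration of the body can proceed past \texttt{decrement} only while the counter is strictly positive, and the loop \texttt{(...)!} terminates exactly when \texttt{decrement} first fails, that is, when the counter has reached $0$. It therefore suffices to show that the counter is decremented by precisely one on each completed iteration and is otherwise left untouched.

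The key invariant is that no rule occurring in \texttt{Relax} or \texttt{Clean} alters the counter node or its label, and that the counter remains the unique green node throughout the loop. I would establish this by inspecting the ten rules reachable from \texttt{Relax!} and \texttt{Clean!}, namely \texttt{root1}, \texttt{no\_deg}, \texttt{unmarked\_edge}, \texttt{unvisited}, \texttt{reduce}, \texttt{finish}, \texttt{unroot1}, \texttt{root2}, \texttt{unmark\_edge} and \texttt{unroot2}. None of these has a green node on either side, so none creates, deletes or recolours a green node; consequently the counter stays the unique green node, and \texttt{decrement}, matching a green integer node, is forced to act on it. The only delicate point is that node \texttt{2} of \texttt{unmarked\_edge}, \texttt{unvisited}, \texttt{reduce}, \texttt{finish} and \texttt{unmark\_edge} carries the wildcard mark \texttt{any}, which could a priori match the green counter. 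This is excluded on structural grounds: node \texttt{1} of each of these rules is the blue root, so the green counter cannot be node \texttt{1}; and node \texttt{2} is the target of an edge incident to node \texttt{1} that is unmarked or blue, whereas by Proposition~\ref{prop:counter} the counter's sole incident edge is \emph{dashed} and directed \emph{out} of it. Hence the counter is matched by none of these rules, and its label is preserved by every application inside \texttt{Relax!} and \texttt{Clean!}.

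Finally I would assemble these facts. Lemmas~\ref{lem:term-relax} and~\ref{lem:term-clean} guarantee that \texttt{Relax!} and \texttt{Clean!} terminate, so every iteration whose initial \texttt{decrement} succeeds runs to completion; and because loops always succeed in GP\,2, such an iteration itself succeeds and the next one is attempted. Combining this with the invariant, the counter read before successive \texttt{decrement} calls takes the values $n-1, n-2, \ldots, 1, 0$: it is positive on exactly the first $n-1$ of these, so the body is executed in full $n-1$ times, and it is $0$ on the next attempt, at which point \texttt{decrement} fails, the body fails, and the loop stops. Hence the loop performs exactly $n-1$ iterations; the degenerate case $n=1$, where the counter already equals $0$ and the loop runs $0$ times, is covered by the same count. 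The main obstacle is the wildcard-matching subtlety of the second paragraph: one must rule out the \texttt{any}-marked node ever binding to the green counter, which rests entirely on the structural fact from Proposition~\ref{prop:counter} that the counter is attached only by an outgoing dashed edge.
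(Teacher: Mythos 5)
Your proof is correct and follows essentially the same route as the paper's: both invoke Proposition \ref{prop:counter} to start the counter at $n-1$, observe that the loop terminates exactly when \texttt{decrement} first fails, and argue that no rule reachable from \texttt{Relax!} or \texttt{Clean!} can affect the counter, which therefore decreases by exactly one per iteration. Your explicit handling of the wildcard-mark subtlety (ruling out the \texttt{any}-marked node ever binding to the green counter because the counter's sole incident edge is dashed and outgoing) merely spells out a step the paper asserts without detail, namely that ``\texttt{decrement} is the only rule in the sequence to involve the counter node.''
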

\begin{proof}
    Termination of \texttt{Relax!} and \texttt{Count!} follow from Lemmata \ref{lem:term-relax} and \ref{lem:term-clean}.
    
    Recall from Proposition \ref{prop:counter} that the host graph contains exactly one green node labelled $n-1$ after the execution of the loop \texttt{count!}. The looping sequence terminates if and only if the rule \texttt{decrement} fails to apply. However, since the rule \texttt{decrement} is the only rule in the sequence to involve the counter node, then the sequence must be invoked exactly $n-1$ times before \texttt{decrement} can no longer apply and the sequence terminates.
\end{proof}
\begin{prop}
    \label{prop:relax-root1}
    Upon termination of \texttt{Relax!}, all non-counter and non-isolated nodes in the host graph are grey.
\end{prop}
\begin{proof}
    We first show that if \texttt{Relax!} has terminated, then the last call to \texttt{root1} failed.
    
    For the sake of contradiction, suppose that the last call to \texttt{root1} succeeded. We show that \texttt{root1} is called again, contradicting our assumption.

    Upon inspection of the program sequence, a successful application of \texttt{root1} leads to a call to \texttt{no\_deg}. If the latter succeeds, then \texttt{Relax} will be called, implying that \texttt{root1} will be called again. If \texttt{no\_deg} fails to apply, then \texttt{(unmarked\_edge; try {unvisited, reduce}; finish)!} is executed. If this loop terminates, then \texttt{unroot1} will be applied because the root node is blue, implying that \texttt{Relax} and hence \texttt{root1} will be applied again. If the loop does not terminate, then \texttt{Relax!} does not terminate, contradicting Lemma~\ref{lem:term-relax}. 
    
    Since \texttt{Relax!} terminates, the last call to \texttt{root1} failed. This implies that there is no blue node in the host graph. It is easy to see that unmarked nodes can only be produced by \texttt{no\_deg}, which applies to isolated nodes only. Hence, there are neither blue nor unmarked nodes in the graph, and thus all non-counter and non-isolated nodes must be grey.
\end{proof}

\begin{prop}
    \label{prop:relax}
    On a host graph following the application of \texttt{decrement} during the execution of \texttt{bellman-ford} on an input graph, the looping procedure \texttt{Relax!} relaxes each edge in the host graph exactly once.
\end{prop}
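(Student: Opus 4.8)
The plan is to prove the statement by a two-level analysis of the nested loops inside \texttt{Relax}, showing that the outer loop \texttt{Relax!} roots each non-isolated node exactly once and that, while a node is rooted, the inner loop processes each of its outgoing edges exactly once. Since every edge of the host graph other than the dashed counter edge has a unique source, which is necessarily non-isolated, combining the two counts gives that each such edge is relaxed exactly once.

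First I would pin down the state of the host graph immediately after \texttt{decrement}: every non-isolated node is marked blue, every graph edge is unmarked, no node is rooted, and the only dashed edge is the one incident with the green counter node. For the first relaxation round this is immediate from Proposition~\ref{prop:counter}, since \texttt{decrement} only changes the counter's label. For every later round I would record the postcondition of \texttt{Clean!}: \texttt{Clean} roots a grey node, unmarks all its outgoing (blue) edges with \texttt{unmark\_edge!}, and unroots it while leaving it blue, so \texttt{Clean!} restores exactly this state (isolated nodes, already unmarked by \texttt{no\_deg} in a previous round, are untouched and stay unmarked).

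Next I would analyse the two loops. In the outer loop, each iteration applies \texttt{root1} to a blue node $v$; thereafter $v$ is either unmarked by \texttt{no\_deg} (when isolated, where the dangling condition lets \texttt{no\_deg} apply) or marked grey by \texttt{unroot1} (when non-isolated). Since no rule of \texttt{Relax} ever marks a node blue, once $v$ leaves the blue list it is never matched by \texttt{root1} again; thus each node is rooted at most once, and as \texttt{Relax!} halts precisely when \texttt{root1} fails, i.e. when no blue node remains, every node blue at the start, that is every non-isolated node, is rooted exactly once. (The green counter node is never blue and the dashed edge is never unmarked, so \texttt{root1} and \texttt{unmarked\_edge} leave the counter infrastructure alone.) While such a non-isolated $v$ is rooted, the inner loop \texttt{(unmarked\_edge; try \{unvisited, reduce\}; finish)!} runs: each pass marks one unmarked outgoing edge of $v$ red, attempts a relaxation, and marks that unique red edge blue via \texttt{finish} (which always succeeds, the required red edge having just been created), the pass failing exactly when $v$ has no unmarked outgoing edge left. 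Because an edge is only ever recoloured while its source is the root, every outgoing edge of $v$ is still unmarked when $v$ is rooted and stays blue thereafter, so the inner loop processes each outgoing edge of $v$ exactly once.

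The main obstacle I expect is the ``exactly once'' bookkeeping, which rests on two locality facts: an edge $e=(u,w)$ is recoloured only while $u$ is the root, and $u$ is rooted exactly once. Together these guarantee both that no edge is processed twice (once blue it stays blue for the remainder of \texttt{Relax!}) and that none is skipped (all outgoing edges of $u$ are still unmarked at the instant $u$ is rooted). Verifying the underlying monotonicity, namely that no rule in \texttt{Relax} reintroduces a blue node or unmarks a processed edge, is the part that demands a careful rule-by-rule inspection and underpins both the outer-loop and the inner-loop counts.
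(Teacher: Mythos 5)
Your proof is correct and follows essentially the same route as the paper's: the outer loop \texttt{Relax!} roots each blue (non-isolated) node exactly once, the inner loop turns each outgoing edge of the rooted node blue via \texttt{finish} exactly once, and mark monotonicity (no rule in \texttt{Relax} re-blues a node or unmarks a blue edge) gives the ``exactly once'' count. The only notable difference is that you explicitly verify the precondition for iterations after the first via the \texttt{Clean!} postcondition, a point the paper's proof leaves implicit.
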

\begin{proof}
    Initially, all non-isolated nodes are blue; recall that isolated nodes need not be considered in the relaxation phase as they do not admit of any edge incident to them.
    
    Once a blue node is rooted by \texttt{root1}, either \texttt{no\_deg} applies and \texttt{Relax!} moves on to its next iteration, or it does not apply and the looping sequence \texttt{(unmarked\_edge; try \{unvisited, reduce\}; finish)!} is executed instead. Given that all edges outgoing from non-counter nodes are unmarked initially, the rule \texttt{unmarked\_edge} applies for any edge outgoing from the root node following the application of \texttt{root1} on that node. After such a rule is applied, if the root node's distance written on its list label is infinity (i.e. the string \texttt{f}), then the edge cannot be used to reduce the distance inscribed on the target of the edge and none of the rules in the rule set \texttt{\{unvisited, reduce\}} applies, and the program subsequently applies \texttt{finish} to mark the edge blue, to denote that it has been processed in the current iteration of \texttt{Relax!}. Otherwise, the last variable on the list label of the root is an integer; in that case, either the edge is used to reduce the distance of the target node via the application of either \texttt{unvisited} in case the distance of the target node is still infinity, or the application of \texttt{reduce} should the weight added to the distance of the root be strictly smaller than the distance inscribed on the target of the edge, or none of those applies (the edge cannot be used to improve the distance on the target node) and the program applies \texttt{finish} to mark the edge blue to denote it has been processed in the iteration of \texttt{Relax!}.

    The loop \texttt{(unmarked\_edge; try \{unvisited, reduce\}; finish)!} terminates if and only if \texttt{unmarked\_edge} is inapplicable (observe that \texttt{finish} will always apply following an application of \texttt{unmarked\_edge}). Therefore, the loop will terminate when all outgoing edges are processed and turned blue by \texttt{finish}. Since, by Proposition~\ref{prop:relax-root1}, the loop \texttt{Relax!} terminates once all blue nodes are processed by \texttt{root1}, all unmarked edges are incident to some blue node in the host graph and \texttt{(unmarked\_edge; try \{unvisited, reduce\}; finish)!} terminates once all unmarked outgoing edges are processed by \texttt{unmarked\_edge} and relaxed, then the looping procedure \texttt{Relax!} relaxes all edges in the host graph.

    Finally, observe that no rule in the procedure \texttt{Relax} changes the mark of a blue edge; therefore, every edge is relaxed exactly once.
\end{proof}
The next proposition establishes a graph-theoretical property that is needed in the subsequent proof.

\begin{prop}[Path Size]
\label{lem:size-path}
    Let $G$ be an input graph where $s$ the root node in $G$, no negative-weight cycle is reachable from $s$, and $t$ any node reachable from $s$ in $G$. Let $P$ be the shortest path of edges from $s$ to $t$ such that the sum of the weights of the edges in the path is minimal among all paths from $s$ to $t$. Then, the number of edges in $P$ is at most $n-1$, where $n$ is the number of nodes in $G$.
\end{prop}
\begin{proof}
    For the sake of contradiction, assume that $P$ contains more than $n-1$ edges. Since $G$ contains $n$ nodes, this implies that $P$ must revisit at least one node, forming a cycle in the path. Let this cycle be denoted as $C$.

    If the total weight of the edges in $C$ is positive, removing the cycle would result in a path with a smaller total weight, contradicting the assumption that the sum of the weights of the edges in $P$ is minimal among all paths from $s$ to $t$. If the total weight of the edges in $C$ is zero, removing the cycle does not change the total weight of the path. In this case, we can remove $C$ and obtain a shorter path (in terms of the number of edges) with the same weight, contradicting the minimality of $P$ as a shortest path.

    In all cases, we arrive at a contradiction. Therefore, $P$ cannot contain more than $n-1$ edges. Since the number of edges in any simple path (i.e. a path that does not contain cycles) in $G$ is at most $n-1$, the shortest path $P$ from $s$ to $t$ must also satisfy this condition.
\end{proof}
For the remainder of the proof, define $l(v)$ as the value appended to the list label of node $v$ by \texttt{set\_counter} or \texttt{count} in the program \texttt{bellman-ford}.

\begin{prop}[Computation of Shortest Distances]
\label{lem:shortest-dist}
     After $k$ iterations of the loop \texttt{(decrement; Relax!; Clean!)!}, for every non-counter node $v$ in the host graph reachable from the root node $r$ in the input graph by a path of at most $k$ edges, $l(v)$ is set to the shortest weighted distance from $r$ to $v$ by a path of at most $k$ edges.
\end{prop}
\begin{proof}
    Let us proceed by induction on the number of iterations of the looping sequence \texttt{(decrement; Relax!; Clean!)!}.

    When $k=0$ (base case), the program set $l(r)$ to $0$, and for all other nodes $v \neq r$, $l(v)$ was assigned the value $\texttt{f}$, representing infinity. At this stage, no edges were considered by the procedure \texttt{Relax}, so the shortest path to $r$ from $r$, trivially of length $0$, is correctly set to $0$, and the shortest path to all other nodes remains undefined (i.e. infinity, the string \texttt{f}). Thus, the lemma holds for $k=0$.

    Now, assume the property holds for some $k$; we show it still holds for $k+1$. Consider the $(k+1)\textsuperscript{th}$ iteration of \texttt{(decrement; Relax!; Clean!)!}. During this iteration, the procedure \texttt{Relax!} processes all edges in the host graph (Proposition \ref{prop:relax}). Indeed, for each edge $(u, v)$ with weight $w$, the program checks whether the distance to $v$ can be improved by relaxing the edge $(u, v)$ from a node $u$ whose current shortest distance $l(u)$ has already been computed in prior iterations (by the induction hypothesis).
    The relaxation phase \texttt{Relax!} updates\footnote{$\min\big(l(v), l(u) + w\big)=l(u)+w$ if $l(v)$ is not an integer.} $l(v)$ as follows:
    $$
    l(v) \gets \min\big(l(v), l(u) + w\big).
    $$
    If there exists a path from $r$ to $v$ consisting of at most $k+1$ edges, it must either:
    \begin{enumerate}
        \item pass through a node $u$ such that the shortest path to $u$ uses at most $k$ edges, followed by the edge $(u, v)$; or
        \item be a path already computed during the first $k$ iterations, using at most $k$ edges.
    \end{enumerate}
    Therefore, during the $(k+1)\textsuperscript{th}$ iteration, the looping procedure \texttt{Relax!} correctly updates $l(v)$ to the shortest distance from $r$ to $v$ using at most $k+1$ edges.
\end{proof}
\begin{prop}
    \label{prop:flag-bf}
    Upon the execution of \texttt{bellman-ford} on an input graph $G$, the rule \texttt{set\_flag} is applied if and only if $G$ contains a negative-weight cycle reachable from the root node.
\end{prop}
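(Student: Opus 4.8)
The plan is to reduce the claim to the standard correctness argument for Bellman--Ford's negative-cycle detection phase, feeding it with the invariant already established in Lemma~\ref{lem:shortest-dist}. Write $r$ for the root node of the input graph $G$ (the node to which \texttt{set\_counter} attaches the dashed counter edge), and let $\delta_{\le k}(v)$ denote the minimum total weight of a directed path from $r$ to $v$ using at most $k$ edges. By Proposition~\ref{prop:n-1-times} the loop \texttt{(decrement; Relax!; Clean!)!} runs exactly $n-1$ times, and since the relaxation passes change no graph node's list label except via \texttt{unvisited} and \texttt{reduce} (the remaining rules only adjust marks, rootedness or the counter), Lemma~\ref{lem:shortest-dist} gives that, on entry to \texttt{Final}, every node $v$ reachable from $r$ satisfies $l(v)=\delta_{\le n-1}(v)$, an integer, while every unreachable node still carries the string \texttt{f}. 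I would first record that these labels are frozen throughout \texttt{Final}: the only label-changing rule \texttt{Final} actually applies is \texttt{set\_flag} (acting on the green counter node), because \texttt{reduce} occurs there only inside the test \texttt{if reduce then set\_flag} and is never used to rewrite an edge. Hence for every edge $(u,v)$ of weight $w$ processed by \texttt{unmarked\_edge} in \texttt{Final}, the rule \texttt{reduce} is applicable precisely when $l(u)$ and $l(v)$ are both integers and $l(u)+w<l(v)$, with $l(\cdot)$ the post-relaxation values.

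For the forward direction, suppose \texttt{set\_flag} is applied. Then \texttt{reduce} was applicable for some edge $(u,v)$, so $l(u)$ and $l(v)$ are integers with $l(u)+w<l(v)$; in particular $u$ is reachable (its label is finite), and hence so is $v$. Assume for contradiction that $G$ has no negative-weight cycle reachable from $r$. Then by Lemma~\ref{lem:size-path} every shortest path from $r$ uses at most $n-1$ edges, so $\delta_{\le n-1}$ coincides with the true shortest-distance function $\delta$ on all reachable nodes, and $\delta$ obeys the triangle inequality $\delta(v)\le\delta(u)+w$. This yields $l(v)=\delta(v)\le\delta(u)+w=l(u)+w<l(v)$, a contradiction; therefore a negative-weight cycle reachable from $r$ exists.

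For the converse, suppose $G$ contains a negative-weight cycle $v_0\to v_1\to\dots\to v_k=v_0$ reachable from $r$, with edge weights $w_1,\dots,w_k$ and $\sum_{i=1}^k w_i<0$. Each $v_i$ is reachable from $r$, so each $l(v_i)$ is an integer. Assume for contradiction that \texttt{set\_flag} is never applied during \texttt{Final}; then \texttt{reduce} is inapplicable for every edge, and in particular for each cycle edge $(v_{i-1},v_i)$ we obtain $l(v_{i-1})+w_i\ge l(v_i)$. Summing these inequalities around the cycle and cancelling the telescoping label terms (using $v_0=v_k$) gives $\sum_{i=1}^k w_i\ge 0$, contradicting $\sum_{i=1}^k w_i<0$. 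Hence \texttt{set\_flag} is applied.

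The main obstacle is not the cycle-summation argument, which is routine, but carefully justifying the interface between the program's dynamic state and the static labels used above: namely that after the relaxation phase every reachable node genuinely carries its $\delta_{\le n-1}$ value as an integer (so that \texttt{reduce}'s integer-label side conditions hold exactly on the reachable part of the graph), and that no graph node's list label is mutated during \texttt{Final}. Both facts must be extracted from Proposition~\ref{prop:n-1-times}, Lemma~\ref{lem:shortest-dist} and an inspection of which rules \texttt{Final} can apply; once they are in place, the two directions follow from the triangle inequality and the telescoping sum respectively.
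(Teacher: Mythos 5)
Your proof is correct, and while it rests on the same three pillars as the paper's argument (Proposition~\ref{prop:n-1-times}, Lemma~\ref{lem:shortest-dist} and Lemma~\ref{lem:size-path}), the core reasoning is genuinely different and, in fact, tighter. For the direction ``\texttt{set\_flag} applied $\Rightarrow$ negative cycle'', the paper argues that a successful relaxation after $n-1$ rounds forces a path of more than $n-1$ edges beating every path of at most $n-1$ edges, and then asserts this ``is only possible'' if a negative cycle exists; your contradiction via the triangle inequality $\delta(v)\le\delta(u)+w$ proves the same implication without that unproved assertion. More importantly, for the converse the paper only offers ``if no negative-weight cycle is reachable from $r$, then no edge relaxation can occur'', which is merely the contrapositive of the first direction, so the implication ``negative cycle $\Rightarrow$ \texttt{set\_flag} applied'' is never actually established there; your telescoping sum around the cycle is precisely the missing argument, and it is careful in a way that matters: it works directly with the integers $l(v_i)=\delta_{\le n-1}(v_i)$ rather than with true shortest distances, which are undefined for nodes on a negative cycle. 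Your observation that labels are frozen during \texttt{Final} --- because \texttt{reduce} appears only as the guard of an \texttt{if} and is therefore executed on a copy of the host graph --- is also a point the paper glosses over (its proof even says \texttt{reduce} ``would update'' $l(v)$), and it is needed to equate the tested labels with the post-relaxation values. The one step you should still make explicit is that, assuming \texttt{set\_flag} never fires, \texttt{Final} really does test \texttt{reduce} on every edge outgoing from a non-isolated node (by the same argument as Proposition~\ref{prop:relax} for \texttt{Relax!}); your converse needs all $k$ cycle inequalities, so ``\texttt{set\_flag} is never applied'' must be upgraded to ``\texttt{reduce} fails on every edge'' via that coverage claim.
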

\begin{proof}
    Let $r$ be the root node in $G$. From Proposition \ref{lem:size-path}, it is established that the shortest path of edges from $r$ to some arbitrary non-counter node $v$ in the host graph such that the sum of the weights of the edges in the path is minimal among all paths from $r$ to $v$ consists of at most $n-1$ edges (where $n$ is the number of nodes in $G$), provided no negative-weight cycle reachable from $r$ exists. Proposition \ref{lem:shortest-dist} states that $k$ iterations of the loop \texttt{(decrement; Relax!; Clean!)!}, called prior to \texttt{Final}, compute the smallest weighted distances from $r$ to any non-counter node $v$ in the host graph by a path of at most $k$ edges. Furthermore, Proposition \ref{prop:n-1-times} establishes that this loop terminates after $n-1$ iterations. Therefore, upon termination of \texttt{(decrement; Relax!; Clean!)!}, the program has computed the smallest distances from $r$ to all other non-counter nodes in the host graph by a path of at most $n-1$ edges.

    The procedure \texttt{Final} executes one iteration of \texttt{Relax}, during which only the rule \texttt{reduce} is considered for relaxation. If the rule \texttt{reduce} is applicable during this iteration, it implies that there exists some node $u$ with $l(u)$ set to the shortest distance from $r$ to $u$ (by a path of at most $n-1$ edges), and an edge $(u, v)$ with weight $w$ such that $l(u) + w < l(v)$. In this case, the rule \texttt{reduce} would update $l(v)$ to $l(u) + w$.

    For such a relaxation to occur after $n-1$ iterations, there must exist a path from $r$ to $v$ consisting of more than $n-1$ edges that has a smaller total weight than any path of at most $n-1$ edges. This is only possible if there exists a negative-weight cycle reachable from $r$. Indeed, by repeatedly traversing the negative-weight cycle, the total path weight can decrease indefinitely, which would for a smaller value of $l(v)$ to be computed.

    When the rule \texttt{reduce} applies, the rule \texttt{set\_flag} subsequently sets the counter node's label to $-1$. This serves as an indicator that a negative-weight cycle was detected. Finally, conversely, if no negative-weight cycle is reachable from $r$, then no edge relaxation can occur during the \texttt{Final} procedure, and the rule \texttt{set\_flag} is never applied.
\end{proof}
\begin{thm}[Correctness of \texttt{bellman-ford}]
    \label{thm:correctness-bf}
    The program \texttt{bellman-ford} is totally correct with respect to the specification at the beginning of Subsection \ref{ss:program-bf}.
\end{thm}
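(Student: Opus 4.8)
The plan is to establish the two halves of total correctness separately: first termination on every input graph, and then partial correctness, i.e.\ that the returned graph (or the failure) matches the specification of Subsection~\ref{ss:program-bf}. Almost all of the computational content has already been isolated in the preceding lemmas and propositions, so the proof of the theorem itself is mainly a matter of assembling them and tracking the node and edge marks across the phases \texttt{set\_counter; count!}, the loop \texttt{(decrement; Relax!; Clean!)!}, and \texttt{Final}.

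For termination I would argue phase by phase. The rule \texttt{set\_counter} fires exactly once, since it consumes the unique root and leaves no root behind, and \texttt{count!} terminates because each application of \texttt{count} strictly decreases the number of grey nodes; Proposition~\ref{prop:counter} then fixes the configuration entering the main loop (all original nodes blue, a green counter labelled $n-1$). Termination of \texttt{(decrement; Relax!; Clean!)!} after exactly $n-1$ iterations is Proposition~\ref{prop:n-1-times}, which itself rests on Lemmata~\ref{lem:term-relax} and~\ref{lem:term-clean}. The only loop not already covered is the outer loop of \texttt{Final}: here I would use that \texttt{root1} roots a blue node, \texttt{unroot1} unmarks it to grey, and no rule in the body produces a blue node, so the number of blue nodes is a strictly decreasing termination measure; the inner loop terminates because \texttt{unmarked\_edge} strictly decreases the number of unmarked edges. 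The residual commands \texttt{if flag then fail}, \texttt{delete\_counter} and \texttt{no\_deg\_inv!} plainly terminate.

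For partial correctness I would split on whether $G$ contains a negative-weight cycle reachable from the root $r$. If it does not, Lemma~\ref{lem:size-path} shows every shortest path uses at most $n-1$ edges, so combining this with Lemma~\ref{lem:shortest-dist} for $k=n-1$ gives that $l(v)$ equals the true shortest distance from $r$ to every reachable $v$; for nodes not reachable from $r$ I would note separately that an integer label can only be produced by \texttt{unvisited} or \texttt{reduce} from an already integer-labelled source, so by induction such nodes retain the string \texttt{f}. By Proposition~\ref{prop:flag-bf} the rule \texttt{set\_flag} is then never applied, the counter keeps its label, \texttt{flag} fails, and the program terminates normally; \texttt{delete\_counter} removes the counter and \texttt{no\_deg\_inv!} marks the isolated nodes grey, so that the output has grey nodes, integer distances on reachable nodes, \texttt{f} on the rest, and all edges blue, matching the three clauses of the specification. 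In the remaining case, Proposition~\ref{prop:flag-bf} yields that \texttt{set\_flag} fires, the counter becomes $-1$, \texttt{flag} applies and \texttt{fail} is invoked, which is exactly the specified failure behaviour.

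The main obstacle is the reachable, no-negative-cycle case, and specifically the step from ``shortest distance over paths of at most $n-1$ edges'' (what Lemma~\ref{lem:shortest-dist} delivers after $n-1$ iterations) to ``shortest distance over all paths''. This equivalence is precisely what Lemma~\ref{lem:size-path} guarantees, so the care here lies in checking that its hypothesis --- no negative-weight cycle reachable from $r$ --- is exactly the case assumption, and in confirming that unreachable nodes (including those lying on negative cycles that are not reachable from $r$) never acquire an integer label, and hence neither corrupt the computed distances nor trigger \texttt{set\_flag} during \texttt{Final}.
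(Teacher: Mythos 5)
Your proposal is correct and follows essentially the same route as the paper's proof: termination via Lemmata~\ref{lem:term-relax} and~\ref{lem:term-clean} and Proposition~\ref{prop:n-1-times} (with an analogous argument for \texttt{Final}), then a case split on whether a negative-weight cycle is reachable from the root, resolved by combining Lemma~\ref{lem:size-path}, Lemma~\ref{lem:shortest-dist}, Proposition~\ref{prop:n-1-times} and Proposition~\ref{prop:flag-bf}. If anything, you are slightly more careful than the paper, which merely asserts without argument that unreachable nodes retain the label \texttt{f}, whereas you sketch the inductive justification.
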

\begin{proof}
    Termination easily follows from Lemmata \ref{lem:term-relax} and \ref{lem:term-clean}, and Proposition \ref{prop:n-1-times}. An argument analogous to the proof of Lemma \ref{lem:term-relax} can be made with respect to the termination of \texttt{Final}. 
    
    Let $G$ be the input graph. Let us split the remainder of this proof into two cases.
    
    \textit{Case 1:}
        $G$ contains a negative-weight cycle reachable from the source. Upon the termination of \texttt{count!}, the counter node is labelled $n-1$, where $n$ is the number of nodes in $G$. By Proposition \ref{prop:n-1-times}, the rule \texttt{decrement} is applied exactly $n-1$. Therefore, upon termination of the loop \texttt{(decrement; Relax!; Clean!)!}, the counter node is labelled $0$. As per Proposition \ref{prop:flag-bf}, \texttt{set\_flag} will apply during the execution of \texttt{Final}, and given that \texttt{set\_flag} is the only rule tempering with the list label of the counter node, an application of \texttt{flag} would entail that \texttt{set\_flag} was applied, and a negative-weight cycle reachable from the root was found. The program would subsequently invoke the \texttt{fail} command and fail, satisfying the specifications of the program. 

    \textit{Case 2:}
        $G$ does not contain a negative-weight cycle reachable from the source. Analogously to the previous case, the counter node is labelled $n-1$ upon termination of the loop \texttt{(decrement; Relax!; Clean!)!}. Given that no negative-weight cycle reachable from the source node exists in $G$, the rule \texttt{set\_flag} is never applied and the \texttt{fail} command is never invoked. The rule \texttt{delete\_counter} returns the root to the source node and deletes the counter (observe that no rule beside \texttt{set\_counter} and \texttt{delete\_counter} involves a dashed edge). Finally, as all shortest distances from the source can be achieved by a path of at most $n-1$ edges (Proposition \ref{lem:size-path}), the loop \texttt{(decrement; Relax!; Clean!)!} terminates after $n-1$ iterations (Proposition \ref{prop:n-1-times}), and $k$ iterations of the loop label the shortest weighted distances from the root to all other nodes by a path of at most $k$ edges (Lemma \ref{lem:shortest-dist}), all nodes reachable from the source have an integer appended to their list label denoting the smallest weighted distance from the root. Nodes that are not reachable from the source retained the string \texttt{f}.
\end{proof}

\subsection{Proof of Complexity}

Let us now analyse the time complexity of the program \texttt{bellman-ford}.

\begin{prop}
    \label{prop:constant-bf}
    Every rule in the program \texttt{bellman-ford} matches or fails to match in constant time with respect to the complexity assumptions of the updated compiler (Figure \ref{fig:complexity-assumptions}).
\end{prop}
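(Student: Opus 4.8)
The plan is to verify the claim rule by rule, in each case exhibiting a constant bound on the number of candidate nodes and edges the matching algorithm inspects before it either produces a match or reports failure. Throughout, I would appeal to the $\mathrm{O}(1)$ elementary operations listed in Figure~\ref{fig:complexity-assumptions}---in particular \texttt{firstHostNode(m)}, \texttt{firstHostRootNode}, \texttt{firstOutEdge(m)}, \texttt{getMark}, \texttt{getTarget}, \texttt{getInDegree} and \texttt{getOutDegree}. The argument rests on three structural invariants that I would establish (or cite) first: (i) at most one root node exists in the host graph at any time, by the same reasoning as Proposition~\ref{prop:unique-root}, since \texttt{set\_counter}, \texttt{root1}, \texttt{root2}, \texttt{unroot1}, \texttt{unroot2}, \texttt{no\_deg} and \texttt{delete\_counter} are the only rules touching rootedness and they alternately create and destroy a single root; (ii) at most one red edge exists at any time, by an argument analogous to Propositions~\ref{prop:red-edges-con} and~\ref{prop:red-edges-dag}, because \texttt{unmarked\_edge} is the only rule producing a red edge and it is always followed by \texttt{finish}, which recolours it blue; and (iii) the counter is the unique green node throughout execution once created by \texttt{set\_counter}, so \texttt{firstHostNode(green)} retrieves it in constant time.

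With these invariants in hand, the single-element rules are immediate. First I would treat \texttt{root1}, \texttt{root2} and \texttt{no\_deg\_inv}, each of which matches an arbitrary node of a fixed mark: \texttt{firstHostNode(blue)}, \texttt{firstHostNode(grey)} and \texttt{firstHostNode(unmarked)} respectively return a matching node, or confirm its absence, in $\mathrm{O}(1)$. The rules \texttt{unroot1}, \texttt{unroot2} and \texttt{set\_counter} anchor on the unique root, located by \texttt{firstHostRootNode}, after which a single \texttt{getMark} check decides applicability. The counter-manipulating rules \texttt{decrement}, \texttt{set\_flag} and \texttt{flag} anchor on the unique green node via \texttt{firstHostNode(green)}, with the label side-conditions (such as \texttt{i > 0} and the test for label \texttt{-1}) evaluated in constant time. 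For \texttt{no\_deg} the node is deleted, so the dangling condition must hold; after locating the root I would check \texttt{getInDegree} and \texttt{getOutDegree}, both $\mathrm{O}(1)$, which simultaneously decides the match and, if the degree is positive, reports failure in constant time.

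The remaining rules match more than one element, but each is anchored so that only constantly many candidates arise. For \texttt{unmarked\_edge} I would locate the root in $\mathrm{O}(1)$ and then call \texttt{firstOutEdge(unmarked)}; because incident edges are stored in separate lists by mark and orientation, the first unmarked outgoing edge is found (or ruled out) in constant time, and its wildcard (\texttt{any}-marked) target matches automatically via \texttt{getTarget}. The rules \texttt{unvisited}, \texttt{reduce} and \texttt{finish} all operate on the single red edge outgoing from the root, retrieved by \texttt{firstOutEdge(red)} in $\mathrm{O}(1)$ by invariant~(ii), after which the integer label conditions are checked in constant time; \texttt{unmark\_edge} is identical with \texttt{firstOutEdge(blue)} in place of the red list. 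The rule \texttt{delete\_counter} anchors on the unique green counter, follows its single dashed edge via \texttt{firstOutEdge(dashed)} to the target node, and verifies that node's mark---all in $\mathrm{O}(1)$. Finally, \texttt{count} has a disconnected left-hand side consisting of a grey node and the green counter; since the two components are matched independently, \texttt{firstHostNode(grey)} and \texttt{firstHostNode(green)} each succeed in constant time.

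The main obstacle, I expect, is not any individual rule but the discharge of the supporting invariants (i)--(iii), since they are exactly what licenses the appeals to \texttt{firstHostRootNode} and \texttt{firstOutEdge(red)} returning \emph{the} unique relevant element rather than one of unboundedly many. These would be proved by straightforward induction on the execution, inspecting which rules create and destroy roots, red edges and green nodes. A secondary point requiring care is the phrase \emph{or fails to match}: for every rule I must confirm that unsuccessful matching also terminates in constant time, which follows because in each case only a bounded number of mark-and-orientation-specific lists are consulted, and each such consultation either yields a candidate or returns empty in $\mathrm{O}(1)$.
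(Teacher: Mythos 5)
Your proposal is correct and follows essentially the same route as the paper's proof: a rule-by-rule case analysis that anchors each match on a uniquely occurring item (the single root, the single red edge, the single green counter node, the single dashed edge) and appeals to the constant-time operations of Figure~\ref{fig:complexity-assumptions}. The differences are presentational only --- you state the uniqueness invariants (i)--(iii) explicitly and would prove them by induction, and you spell out the dangling-condition check for \texttt{no\_deg} and the constant-time failure cases, whereas the paper asserts these facts inline; if anything, your version is slightly more rigorous.
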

\begin{proof}
    Since the input graph contains exactly one root node, a call to the rule \texttt{set\_counter} requires only a constant-time lookup in the set of root nodes.

    The rule \texttt{count} matches node \texttt{1} in constant time since the list label of the node can be arbitrary, and node \texttt{2} in constant time as well since there exists at most one green node in the host graph. Node \texttt{1} of \texttt{decrement}, \texttt{set\_flag} and \texttt{flag} is also matched in constant time due to the bound on the number of green nodes.

    The rule \texttt{root1} matches any blue node in the host graph of arbitrary list label, and thus in constant time. An analogous argument can be made for \texttt{root2}. Since there is exactly one root node after an application of \texttt{root1}, \texttt{no\_deg} and \texttt{unroot1} match in constant time. An analogous argument can be made for \texttt{unroot2} with respect to \texttt{root2}.

    The rule \texttt{unmarked\_edge} matches node \texttt{1} in constant time due to the bound on the number of roots in the host graph, and matches any unmarked edge of arbitrary list label outgoing from the root in constant time. The rules \texttt{finish}, \texttt{unvisited} and \texttt{reduce} match node \texttt{1} and the red edge in constant time, since there is at most one red edge throughout the execution of the program \texttt{bellman-ford}. The rule \texttt{unmark\_edge} matches node \texttt{1} and the blue edge in constant time, given that the latter can have an arbitrary list label.

    Finally, \texttt{delete\_counter} matches node \texttt{1} and the dashed edge in constant time, as there is at most one dashed edge throughout the execution of the program. And the rule \texttt{no\_deg\_inv} matches in constant time given that the list label of the node can be arbitrary.
\end{proof}
The following proposition is a simple graph-theoretic property that we subsequently need.

\begin{prop}
    \label{prop:connected-bf-2m}
    For every graph $G$, the number of non-isolated nodes is at most $2|E_G|$.
\end{prop}

\begin{proof}
    Given a node $v$, we write $\deg(v)$ for its degree.\footnote{The degree of a node is the sum of its indegree and outdegree.} 
    It is a basic fact that every graph $G$ satisfies $\sum_{v \in V_G} \deg(v) = 2|E_G|$ (see, for example, \cite{Bang_Jensen-Gutin09a}). Clearly, if $v$ is not isolated, we have $\deg(v) \geq 1$.  Thus there can be at most $2|E_G|$ non-isolated nodes.
\end{proof}

\begin{thm}[Complexity of \texttt{bellman-ford}]
    On any class of input graphs, the program \texttt{bellman-ford} terminates in time $\mathrm{O}(nm)$, where $n$ and $m$ are the numbers of nodes and edges in the input graph, respectively.
    \label{the:complexity-bf}
\end{thm}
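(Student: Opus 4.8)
The plan is to reduce the running-time analysis to a counting argument over rule invocations. By Proposition \ref{prop:constant-bf}, every rule of \texttt{bellman-ford} matches, or reports failure, in constant time; since the control constructs of \gp{} (sequencing, \texttt{try}--\texttt{else}, \texttt{if}--\texttt{then}, and the loop operator \texttt{!}) add only constant overhead per step, the total running time is proportional to the total number of rule \emph{calls}---successful applications together with failed matching attempts---performed during execution. It therefore suffices to bound this number by $\mathrm{O}(nm)$.

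First I would account for the initialisation: \texttt{set\_counter} is called once and \texttt{count!} performs $n-1$ successful applications (one per grey node) followed by a single failure, so the setup costs $\mathrm{O}(n)$ calls. The core of the argument is the loop \texttt{(decrement; Relax!; Clean!)!}, which by Proposition \ref{prop:n-1-times} runs exactly $n-1$ times, \texttt{decrement} contributing one call per iteration plus a final failure. For a single iteration I would bound \texttt{Relax!} and \texttt{Clean!} separately. Within \texttt{Relax!}, the rules \texttt{root1}, \texttt{no\_deg} and \texttt{unroot1} are invoked once per node that is blue at the start of the iteration, while the inner loop relaxes each edge exactly once by Proposition \ref{prop:relax}; hence the edge-processing rules (\texttt{unmarked\_edge}, \texttt{unvisited}, \texttt{reduce}, \texttt{finish}) contribute $\mathrm{O}(m)$ calls and the node-processing rules contribute $\mathrm{O}(n')$ calls, where $n'$ is the number of non-isolated nodes. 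The loop \texttt{Clean!} (with \texttt{root2}, \texttt{unmark\_edge}, \texttt{unroot2}) is analysed identically. Invoking the bound $n' \le 2m$ of Proposition \ref{prop:connected-bf-2m}, each iteration costs $\mathrm{O}(m)$ calls, so the whole loop costs $\mathrm{O}(nm)$.

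Finally I would treat \texttt{Final}: its outer loop performs one relaxation-style pass, processing every blue node and every edge once, for $\mathrm{O}(m)$ calls; \texttt{flag} and \texttt{delete\_counter} are $\mathrm{O}(1)$; and \texttt{no\_deg\_inv!} re-marks the isolated nodes at cost $\mathrm{O}(n)$. Summing the three phases gives $\mathrm{O}(n) + \mathrm{O}(nm) + \mathrm{O}(n+m) = \mathrm{O}(nm)$, as claimed. The delicate point, which I expect to be the main obstacle, is the counting of \texttt{root1} and \texttt{root2} applications across all $n-1$ iterations: a naive bound would charge $\mathrm{O}(n)$ per iteration and yield $\mathrm{O}(n^2)$, which can exceed $\mathrm{O}(nm)$ when the graph has many isolated nodes. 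The resolution is that \texttt{no\_deg} permanently unmarks every isolated node during the very first execution of \texttt{Relax!} (a one-time $\mathrm{O}(n)$ cost), so that from the second iteration onward only the $n'$ non-isolated nodes are ever rooted; combined with $n' \le 2m$, this is exactly what converts the per-iteration node work into $\mathrm{O}(m)$.
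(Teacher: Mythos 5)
Your proposal is correct and takes essentially the same approach as the paper's proof: it reduces the analysis to counting rule calls via Proposition \ref{prop:constant-bf}, uses Propositions \ref{prop:n-1-times} and \ref{prop:relax} to bound the iterations and per-iteration edge work, and hinges on exactly the same key observation as the paper, namely that \texttt{no\_deg} permanently unmarks isolated nodes in the first execution of \texttt{Relax!} so that every later iteration touches only the $n'$ non-isolated nodes, with $n' \le 2m$ by Proposition \ref{prop:connected-bf-2m}. The only (cosmetic) difference is that you also account explicitly for \texttt{no\_deg\_inv!}, a step the paper's proof leaves implicit.
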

\begin{proof}
    We established in Proposition \ref{prop:constant-bf} that, for every rule, there is at most one matching attempt with respect to the complexity assumptions of the updated compiler (Figure \ref{fig:complexity-assumptions}). Let us now examine how many times each rule is called.

    Clearly, the rules \texttt{set\_counter}, \texttt{delete\_counter} and \texttt{flag} are called at most once. Following an application of \texttt{set\_counter}, there are $n-1$ grey nodes in the host graph, where $n$ is the number of nodes in the input graph. The loop terminates when there is no grey node left, thus it is called $n$ times. Proposition \ref{prop:n-1-times} states that the looping sequence \texttt{(decrement; Relax!; Clean!)} terminates after $n-1$ iterations, thus the rule \texttt{decrement} is called $n$ times. Let that sequence be referred to as \texttt{R}. Prior to the very first iteration of \texttt{R}, during the execution of the program, isolated nodes are not unmarked. Following the iteration, every isolated node is unmarked. Therefore, the rule \texttt{root1} is called $n+1$ times in the first iteration of \texttt{R} in the execution of the program, and $n'+1$ times on each iteration afterward, where $n'$ is the number of non-isolated non-counter nodes in the host graph. The rule \texttt{no\_deg} is called after each application of \texttt{root1}, thus $n$ times on the first iteration of \texttt{R}, and $n'$ times afterward. The loop \texttt{(unmarked\_edge; try {unvisited, reduce}; finish)!} is called as many times as there are unmarked edges in the host graph and non-isolated nodes, thus $m+n'$ times where $m$ is the number of edges in the input graph. The rule \texttt{unroot1} is called $n'$ times, that is, as many times as there are non-isolated nodes.
    
    In each execution of \texttt{Clean!}, the rule \texttt{root2} is called $n'+1$ times, and the rule \texttt{unroot2}, $n'$ times. The rule \texttt{unmark\_edge} is called $m+n'$ times ($m$ successful applications and $n'$ failed attempts to terminate the loop on each non-isolated node). Finally, the rule \texttt{root1} in \texttt{Final} is called at most $n'+1$ times, \texttt{unmarked\_edge} is called at most $m+n'$ times, \texttt{reduce} at most $m$ times and \texttt{finish}, at most $m$ times.

    By Proposition \ref{prop:connected-bf-2m}, $n'$ is at most $2m$. Therefore, on all iterations of \texttt{{(decrement; Relax!; Clean!)!}} but the first one, a number of calls at most linear to the number of edges are called. On the first iteration, a number of calls linear to the size of the graph are called.

    Therefore, the overall time complexity of the program \texttt{bellman-ford} is $\mathrm{O}(nm)$.
\end{proof}

\begin{center}\begin{center}
    \resizebox{0.82\linewidth}{!}{\centering
\begin{minipage}{3.93cm}
\centering
\begin{tikzpicture}[scale=0.7]
	\node (a) at (-1.500,1.333)  [draw,circle,thick,fill=gray!50, double, double distance=2pt] {\,};
	\node (b) at (0.000,1.333)   [draw,circle,thick,fill=gray!50] {\,};
	\node (c) at (1.500,1.333)   [draw,circle,thick,fill=gray!50] {\,};
	\node (d) at (-1.500,0.000)  [draw,circle,thick,fill=gray!50] {\,};
	\node (e) at (0.000,0.000)   [draw,circle,thick,fill=gray!50] {\,};
	\node (f) at (1.500,0.000)   [draw,circle,thick,fill=gray!50] {\,};
	\node (g) at (-1.500,-1.333) [draw,circle,thick,fill=gray!50] {\,};
	\node (h) at (0.000,-1.333)  [draw,circle,thick,fill=gray!50] {\,};
	\node (i) at (1.500,-1.333)  [draw,circle,thick,fill=gray!50] {\,};
	
	\draw (a) edge[->, thick] (b)
	      (a) edge[->, thick] (d)
	      (b) edge[->, thick] (c)
	      (b) edge[->, thick] (e)
	      (c) edge[->, thick] (f)
	      (d) edge[->, thick] (e)
	      (d) edge[->, thick] (g)
	      (e) edge[->, thick] (f)
	      (e) edge[->, thick] (h)
	      (f) edge[->, thick] (i)
	      (g) edge[->, thick] (h)
	      (h) edge[->, thick] (i);
\end{tikzpicture}
\captionof{figure}{Rooted square grid.}
\vspace{0.5cm}
\label{fig:graph-class-1b-r}
\end{minipage}
\begin{minipage}{3.93cm}
\centering
\begin{tikzpicture}[scale=0.7]
	\node (a) at (0.000,1.333)   [draw,circle,thick,fill=gray!50, double, double distance=2pt] {\,};
	\node (b) at (1.333,0.000)   [draw,circle,thick,fill=gray!50] {\,};
	\node (c) at (-1.333,0.000)  [draw,circle,thick,fill=gray!50] {\,};
	\node (d) at (2.000,-1.333)  [draw,circle,thick,fill=gray!50] {\,};
	\node (e) at (0.666,-1.333)  [draw,circle,thick,fill=gray!50] {\,};
	\node (f) at (-0.666,-1.333) [draw,circle,thick,fill=gray!50] {\,};
	\node (g) at (-2.000,-1.333) [draw,circle,thick,fill=gray!50] {\,};
	
	\draw (a) edge[->, thick] (b)
	      (a) edge[->, thick] (c)
	      (b) edge[->, thick] (d)
	      (b) edge[->, thick] (e)
	      (c) edge[->, thick] (f)
	      (c) edge[->, thick] (g);
\end{tikzpicture}
\captionof{figure}{Rooted binary tree.}
\vspace{0.5cm}
\label{fig:graph-class-1c-r}
\end{minipage}
\begin{minipage}{3.93cm}
\centering
\begin{tikzpicture}[scale=0.7]
	\node (a) at (0.000,0.000)   [draw,circle,thick,fill=gray!50, double, double distance=2pt] {\,};
	\node (b) at (0.000,1.333)   [draw,circle,thick,fill=gray!50] {\,};
	\node (c) at (0.943,0.943)   [draw,circle,thick,fill=gray!50] {\,};
	\node (d) at (1.333,0.000)   [draw,circle,thick,fill=gray!50] {\,};
	\node (e) at (0.943,-0.943)  [draw,circle,thick,fill=gray!50] {\,};
	\node (f) at (0.000,-1.333)  [draw,circle,thick,fill=gray!50] {\,};
	\node (g) at (-0.943,-0.943) [draw,circle,thick,fill=gray!50] {\,};
	\node (h) at (-1.333,0.000)  [draw,circle,thick,fill=gray!50] {\,};
	\node (i) at (-0.943,0.943)  [draw,circle,thick,fill=gray!50] {\,};
	
	\draw (a) edge[->, thick] (b)
	      (c) edge[->, thick] (a)
	      (a) edge[->, thick] (d)
	      (e) edge[->, thick] (a)
	      (a) edge[->, thick] (f)
	      (g) edge[->, thick] (a)
	      (a) edge[->, thick] (h)
	      (i) edge[->, thick] (a);
\end{tikzpicture}
\captionof{figure}{Rooted \\star graph.}
\vspace{0.5cm}
\label{fig:graph-class-2a-r}
\end{minipage}
\begin{minipage}{3.93cm}
\centering
\begin{tikzpicture}[scale=0.7]
	\node (a) at (0.0000,1.3333)   [draw,circle,thick,fill=gray!50, double, double distance=2pt] {\,};
	\node (b) at (1.1545,0.6666)   [draw,circle,thick,fill=gray!50] {\,};
	\node (c) at (1.1545,-0.6666)  [draw,circle,thick,fill=gray!50] {\,};
	\node (d) at (0.0000,-1.3334)  [draw,circle,thick,fill=gray!50] {\,};
	\node (e) at (-1.1545,-0.6666) [draw,circle,thick,fill=gray!50] {\,};
	\node (f) at (-1.1545,0.6666)  [draw,circle,thick,fill=gray!50] {\,};
	
	\draw (a) edge[->, thick] (b)
	      (b) edge[->, thick] (c)
	      (c) edge[->, thick] (d)
	      (d) edge[->, thick] (e)
	      (e) edge[->, thick] (f)
	      (f) edge[->, thick] (a);
\end{tikzpicture}
\captionof{figure}{Rooted \\cycle graph.}
\vspace{0.5cm}
\label{fig:graph-class-2b-r}
\end{minipage}}
    \resizebox{1\linewidth}{!}{\begin{minipage}{4.88cm}
\centering
\begin{tikzpicture}[scale=0.7]
	\node (a) at (-1.500,1.333)  [draw,circle,thick,fill=gray!50, double, double distance=2pt] {\,};
	\node (b) at (1.500,1.333)   [draw,circle,thick,fill=gray!50] {\,};
	\node (c) at (-1.500,-1.333) [draw,circle,thick,fill=gray!50] {\,};
	\node (d) at (1.500,-1.333)  [draw,circle,thick,fill=gray!50] {\,};
	
	\draw 
	      (a) edge[->, thick, bend right=10=10] (b)
	      (a) edge[->, thick, bend right=10] (c)
	      (a) edge[->, thick, bend right=10] (d)
	      (b) edge[->, thick, bend right=10] (a)
	      (b) edge[->, thick, bend right=10] (c)
	      (b) edge[->, thick, bend right=10] (d)
	      (c) edge[->, thick, bend right=10] (a)
	      (c) edge[->, thick, bend right=10] (b)
	      (c) edge[->, thick, bend right=10] (d)
            (d) edge[->, thick, bend right=10] (a)
	      (d) edge[->, thick, bend right=10] (b)
	      (d) edge[->, thick, bend right=10] (c);
\end{tikzpicture}
\captionof{figure}{Rooted complete graph.}
\label{fig:graph-class-3a-r}
\end{minipage}
\hspace{2cm}
\begin{minipage}{4.88cm}
\centering
\begin{tikzpicture}[scale=0.7]
	\node (a) at (-2.5,0)  [draw,circle,thick,fill=gray!50, double, double distance=2pt] {\,};
	\node (b) at (-0.8333,0)   [draw,circle,thick,fill=gray!50] {\,};
	\node (c) at (0.8333,0) [draw,circle,thick,fill=gray!50] {\,};
	\node (d) at (2.500,0)  [draw,circle,thick,fill=gray!50] {\,};
        \node (X) at (0,1.333){\,};
        \node (y) at (0,-1.333){\,};
	
	\draw (a) edge[->, thick] (b)
	      (b) edge[->, thick] (c)
	      (c) edge[->, thick] (d);
\end{tikzpicture}
\captionof{figure}{Rooted linked list.}
\label{fig:graph-class-3b-r}
\end{minipage}
\hspace{2cm}
\begin{minipage}{4.88cm}
    \centering
    \begin{tikzpicture}[scale=0.7]
        \node (b) at (-0.8333,0)   [draw,circle,thick,fill=gray!50, double, double distance=2pt] {\,};
        \node (c) at (0.8333,0) [draw,circle,thick,fill=gray!50] {\,};
        \node (d) at (2.500,0)  [draw,circle,thick,fill=gray!50] {\,};
            \node (X) at (0,1.333){\,};
            \node (y) at (0,-1.333){\,};

    \end{tikzpicture}
    \captionof{figure}{Rooted discrete graph.}
    \label{fig:graph-class-6b-r}
\end{minipage}}
    \end{center}
\end{center}
\vspace{0.5em}

We supplement the proof with empirical measurements on the graph classes of Figures \ref{fig:graph-class-1b-r}, \ref{fig:graph-class-1c-r}, \ref{fig:graph-class-2a-r}, \ref{fig:graph-class-2b-r}, \ref{fig:graph-class-3a-r}, \ref{fig:graph-class-3b-r} and \ref{fig:graph-class-6b-r}. In Figures \ref{fig:bench-bf-random} and \ref{fig:bench-bf}, all input graph edges are labelled with an integer chosen uniformly at random in the interval $[-100, 100]$; the allocation of weights was done such that cycle graphs do not contain negative-weight cycles. And finally, Figure \ref{fig:bench-bf-neg} shows the runtimes on cycle graphs whose weights alternate between $-2$ and $1$ such that the total weight is negative. Hence, the program fails on these inputs.

The low runtime of complete graphs, compared with the other graph classes, is due to the fact that the number of edges is quadratic in the number of nodes, and hence the runtime bound $nm$ grows at a slower rate (note that the scales of the plots are different). 

\begin{figure}[!ht]
    \centering
    \begin{tikzpicture}
    \begin{axis}[
      xlabel=size (number of nodes and edges),
      ylabel=runtime (ms), ylabel style={above=0.2mm},
      width=9.2cm,height=7.2cm,
      legend style={at={(1.65,1)}},
      ymajorgrids=true,
      grid style=dashed]
      \addplot[color=plot2, mark=square*] table [y=time, x=n]{Figures/Benchmarks/BF-random/bf-list.dat};
      \addlegendentry{List graphs}
      \addplot[color=plot1, mark=square*] table [y=time, x=n]{Figures/Benchmarks/BF-random/bf-cycle.dat};
      \addlegendentry{Cycle graphs}
      \addplot[color=plot5, mark=square*] table [y=time, x=n]{Figures/Benchmarks/BF-random/bf-tree.dat};
      \addlegendentry{Binary trees}
      \addplot[color=plot7, mark=square*] table [y=time, x=n]{Figures/Benchmarks/BF-random/bf-grid.dat};
      \addlegendentry{Square grids}
        \addplot[color=plot4, mark=square*] table [y=time, x=n]{Figures/Benchmarks/BF-random/bf-star.dat};
      \addlegendentry{{Star graphs}}
    \end{axis}  
    \end{tikzpicture}
    \caption{Measured performance of the program \texttt{bellman-ford} on rooted graph classes with random weights.}
    \label{fig:bench-bf-random}
\end{figure}
In Figure \ref{fig:bench-bf}, we include the runtimes for discrete graphs to stress that even for these simple graphs, the time complexity is not constant but linear. That is because the loop \texttt{count!} does require time linear in the number of nodes.

\begin{figure}[!ht]
    \centering
    \begin{tikzpicture}
    \begin{axis}[
      xlabel=size (number of nodes and edges),
      ylabel=runtime (ms), ylabel style={above=0.2mm},
      width=9.2cm,height=7.2cm,
      legend style={at={(1.65,1)}},
      ymajorgrids=true,
      grid style=dashed]
      \addplot[color=red, mark=square*] table [y=time, x=n]{Figures/Benchmarks/BF/bf-discrete.dat};
      \addlegendentry{Discrete graphs}
      \addplot[color=plot6, mark=square*] table [y=time, x=n]{Figures/Benchmarks/BF/bf-complete.dat};
      \addlegendentry{{Complete graphs}}
    \end{axis}  
    \end{tikzpicture}
    \caption{Measured performance of the program \texttt{bellman-ford} on rooted discrete graphs, and rooted complete graphs with random weights.}
    \label{fig:bench-bf}
\end{figure}
\begin{figure}[!ht]
    \centering
    \begin{tikzpicture}
    \begin{axis}[
      xlabel=size (number of nodes and edges),
      ylabel=runtime (ms), ylabel style={above=0.2mm},
      width=9.2cm,height=7.2cm,
      legend style={at={(1.65,1)}},
      ymajorgrids=true,
      grid style=dashed]
      \addplot[color=plot2, mark=square*] table [y=time, x=n]{Figures/Benchmarks/BF-alternating/bf-alter-cycle.dat};
    \end{axis}  
    \end{tikzpicture}
    \caption{Measured performance of the program \texttt{bellman-ford} on rooted cycle graphs of negative total weight.}
    \label{fig:bench-bf-neg}
\end{figure}

\section{Conclusion}
\label{s:conclusion}
We have demonstrated by case studies how to implement three fundamental graph algorithms in \gp{} such that their imperative time complexities are matched, even if input graphs have an unbounded node degree or are possibly disconnected. Addressing the issues of unbounded degree and disconnectedness has been an open problem since the publication of the first paper on rooted graph transformation \cite{Bak-Plump12a}. Up to now, only certain graph reduction programs that destroy their input graphs could be constructed to run in linear time on graph classes that have an unbounded node degree or contain disconnected graphs \cite{campbell2022fast}. In Subsections~\ref{ss:first-enhancement} and \ref{ss:second-enhancement}, we show how the graph data structure generated by the new compiler enables linear runtimes on arbitrary GP\,2 host graphs.

Our case studies include an implementation of the Bellman-Ford single-source shortest-path algorithm that matches the $\mathrm{O}(nm)$ time complexity of imperative implementations. This is, to the best of our knowledge, the first time that a rule-based shortest-path algorithm achieves the complexity of a corresponding algorithm in an imperative programming language.

Our approach involves both enhancing the graph data structure generated by the \gp{} compiler and developing a programming technique that leverages the new graph representation. Previously, the graph data structure in the C program generated by the compiler stored in each host-graph node two linked lists of incoming and outgoing edges. In contrast, the new two-dimensional array of linked lists of edges allows to find an edge with a given mark and orientation in constant time. Also, nodes are now stored in multiple linked lists, one for each mark. Thus, a node of a given mark can be found in constant time.

Comparing our proofs of correctness and complexity in the case studies with corresponding proofs for imperative programs in standard textbooks such as \cite{Cormen-Leiserson-Rivest-Stein22a}, the latter may appear more concise. However, one should note that those proofs are at the pseudo-code level, whereas ours address all details at the level of concrete syntax, including the specifics of the compiler implementation.

We speculate that many more graph algorithms can be implemented as \gp{} programs running under time bounds achievable with conventional programming languages. To make progress in developing such algorithms, we intend to implement a \gp{} library of advanced data structures, such as priority queues, Fibonacci heaps, or AVL trees, supporting programmers in constructing \gp{} versions of efficient imperative graph algorithms.

\appendix

\section{Abstract Syntax of Labels and Conditions}
\label{s:appendix}
This appendix gives grammars in Extended Backus-Naur Form defining the abstract syntax of labels and application conditions. There are some context constraints for these definitions which can be found in \cite{Bak15a}.

The class of \emph{host graphs} consists of all totally labelled graphs over the set HostLabel of host graph labels as defined by the grammar in Figure \ref{fig:host-label-grammar}.

\begin{figure}[h!]
\centering
\renewcommand{\arraystretch}{1.2}
\begin{tabular}{lcl}
HostLabel & ::= & HostList [HostMark] \\
HostList & ::= & \ttt{empty} $\mid$ HostAtom \{`:' HostAtom\} \\
HostAtom & ::= & HostInteger $\mid$ HostString \\ 
HostInteger & ::= & [`-'] Digit \{Digit\} \\ 
HostString & ::= & `{``}\,'\{HostChar\}`\,{''}' \\ 
HostChar & ::= & `{``}\,'Character`\,{''}' \\ 
HostMark & ::= & \ttt{red} $\mid$ \ttt{green} $\mid$ \ttt{blue} $\mid$ \ttt{grey} $\mid$ \ttt{dashed} 
\end{tabular}
\caption{Abstract syntax of host graph labels. \label{fig:host-label-grammar}
}
\end{figure}

Figure \ref{fig:rule-label-grammar} defines the labels that may occur in the graphs contained in rules. (In the case studies of this paper, however, we only need labels that are variables of type \ttt{list} or \ttt{int}.) We obtain the class of \emph{rule graphs} by replacing the label set $\L$ of Definition \ref{def:graph} with the set Label of rule graph labels. 

\begin{figure}[h!] 
\centering
\renewcommand{\arraystretch}{1.2}
\begin{tabular}{lcl}
Label & ::= & List [Mark] \\
List & ::= & LVar $\mid$ \ttt{empty} $\mid$ Atom $\mid$ List `:' List \\
Atom & ::= & AVar $\mid$ Integer $\mid$ String \\
Integer & ::= & IVar $\mid$ [`-'] Digit \{Digit\} $\mid$ `('Integer`)' $\mid$ \\
 & & Integer (`+' $\mid$ `-' $\mid$ `$\ast$' $\mid$ `/') Integer $\mid$ \\ 
 & & (\ttt{indeg} $\mid$ \ttt{outdeg}) `('Node`)' $\mid$ \\
 & & \ttt{length} `('(LVar $\mid$ AVar $\mid$ SVar)`)' \\
String & ::= & SVar $\mid$ Char $\mid$ `{``}\,'\{Character\}`\,{''}' $\mid$ \\
 & & String `\texttt{.}' String \\
Char & ::= & CVar $\mid$ `{``}\,'Character`\,{''}' \\
Mark & ::= & \ttt{red} $\mid$ \ttt{green} $\mid$ \ttt{blue} $\mid$ \ttt{grey} $\mid$ \ttt{dashed} $\mid$ \ttt{any} 
\end{tabular}
\caption{Abstract syntax of rule graph labels. \label{fig:rule-label-grammar}
}
\end{figure}

The concatenation of two lists $x$ and $y$ is written $x{:}y$ (not to be confused with Haskell's colon operator which adds an element to the beginning of a list). The binary arithmetic operators `\texttt{+}', `\texttt{-}', `$\ast$' and `\texttt{/}' expect integer expressions as arguments while `\texttt{.}' is string concatenation. The categories LVar, AVar, IVar, SVar and CVar consist of variables of type List, Atom, Integer, String and Char, respectively. Variables can be used as expressions of supertypes. For example, integer variables can be used as lists of length one. The category Node consists of identifiers for nodes in the left-hand graphs of rules. The operators \texttt{indeg} and \texttt{outdeg} denote the number of ingoing respectively outgoing edges of a node. The \texttt{length} operator returns the length of a list or string represented by a variable. The category Character consists of the printable ASCII characters except ‘\,{"}’. Character strings are enclosed in double quotes. 

The grammar in Figure \ref{fig:condition-grammar} defines the abstract syntax of rule application conditions. Conditions are Boolean combinations of subtype assertions, applications of the \ttt{edge} predicate to left-hand node identifiers, or relational comparisons of expressions (where \ttt{=} and \ttt{!=} can be used for arbitrary expressions). A condition of the form $\mtt{edge}(v_1,v_2,l)$ requires the existence of an edge from node $v_1$ to node $v_2$ with label $l$.
   
\begin{figure}[h!]
\centering
\renewcommand{\arraystretch}{1.2}
\begin{tabular}{lcl}
Condition & ::= & (\ttt{int} $\mid$ \ttt{char} $\mid$ \ttt{string} $\mid$ \ttt{atom}) `('(LVar $\mid$ AVar $\mid$ SVar)`)' \\
&& $\mid$ List (`\ttt{=}' $\mid$ `\ttt{!=}') List \\
&& $\mid$ Integer (`\texttt{>}' $\mid$ `\texttt{>=}' $\mid$ `\texttt{<}' $\mid$ `\texttt{<=}') Integer \\
&& $\mid$ \ttt{edge} `(' Node `,' Node  [`,' Label] `)' \\
&& $\mid$ \ttt{not} Condition \\
&& $\mid$ Condition (\texttt{and} $\mid$ \texttt{or}) Condition \\
&& $\mid$  `(' Condition `)'
\end{tabular}
\caption{Abstract syntax of application conditions. \label{fig:condition-grammar}}
\end{figure}

\newpage 

\section*{Acknowledgements}
  \noindent The authors thank Brian Courtehoute and Robert S\"oldner for discussions on the topics of this paper, and the anonymous reviewers for their helpful comments.

\bibliographystyle{alphaurl}
\bibliography{main}

\end{document}